\documentclass[12pt]{article}
\usepackage{amsmath}
\usepackage{amssymb}
\usepackage{amsbsy}
\usepackage{setspace} 
\usepackage{xr-hyper}
\usepackage{hyperref}
\makeatletter\renewcommand{\hyper@linkfile}[3]{#1}\makeatother 
\usepackage{pdflscape}
\usepackage{graphicx}
\usepackage{booktabs}
\usepackage{array}
\usepackage{multirow}
\usepackage{natbib}
\usepackage[toc,page]{appendix}
\usepackage{rotating}
\usepackage{enumerate}
\usepackage{enumitem}
\usepackage{float}
\usepackage{commath}
\usepackage{caption}
\usepackage{subcaption}
\usepackage{mathtools}
\usepackage[dvipsnames]{xcolor}
\hypersetup{colorlinks = true, citecolor =BrickRed,
  linkcolor= BrickRed, urlcolor = BrickRed}
\usepackage[sc]{mathpazo}
\usepackage[T1]{fontenc}
\definecolor{drkblue}{HTML}{000080}
\newtheorem{proposition}{Proposition}
\newtheorem{definition}{Definition}

\newtheorem{lemma}{Lemma}
\newenvironment{proof}[1][Proof]{\begin{trivlist}
\item[\hskip \labelsep {\bfseries #1}]}{\end{trivlist}}

\newcommand{\uv}{\underline v}
\newcommand{\ux}{\underline x}

\newcommand{\phihat}{\hat{\phi}}

\newcommand{\R}{\mathbb R} 
\newcommand{\E}{\mathbb E} 
\newcommand{\ceq}{\mathrel{\mathop:}=} 
\newcommand{\eqc}{=\mathrel{\mathop:}} 

\providecommand{\possessivecite}[1]{\citeauthor{#1}'s\nolinebreak[2]
(\citeyear{#1})}

\begin{document}

\title{Optimal Project Management \footnote{We would like to thank
    Matteo Escudé, Johannes Hörner, Aaron Kolb, and Can Urgun for
    helpful comments and discussions. We also thank seminar
    participants at City University of London, Collegio Carlo Alberto,
    Essex, JHU, London Business School, Miami Business School, MIT, US Naval
    Academy, Nottingham, Penn, Princeton, Southeast Theory Festival,
    Surrey, and Toulouse.}}  \author{Alessandro Bonatti\thanks{Sloan
    School of Management, MIT,
    \protect\protect\protect\protect\protect\href{mailto:bonatti@mit.edu}{\texttt{bonatti@mit.edu}}}
  \hspace{3mm} \hspace{2mm} Doruk Cetemen\thanks{LUISS, RHUL and EIEF,
    \protect\protect\protect\protect\protect\href{mailto:dcetemen@luiss.it}{\texttt{dcetemen@luiss.it}}}
  \hspace{3mm}\hspace{2mm} Juuso Toikka\thanks{The Wharton School,
    University of Pennsylvania,
    \protect\protect\protect\protect\protect\href{mailto:toikka@wharton.upenn.edu}{\texttt{toikka@wharton.upenn.edu}}}}

\maketitle
\begin{abstract}
  We study optimal dynamic contracts for long-term projects where a
  risk-averse agent exerts hidden effort to increase the drift of a
  Brownian process toward a completion threshold. Relative to the
  first best, the optimal contract slows progress, makes success less
  likely, induces earlier termination, and narrows project
  scope. Allowing for costly project resets generates further
  distortions: resets may be excessive or insufficient depending on
  the residual effort required afterwards. With endogenous risk
  taking, optimal incentives introduce penalties for failure and
  reduce rewards for effort above a critical progress level, resulting
  in discontinuous and nonmonotonic effort in progress. In the
  baseline model, the optimal contract can be implemented with linear
  spot contracts that require no long-term commitment.

\vspace{2mm}

\noindent \textit{Keywords}: dynamic contracting, moral hazard, hidden
savings, project management, risk taking, optimal stopping

\vspace{2mm}

\noindent \textit{JEL codes}: D82, D86, J33, J41, M12, O32

\end{abstract}\newpage

\section{Introduction}

Many long-term projects---from scientific research and drug
development to complex engineering and software initiatives---progress
through uncertain paths and often require previously unanticipated
steps. Success typically depends on a sequence of decisions made under
limited observability and evolving conditions. Managing such projects
involves dynamic trade-offs between motivating effort, tolerating
setbacks, and making strategic choices about when to persevere, reset,
or take on risk.

In this paper, we develop a flexible dynamic contracting framework to
study optimal project management. We consider a principal who must
decide how to incentivize an agent to exert effort; how tolerant she
should be of failure before pulling the plug or shifting the approach;
how ambitious the project should be in the first place; and when to
allow the agent to take on additional risk in order to speed up
progress.

To capture the rich uncertainty inherent to complicated projects, we
model progress as a Brownian motion that must reach a threshold, or
target, for completion. The agent can increase its drift by exerting
hidden, costly effort. The distance to the target measures the
remaining effort required to complete the project, though the actual
effort needed may turn out to be smaller or larger depending on the
resolution of uncertainty.

We assume that the agent has constant absolute risk aversion (CARA)
preferences and can privately save and borrow on the same terms as the
principal. This allows us to abstract from wealth effects. Further, it
eliminates the principal's ability to provide incentives via
distorting the agent's intertemporal consumption smoothing---a feature
of optimal contracts in models without private savings that seems of
limited relevance for managing long-term projects. Moreover, optimal
contracts in our setting have several natural implementations, e.g.,
via deferred pay or short-term contracts.

The CARA preferences with hidden savings imply that instead of two
state variables, progress and continuation utility, the principal's
problem can be formulated recursively with progress as the only state
variable. In most variants we study, the principal's expected payoff
is characterized by the value of an auxiliary first-best problem where
effort is contractible but its cost to the agent is scaled up by a
constant. The effects of agency frictions can thus be analyzed through
comparative statics with respect to the cost parameter. Even when the
first best and second best differ qualitatively, such as in the
extension to risk taking where the agent's hidden action is
two-dimensional, the analysis remains remarkably tractable.

\paragraph{Main Results} In the baseline model, the principal must
only decide how to reward the agent and when to stop the project. The
optimal contract features increasing effort in progress and a constant
termination cutoff. That is, it is optimal to make the agent work
harder the closer the target, and abandon the project in favor of an
outside option if progress falls sufficiently far from it due to
adverse shocks. Relative to the first best (which shares these
qualitative features), the optimal contract induces less effort and
exhibits lower failure tolerance in the form of a higher termination
cutoff. Thus, moral hazard delays completion and makes it more likely
the project be abandoned.

The optimal contract has a natural implementation where a single
history-dependent payment is made when the project succeeds or is
abandoned. Alternatively, it can be implemented by a sequence of spot
contracts, each of which is the optimal linear contract of the static
model of \citet{holmstrom1987aggregation} for the current marginal
value of progress. Linearity is thus a result rather than an
assumption. Moreover, under this implementation the agent is
indifferent between continuing and leaving at every history, and the
principal gains nothing from revising the terms, so the relationship
requires no long-term commitment on either side.

The baseline model takes the target as given, but in many settings
choosing the target is part of managing the project. For instance,
goals have to be set for research and development, and firms have to
decide how good a new product should be before it is
launched. Accordingly, we consider the case of endogenous project
scope where the principal can choose the target for the project, with
more ambitious targets yielding larger benefits upon completion. The
findings from the baseline model carry over readily to this
case. Moreover, we show that the principal optimally pursues less
ambitious projects than in the first best, which points to a simple
but novel form of agency cost.

We then enrich the model to incorporate endogenous project resets to
capture settings where there are multiple alternative approaches to
tackling the project. We assume that the principal can always abandon
a failing approach and switch to a new ex ante identical
one. Formally, this means that progress can be reset to a fixed
restart point by paying a fixed cost. Motivating examples include
abandoned early-stage AI models that are retrained from scratch
following poor performance, software teams discarding early versions
of a system, or clinical trials restarted after unfavorable interim
results.\footnote{See \citet{brooks1975mythical} and the discussion in
  \citet{cusumano1991japan}.}

The optimal contract in the model with resets implements a threshold
policy: below a certain progress level, the principal pays the fixed
cost to reset; otherwise, the project continues. Relative to the first
best, the second-best policy may reset too aggressively when the
restart point is close to the target, and too conservatively when the
restart point is far. This result reflects an intuitive tension: using
resets substitutes for effort, which is more costly to implement under
moral hazard; but resets alone do not complete the project. The more a
reset requires exerting further effort to complete the project, the
more the value of resetting is depressed when incentives are costly to
provide.

Finally, we study endogenous risk taking. To do so, we add to the
model the possibility of a breakdown---a Poisson shock that kills the
project. In addition to effort, the agent chooses between a risky and
a safe mode unobservably to the principal. The risky mode increases
the project's drift relative to the safe mode, but comes at the cost
of a higher hazard rate of a breakdown. This structure reflects
dynamic moral hazard in settings like early-phase pharmaceutical
development, where firms may fast-track compounds into clinical trials
to compress timelines, accepting higher risk of regulatory
failure.\footnote{Failure rates are high in practice: using
  clinical-development transitions observed during 2011--2020,
  \citet{thomas2021clinical} estimate a likelihood of approval from
  Phase I of 7.9\% overall and 5.3\% in oncology.} Similarly,
technology startups may push unstable features in pursuit of growth.

The optimal contract allows the agent to take additional risk to
accelerate development when the project is far from completion. By
contrast, when success is close, the downside risk of a breakdown
outweighs the gains from faster progress. Under additional
assumptions, we show that the contract has a cutoff structure: once
progress exceeds a threshold, the contract switches from the risky to
the safe mode. In the risky mode, the agent is not penalized for
breakdowns. In the safe mode, the principal deters risk taking by
imposing penalties for breakdowns and reducing rewards for progress.

Interestingly, the interaction between hidden effort and risk choices
implies that effort is nonmonotonic in progress: it rises as progress
approaches the cutoff, drops discontinuously at the cutoff, and then
increases again until completion. By contrast, first-best effort is
always increasing in progress.

A technical contribution of our analysis is verifying global incentive
compatibility for a CARA agent with hidden savings subject to a
standard pathwise no-Ponzi condition. We use localization arguments
that allow us to dispense with the ad hoc integrability conditions
(that implicitly restrict deviations) imposed in the
literature. \citet*{bloedel2023persistent} show an analogous result
for the pure consumption-savings problem when income follows an
Ornstein-Uhlenbeck process; our proof can be seen as a generalization
to controlled income processes with a finite stopping time. The
approach could prove useful in other contracting environments with
unbounded states.

\paragraph{Related Literature}

This paper contributes to the study of dynamic contracting in
continuous time following \citet{sannikov2008continuous}. As noted
above, we abstract from wealth effects and assume the agent can
privately save and borrow on the same terms as the principal.
\cite*{fudenberg1990short} show (in discrete time) that under these
assumptions short-term contracts are without loss of optimality. Most
importantly for us, the principal's value function is then separable
in progress and continuation utility, rendering the state effectively
one-dimensional. These assumptions are also adopted by
\cite{he2011model} to study the relationship between pay-performance
sensitivity and firm size; our model can be viewed as an extension of
his to a nonstationary setting.\footnote{Other papers adopting a
  similar approach include \cite{holmstrom1987aggregation},
  \cite{williams2015solvable}, \cite*{he2017optimal},
  \cite{marinovic2019ceo}, and \cite*{cetemen2023renegotiation}.}
While this case is known to be tractable, the general case of hidden
borrowing and savings is challenging---see, e.g.,
\citet{ditella2021optimal} and the references therein.

In terms of substantive questions, the most closely related papers are
those considering contracting for project
completion. \citet{mason2015getting} analyze a dynamic moral hazard
problem in which an agent exerts costly effort to increase the arrival
rate of a Poisson event that corresponds to project completion.
\citet{green2016breakthroughs} extend the model to a setting where the
friction is cash-flow diversion rather than hidden effort and where
two Poisson arrivals are required for completion, with the agent
privately observing the first one. \citet*{feng2024setbacks} use the
cash-flow diversion approach to study a project with deterministic
progress subject to random, privately observed Poisson setbacks that
erase all accumulated progress. In turn, \cite*{feng2025setbacks}
consider partial setbacks in a two-stage project. Our analysis
complements these papers by introducing endogenous project scope,
resets, and risk taking, and by allowing for a rich stochastic model
of progress similar to that in \possessivecite{georgiadis2015projects}
team-production game. (\cite{georgiadis2015projects} considers
contracting with limited instruments as an extension, but does not
solve for the fully optimal contract.) For other works considering
incentives for project completion, see, e.g.,
\citet{shavell1979optimal}, \citet{hopenhayn1997optimal},
\citet{kocherlakota2004figuring}, \citet{toxvaerd2006time}, and
\citet*{georgiadis2014project}.

Risk taking has been considered in a stationary dynamic contracting
context by \citet*{demarzo2013risking} who assume that, in addition to
diverting funds, the agent can boost short-term profits by exposing
the firm to the risk of a terminal breakdown. The authors show that
when the agent's continuation value falls sufficiently low due to poor
performance, he takes excessive risks. \citet{wong2019dynamic}
analyzes the case in which taking risks increases the chance of a
large loss rather than
termination.\footnote{\citet*{bromberg2021scale} introduce scale into
  the setup of \citet{wong2019dynamic}. See also
  \citet{rochet2016risky} and \citet{li2025optimal}.} Finally,
\cite*{biais2010large} and \cite{myerson2015moral} consider models in
which the agent exerts hidden effort to prevent the arrival of
disasters. In contrast to our model, the moral hazard problem in these
two papers is one-dimensional and there is no interaction between
exerting effort and taking risks.

Our analysis is related to contracting in Brownian settings where
nonstationarity arises because of learning, such as in
\cite{madsen2022designing} or \cite{demarzo2011learning}. In these
models the agent has private information about the profitability of
the venture either on path or following a deviation, whereas here
progress is publicly observable.

Also related is the literature on contracting for experimentation
starting from \citet{bergemann1998venture,
  bergemann2005financing}---see, e.g., \citet{manso2011motivating},
\citet{horner2013incentives}, \citet{guo2016dynamic},
\citet*{halac2016optimal}, and
\citet{mcclellan2022experimentation}. The key difference relative to
these papers is that while in our model the amount of effort needed to
complete the project is uncertain and only learned over time, it is
nevertheless always feasible (albeit not optimal) to reach the target
almost surely in finite time.

\paragraph{Outline} The rest of the paper is organized as
follows. Section~\ref{Sec:Model} presents the baseline model;
Section~\ref{Sec:Planner} solves the first-best benchmark;
Section~\ref{Sec:Analysis} derives the optimal contract;
Section~\ref{Sec:ResModel} examines resetting;
Section~\ref{Sec:DisModel} analyzes endogenous risk taking; and
Section~\ref{Sec:Conc} concludes. All proofs are in the Appendix.

\section{Baseline Model} \label{Sec:Model}

A risk-neutral principal (she) hires a risk-averse agent (he) to
complete a project. The project's progress is a publicly observable
stochastic process, which the agent controls with costly, hidden
effort. The events unfold in continuous time, indexed by
$t \in [0,\infty)$.

\paragraph{Project Technology} The project's progress $X$ is a
real-valued, controlled process, whose starting value is normalized to
zero (i.e., $X_0= 0$) and which evolves according to
\begin{equation} \label{progress}
dX_t = A_t\, dt + \sigma\, dZ_t,
\end{equation}
where $A_t \in \R_{+}$ is the agent's hidden effort at time $t$;
$\sigma >0$ is volatility; and $dZ_t$ is the increment of a standard
Brownian motion $Z$, unobservable to both parties.

The project is completed when progress reaches the target
$\bar x>0$, i.e., the first time $X_t = \bar x$. The project can be
terminated at any time before completion.

Beyond its literal interpretation, the project’s progress can be
viewed as the gradual resolution of uncertainty about the remaining
effort required for completion, starting from an expected requirement
of $\bar x-X_t$ at time $t$. (The target $\bar x$ is then the time-$0$
expected requirement.) In this interpretation, upward and downward
movements in progress reflect the arrival of information about the
realized residual work associated with completing intermediate
tasks. The Brownian shocks capture incremental learning about this
residual requirement as individual steps are undertaken.\footnote{For
  a related model of continuous-time gradual learning, see
  \cite*{brsvb12}.}

\paragraph{Payoffs} The risk-neutral principal receives a lump-sum
benefit $b>0$ upon project completion. If instead the project is
terminated before it is completed, she receives a salvage value
$\underline v \in (0, b)$. Thus, if the project is stopped (i.e.,
completed or terminated) at a random time $\tau$, her payoff is given
by
\begin{equation*}
  e^{-r \tau} \left(\mathbf{1}_{\{X_\tau\geq\bar{x}\}}b
    +\mathbf{1}_{\left\{X_{\tau}<\bar{x}\right\}}\underline v\right)
  -  \int_0^{\infty} e^{-r t } C_t dt,
\end{equation*}
where $r >0$ is the interest rate and $C_t$ is the time-$t$ payment to
the agent, which may be positive or negative. (Payments can continue
even after the project ends.)

The agent is risk-averse with CARA flow utility. His flow cost of
effort is measured in monetary terms by a cost function
$h:\R_+ \to \R_+$ with $h(a) = \phi a^2/2$ for some $\phi > 0$,
assumed quadratic for simplicity.\footnote{\label{fn:cost}Our results
  extend qualitatively to any effort cost $h$ with
  $h(0) = h_a(0) = 0$, $h_{aa} > 0$, and $h_{aaa} \geq 0$, except that
  in Propositions~\ref{Prop:FirstBest}\ref{it:d05} and
  \ref{Prop:SecondBest}\ref{it:s04} it is then the marginal value of
  progress, rather than effort, that grows in expectation at rate $r$,
  and the sufficient condition in
  Proposition~\ref{Prop:OptimalContract} takes a different form.} The
agent can save and borrow, and thus his time-$t$ consumption,
$\hat C_t$, may in general be different from the payment $C_t$
received from the principal. Denoting his time-$t$ effort by
$\hat A_t$, the agent's payoff is given by
 \begin{equation*}
  \int_0^{\infty} e^{-r t }u(\hat A_t,\hat C_t) dt,
\end{equation*}
where
\begin{equation}
  u(\hat a,\hat c) \ceq - \frac{1}{\eta}
  \exp\{- \eta (\hat c-h(\hat a))\}\label{carau}
  = - \frac{1}{\eta} \exp\left\{- \eta
    \left(\hat c-\frac{\phi}{2}\hat a^2\right)\right\}.
\end{equation}

Both parties can save and borrow at the rate $r$. The agent's
consumption and savings decisions are unobservable to the principal.

\paragraph{Contracting} The parties commit at time $0$ to a long-term
contract that specifies monetary compensation for the agent as a
function of the project's progress. The contract also specifies the
conditions (if any) under which the project be terminated.

Formally, a contract is a triple $(A,C,\tau)$ consisting of a
recommended effort process $A$ taking values in $\R_+$, a payment
process $C$ taking values in $\R$, and a stopping time $\tau$ taking
values in $\R_+$, all progressively measurable with respect to
progress $X$. The stopping time $\tau$ captures both project
completion and early termination. It is thus required to satisfy
$\tau \leq \inf\{t \geq 0: X_t = \bar x\}$ almost surely.

Given a contract $(A,C,\tau)$, the agent chooses an effort process
$\hat A$ and a consumption process $\hat C$, both progressively
measurable with respect to progress, to maximize his utility:
\begin{equation}\label{prob}
    U(A,C,\tau)\ceq \max_{\hat A , \hat C} \E^{\hat A} \left[ \int_0^\infty  e^{-r t} u(\hat A_t,\hat C_t) \, dt \right],
\end{equation}
where the expectation is conditional on the effort process
$\hat{A}$. The agent's choice of actions and consumption is subject to
the laws of motion for progress and savings, given by
\begin{eqnarray*}
	dX_t &=& \hat A_t\, dt + \sigma\, dZ_t, \quad X_0 = 0, \\
	dS_t & = & (r S_t + C_t - \hat C_t) \, dt, \quad S_0 = 0,
\end{eqnarray*}
and to the transversality condition
$\lim_{t\to \infty } e^{-r t} S_t = 0$ (a.s.) ruling out Ponzi
schemes. Savings increase through interest income $rS_t$ and payments
$C_t$, and decrease through consumption $\hat C_t$; initial savings
are normalized to zero without loss of generality.

It is convenient to measure the agent's lifetime utility levels in
monetary terms. For $U < 0$, let $CE(U)$ denote the wealth whose
annuity gives lifetime utility $U$. That is,
\begin{equation}\label{eq:CE}
  CE(U) \ceq -\frac{1}{\eta r} \ln(-\eta r U),
\end{equation}
so that $\int_0^\infty e^{-rt} u(0, r\, CE(U))\, dt = U$. We normalize
the outside option to zero wealth without loss of generality (cf.\
\citealp{holmstrom1987aggregation}; see the Supplemental Appendix for
details). Thus, the contract $(A,C,\tau)$ satisfies the participation
constraint if
\begin{equation}\label{ir}
  CE(U(A,C,\tau)) \geq 0.
\end{equation}
We impose the participation constraint only at time $0$ to simplify
the analysis. However, we show in Section~\ref{Sec:Implementation}
that the optimal contract can be taken to satisfy it at all times. (At
time $t$ the outside option is worth $S_t$ to an agent with savings
$S_t$; see Lemma~\ref{lem:AWE}.)

It is convenient to focus on contracts $(A,C,\tau)$ under which it is
optimal for the agent to obey the effort recommendation $A$ and
neither borrow nor save, i.e., to consume exactly $C$. In such a
contract, the payment process $C$ also acts as an incentive compatible
consumption recommendation. Formally, we define these contracts as
follows.

\begin{definition}
  A contract $(A,C,\tau)$ is \textbf{incentive compatible} if
  $(\hat{A},\hat C) = (A,C)$ is a solution to the agent's problem
  \eqref{prob}.
\end{definition}

The following lemma is standard in the literature on dynamic
contracts.\footnote{The idea is similar to the revelation
  principle. For a proof, see, e.g., \cite{he2011model}.}

\begin{lemma}
  For any contract with finite payoffs to the agent and the principal,
  there is an incentive-compatible contract that delivers the same
  payoffs to both parties as the original contract.
\end{lemma}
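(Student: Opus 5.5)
Take an arbitrary contract $(A,C,\tau)$ with finite payoffs, and let $(\hat A^*,\hat C^*)$ be an optimal response of the agent to it. Since the lemma refers to the payoffs the original contract delivers, we take such an optimal response to exist. The plan is to build a new contract that recommends exactly this response and pays the agent's optimal consumption stream directly. The new contract is $(A',C',\tau') \ceq (\hat A^*,\hat C^*,\tau)$: the principal now does the saving and borrowing the agent used to do. All objects remain progressively measurable with respect to $X$, because $\hat A^*,\hat C^*$ are.

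\emph{Payoffs.} We first check that payoffs are unchanged.
\begin{itemize}
\item \textbf{Agent.} Under $(A',C',\tau')$, obedience with zero saving yields $\E^{\hat A^*}\bigl[\int_0^\infty e^{-rt}u(\hat A^*_t,\hat C^*_t)\,dt\bigr]$. This is the agent's value under the original contract.
\item \textbf{Principal.} The law of motion for $S$ with $S_0=0$ and the transversality condition give, pathwise,
\[
\int_0^\infty e^{-rt}(C_t-\hat C^*_t)\,dt=\lim_{t\to\infty}e^{-rt}S_t-S_0=0.
\]
Hence $\int_0^\infty e^{-rt}C'_t\,dt=\int_0^\infty e^{-rt}C_t\,dt$ path by path. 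The stopping rule and the effort process, and therefore the law of $X$, are the same as on the original equilibrium path. So the principal's expected payoff is unchanged, and it is finite by assumption.
\end{itemize}

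\emph{Incentive compatibility.} Take any deviation $(\tilde A,\tilde C)$ under the new contract, with savings $\tilde S$ satisfying $d\tilde S_t=(r\tilde S_t+\hat C^*_t-\tilde C_t)dt$, $\tilde S_0=0$, and the no-Ponzi condition. Let $S^*$ be the agent's savings under his original optimal plan, which depend on the path of $X$ only, and set $\check S \ceq S^*+\tilde S$. Then
\[
d\check S_t=(r\check S_t+C_t-\tilde C_t)\,dt,\qquad \check S_0=0,\qquad e^{-rt}\check S_t\to 0 .
\]
So $(\tilde A,\tilde C)$ is feasible in the original problem when savings $\check S$ are computed along the realized path of $X$ generated by $\tilde A$. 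It gives the same utility there. By optimality of $(\hat A^*,\hat C^*)$ in the original problem, the deviation cannot be profitable. Hence $(\hat A^*,\hat C^*)$ solves \eqref{prob} for the new contract, and the new contract is incentive compatible.

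\emph{Main obstacle.} The delicate step is this feasibility transfer. Since $S^*$ is a functional of the path of $X$, it must be evaluated along the path induced by the deviation, not along the equilibrium path. The no-Ponzi condition must then hold almost surely under the deviation's measure $P^{\tilde A}$. The original plan's condition holds a.s. under $P^{\hat A^*}$, and I would handle the change of measure as follows.
\begin{itemize}
\item \textbf{Equivalent measures.} If the effort processes give locally equivalent measures on path space via Girsanov, the a.s. statement transfers on each finite horizon. For the limit at infinity, I would invoke the standard convention (as in He (2011)) that $S^*$ can be taken to satisfy transversality pathwise, i.e.\ on every path of $X$.
\item \textbf{Fallback.} If pathwise transversality is not available, define $C'$ to be $\hat C^*$ minus $rS^*$ with the agent's savings capitalized into the contract, i.e.\ treat savings as deferred payments. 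This keeps the same pathwise accounting identity.
\end{itemize}
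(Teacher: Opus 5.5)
Your proposal is the standard revelation-principle argument that the paper defers to \citet{he2011model} for: recommend the agent's optimal response, pay his optimal consumption stream, preserve the principal's payoff through the budget identity, and rule out deviations by replicating them in the original contract with the combined savings $S^*+\tilde S$. It therefore takes essentially the paper's approach. The only delicate point is the tail step you flag, and your first fix does not settle it: finite-horizon Girsanov equivalence cannot carry the tail event $\{e^{-rt}S^*_t\to 0\}$ from $P^{\hat A^*}$ to $P^{\tilde A}$, which are typically singular on $\mathcal F_\infty$; moreover, the tail behavior of $S^*$ under $P^{\tilde A}$ does not depend on which version of $\hat C^*$ is used, so pathwise transversality is an added assumption rather than a convention. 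A cleaner route is to replicate deviations that revert after a finite $T$ to the optimal plan (plus the interest on the accumulated deviation savings), whose laws are equivalent to $P^{\hat A^*}$ on all of $\mathcal F_\infty$, and then let $T\to\infty$ using an integrability bound of the kind in Step~2 of the proof of Proposition~\ref{Prop:MartingaleLevy}.
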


That is, focusing on incentive-compatible contracts is without loss of
generality as far as payoffs are concerned. Armed with this result,
the principal's problem of designing an optimal contract can be
formulated as
\begin{equation}\label{second-best}
  \max_{A, C, \tau} \mathbb{E}^{A} \bigg[ e^{-r \tau}
  \big(\mathbf{1}_{\{X_\tau\geq\bar{x}\}}b+\mathbf{1}_{\left\{X_{\tau}<\bar{x}\right\}} 
  \underline v \big) -  \int_0^{\infty} e^{-r t } C_t \, dt  \bigg],
\end{equation}
where the maximum is over incentive-compatible contracts that
satisfy participation \eqref{ir}.

It is worth noting that while the solution to \eqref{second-best} pins
down the agent's consumption, the payment policy is unique only up to
the net present value as the agent can save and borrow. In
Section~\ref{Sec:Implementation} we discuss implementations via
deferred pay or short-term contracts, the latter of which requires no
long-term commitment from either party.

\section{Observable Effort}\label{Sec:Planner}

Before deriving the optimal contract, it is useful to consider the
first-best benchmark where the agent's effort is observable. (Whether
consumption is also observable is immaterial.) In this case, the least
expensive way to satisfy the agent's participation constraint
\eqref{ir} is by giving him constant consumption net of effort
costs. This can be achieved by paying a flow wage that exactly
compensates for the required effort, i.e.,
\begin{equation*}
  C_t = \frac{\phi}{2}A_t^2  \quad
  \text{for all $t \in [0,\infty)$}.  
\end{equation*}

Zero effort is of course optimal after the project is stopped. Thus,
at any time $t$ and at any current progress level $x \leq \bar x$, the
first-best effort policy and stopping time solve
\begin{equation}\label{first-best}
  v(x) \ceq \max_{A, \tau} \mathbb{E}_{t,x}^{A}\bigg[ e^{-r (\tau-t)}
  \big(\mathbf{1}_{\{X_\tau \geq \bar{x}\}} b
  + \mathbf{1}_{\{X_{\tau} < \bar{x}\}} \underline v \big)
  - \frac{\phi}{2}\int_{t}^{\tau} e^{-r (s-t)} A_s^2\, ds \bigg].
\end{equation}
This first-best optimal control and stopping problem is a one-player
version of the team-production game in
\citet{georgiadis2015projects}.\footnote{See Remark 4 in
  \citet{georgiadis2015projects} for the case of a positive salvage
  value as in our model.} As such, part of the solution is a
termination threshold $\underline x^{FB} < \bar x$ at which the
project is optimally abandoned. In the continuation region
$[\underline x^{FB}, \bar x]$, the value function $v$ defined by
\eqref{first-best} is infinitely differentiable and satisfies the
Hamilton-Jacobi-Bellman (HJB) equation
\begin{equation}\label{hjb:planner}
  r v(x) = \max_{a} \left\{ - \frac{\phi}{2}a^2 + a v_x(x)
    + \frac{1}{2} \sigma^2 v_{xx}(x) \right\}
\end{equation}
subject to value matching ($v(\bar x) = b$,
$v(\underline{x}^{FB}) = \underline v$) and smooth pasting
($v_x(\underline{x}^{FB}) = 0$). The first-best effort in turn is pinned
down by the first-order condition
\begin{equation}\label{FB-FOC}
    a^{FB}(x) = \phi^{-1} v_x(x),
\end{equation}
which can be substituted back into \eqref{hjb:planner} to obtain a
simple ODE for $v$.

The following proposition summarizes properties of the first-best
solution.

\begin{proposition}\label{Prop:FirstBest}
  Let $v$, $a^{FB}$, and $\underline x^{FB}$ be the first-best value
  function, effort policy, and termination threshold. In the
  continuation region $[\underline x^{FB},\bar x]$, the following
  properties hold:
  \begin{enumerate}[label=(\roman*)]
  \item\label{it:d02} Value is increasing in progress: $v_x(x) \geq 0$
    with strict inequality if and only if $x> \underline x^{FB}$.
  \item\label{it:d03} Value is convex in progress: $v_{xx} > 0$.
  \item\label{it:d04} Effort is increasing in progress:
    $a^{FB}_x > 0$.
  \item\label{it:d05} Effort grows in expectation at rate $r$: the
    discounted effort process $e^{-rt} a^{FB}(X_t)$ is a martingale up
    to the first-best stopping time $\tau$.
  \end{enumerate}
\end{proposition}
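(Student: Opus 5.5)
Substituting the first-order condition \eqref{FB-FOC} into \eqref{hjb:planner} reduces the HJB equation, on the continuation region $[\underline x^{FB},\bar x]$, to the autonomous second-order ODE
\begin{equation*}
  r v = \frac{v_x^2}{2\phi} + \frac{\sigma^2}{2} v_{xx},
\end{equation*}
with $v(\bar x)=b$, $v(\underline x^{FB})=\underline v$ and $v_x(\underline x^{FB})=0$. The plan is to take the existence and smoothness of this solution as given, as the text does. All four properties will then follow from sign arguments on this ODE, combined with an application of It\^o's lemma for the last one.

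\emph{Step 1 (convexity, part \ref{it:d03}).} Rewrite the ODE as $\frac{\sigma^2}{2}v_{xx} = rv - \frac{v_x^2}{2\phi}$. At $\underline x^{FB}$ smooth pasting gives $\frac{\sigma^2}{2}v_{xx}(\underline x^{FB}) = r\underline v>0$. To propagate convexity I would differentiate the ODE once, which gives
\begin{equation*}
  \frac{\sigma^2}{2}v_{xxx} = v_x\Big(r - \frac{v_{xx}}{\phi}\Big).
\end{equation*}
Let $w=v_{xx}$. Then $w$ solves a linear first-order equation in $w_x$, with coefficients depending on $v_x$ and $w$. Suppose $w$ first hits zero at some $x_0$. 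Just before $x_0$ we have $w>0$ and $w_x\le 0$. On the other hand, at $x_0$ the equation gives $w_x(x_0)=\frac{2r}{\sigma^2}v_x(x_0)$, and $v_x(x_0)>0$ because $v_x$ starts at $0$ and increases while $w>0$. So $w_x(x_0)>0$, a contradiction, and hence $v_{xx}>0$ throughout. This step is the only delicate one: I need to handle the possible tangency case carefully. A cleaner alternative I would try first is to note directly from the ODE that $v_{xx}=0$ at a point with $v_x>0$ forces $v_{xxx}>0$, so $w$ cannot cross zero from above.

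\emph{Step 2 (monotonicity, part \ref{it:d02}, and increasing effort, part \ref{it:d04}).} Given $v_{xx}>0$ and $v_x(\underline x^{FB})=0$, integrating shows that $v_x>0$ exactly on $(\underline x^{FB},\bar x]$ and $v_x=0$ only at the cutoff. Effort is $a^{FB}=v_x/\phi$, so $a^{FB}_x=v_{xx}/\phi>0$.

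\emph{Step 3 (martingale, part \ref{it:d05}).} Apply It\^o's lemma to $e^{-rt}v_x(X_t)$ under the optimal drift $A_t=v_x(X_t)/\phi$. The drift term is
\begin{equation*}
  e^{-rt}\Big(-rv_x + \frac{v_x v_{xx}}{\phi} + \frac{\sigma^2}{2}v_{xxx}\Big),
\end{equation*}
and this vanishes identically by the differentiated ODE from Step 1. Because $v_x$ is bounded on the compact continuation region, the stochastic integral stopped at $\tau$ is a true martingale. Dividing by $\phi$ yields the claim for $e^{-rt}a^{FB}(X_t)$.
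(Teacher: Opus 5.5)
Your proof is correct, but it orders the argument differently from the paper. The paper proves monotonicity first. Lemma~\ref{lem-convex} shows that at any critical point $\hat x$ with $v_x \geq 0$ to its left, the bound $v(\hat x) \geq \underline v > 0$ and the undifferentiated HJB give $v_{xx}(\hat x) = 2rv(\hat x)/\sigma^2 > 0$; a first-crossing argument then yields part (i). Convexity is proved afterwards by a second first-crossing argument on $v_{xx}$, in which part (i) signs $v_{xxx}$ at the first zero via the differentiated HJB \eqref{eq:Envelope}.

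You go straight to convexity. At the first zero $x_0$ of $v_{xx}$, positivity of $v_x(x_0)$ comes for free from $v_{xx} > 0$ on $[\underline x^{FB}, x_0)$ together with smooth pasting. So a single first-crossing argument suffices, and monotonicity follows by integration. This is more economical, and it uses the positivity of $\underline v$ only at the cutoff. The paper's ordering instead isolates a statement about critical points that needs only the undifferentiated ODE, and it recycles that argument later: the risk-taking proof invokes "the analog of Lemma~\ref{lem-convex}", and Lemma~\ref{lem:level} cites the argument of part (i). Parts (iii) and (iv) are handled exactly as in the paper.

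Two small precision points:
\begin{itemize}
\item In Step 1, the first-zero property gives $w_x(x_0) \leq 0$, because the left difference quotients at $x_0$ are negative. It does not give $w_x \leq 0$ on a left neighborhood. Since you derive $w_x(x_0) > 0$ strictly, there is no tangency case left to handle.
\item In Step 3, the integrand of the stochastic integral is $\sigma e^{-rt} v_{xx}(X_t)$, so the natural bound is on $v_{xx}$, as in the paper. Your appeal to boundedness of $v_x$ also works, but only through the observation that the stopped process $e^{-r(t\wedge\tau)}v_x(X_{t\wedge\tau})$ is then a bounded local martingale.
\end{itemize}
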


\begin{figure}[t]
\centering
\begin{minipage}[c]{.48\linewidth}
\centering
\includegraphics[width=\linewidth]{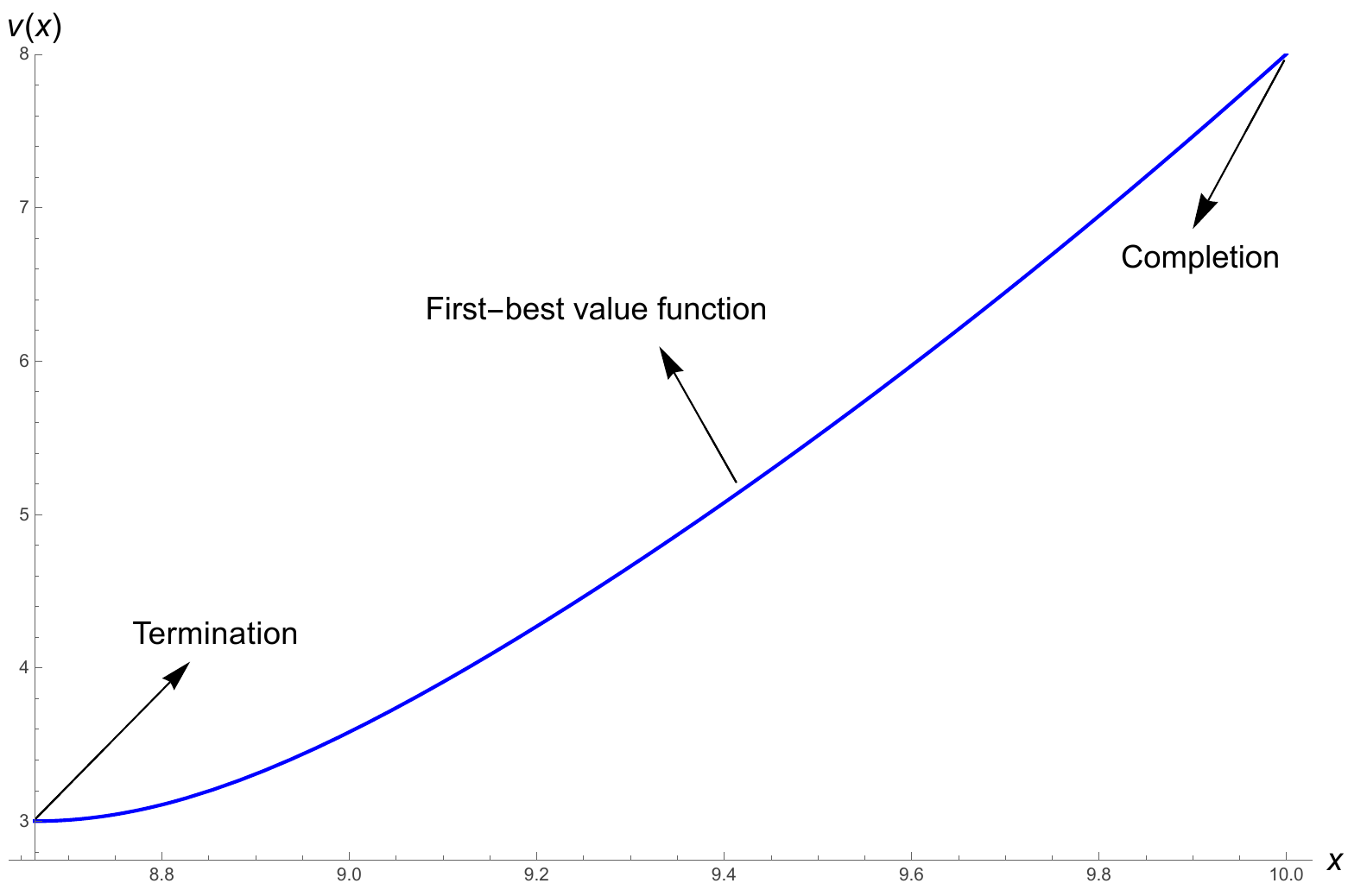}
\end{minipage}\quad
\begin{minipage}[c]{.48\linewidth}
\centering
\includegraphics[width=\linewidth]{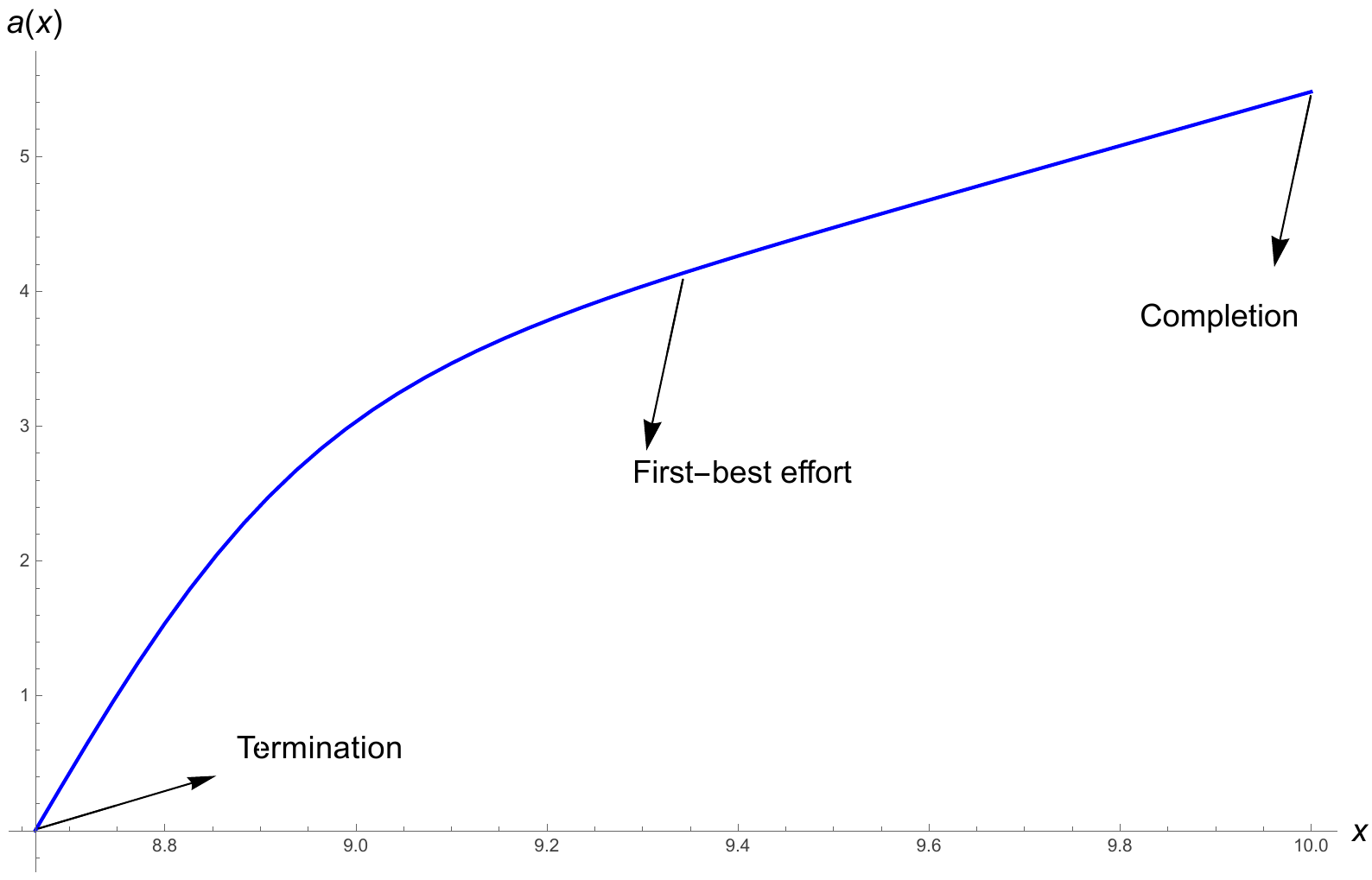}
\end{minipage}
\caption{First-best value function (left) and effort profile
  (right) for parameter values
  $(b, \bar x, \sigma, \underline v, r, \eta, \phi) = (8, 10, 1, 3, 2,
  2, 1)$. \label{fig:FirstBest}}
\end{figure}

The first best has an intuitive structure. If progress is sufficiently
far from the target, the project is terminated, as completing it would
be too costly. At the termination threshold, zero effort is
implemented. Everywhere else effort is positive and increasing in
progress. This arises from the payoff structure in the model: The cost
of exerting effort is incurred continuously, while the reward is
received only upon project completion. Therefore, the principal has a
stronger incentive to accelerate progress as the project nears
completion. The same logic operates along the path. Since progress
yields no payoff until the project is stopped, the envelope theorem
implies that the marginal value of progress today equals the expected
discounted marginal value of progress at any later date. It thus grows
in expectation at the discount rate, and so does effort, which is
proportional to it by \eqref{FB-FOC}. \autoref{fig:FirstBest}
illustrates the value function and the effort policy.

Next, we establish some comparative statics for the first-best solution.

\begin{proposition}\label{Prop:compfb}
  Consider two projects, labeled 1 and 2, that differ in exactly one
  parameter.
\begin{enumerate}[label=(\roman*)]

\item \label{it:pcfb01} If $b_1 > b_2$, then
  $\underline x^{FB}_{1} < \underline x^{FB}_{2}$. Furthermore,
  $a^{FB}_1(x) > a^{FB}_2(x)$ for all $x \in  (\underline
  x^{FB}_1,\bar x]$.

\item\label{it:pcfb02} If $r_1 > r_2$, then
  $\underline x^{FB}_{1} > \underline x^{FB}_{2}$. Furthermore, there
  exists an interior threshold $x^{\dagger}_r$ such that
  $a^{FB}_1(x) > a^{FB}_2(x)$ if $x \in (x^{\dagger}_r, \bar x]$
  whereas $a^{FB}_1(x) < a^{FB}_2(x)$ if
  $x \in (\underline x^{FB}_2, x^{\dagger}_r)$.

\item\label{it:pcfb03} If $\sigma_1 > \sigma_2$, then
  $\underline x^{FB}_{1} < \underline x^{FB}_{2}$. Furthermore, there
  exist interior thresholds
  $\{x^{\dagger}_{\sigma}, x^{\dagger \dagger}_{\sigma}\}$ such that
  $a^{FB}_1(x) > a^{FB}_2(x)$ if
  $x \in (\underline x^{FB}_1,x^{\dagger}_{\sigma})$ whereas
  $a^{FB}_1(x) < a^{FB}_2(x)$ if $x \in (x^{\dagger
    \dagger}_{\sigma},\bar x]$.

\item\label{it:pcfb04} If $\phi_1 > \phi_2$, then
  $\underline x^{FB}_{1} > \underline x^{FB}_{2}$. Furthermore,
  $a^{FB}_1(x) < a^{FB}_2(x)$ for all
  $x \in (\underline x^{FB}_2, \bar x]$.

\end{enumerate}
\end{proposition}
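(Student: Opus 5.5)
The plan is to reduce the first-best problem to a one-dimensional first-order problem that can be compared across parameters. Substituting \eqref{FB-FOC} into \eqref{hjb:planner} gives the autonomous ODE
\begin{equation*}
  r v = \frac{v_x^2}{2\phi} + \frac{\sigma^2}{2} v_{xx}
\end{equation*}
on $[\underline x^{FB},\bar x]$. By Proposition~\ref{Prop:FirstBest}\ref{it:d02}--\ref{it:d03}, $v$ is strictly increasing on the continuation region, so I can use $v$ itself as the independent variable. Set $q(v) \ceq v_x^2$. Since $dq/dv = 2 v_{xx}$, $q$ solves the \emph{linear} ODE
\begin{equation*}
  q'(v) + \frac{2}{\sigma^2\phi}\, q(v) = \frac{4r}{\sigma^2}\, v,
\end{equation*}
with $q(\uv)=0$ by smooth pasting. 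This has an explicit solution. With $k = 2/(\sigma^2\phi)$,
\begin{equation*}
  q(v) = 2r\phi\Big[\big(v-\tfrac{\sigma^2\phi}{2}\big) - \big(\uv - \tfrac{\sigma^2\phi}{2}\big)e^{-k(v-\uv)}\Big].
\end{equation*}
This formula yields the two objects I need.
\begin{itemize}
\item The threshold: $\bar x - \underline x^{FB} = \int_{\uv}^{b} q(v)^{-1/2}\,dv$.
\item Effort as a function of value: $a^{FB} = \sqrt{q(v)}/\phi$. Equivalently, effort at a point $x$ is read off by inverting $\bar x - x = \int_{v(x)}^{b} q^{-1/2}\,dv$.
\end{itemize}

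For part \ref{it:pcfb01}, $q$ does not depend on $b$, so a larger $b$ lengthens the integral and lowers $\underline x^{FB}$. Moreover, the map $x \mapsto v(x)$ for project 1 is the same trajectory as for project 2, shifted to the left. Hence $v_1(x) > v_2(x)$ on $(\underline x^{FB}_1,\bar x]$, and since $q' = 2v_{xx} > 0$, effort is strictly higher.

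For part \ref{it:pcfb04}, the threshold claim follows from $v_1 \le v_2$ pointwise, since costlier effort is dominated. For effort I would switch to the effort ODE obtained by differentiating the HJB, $r a = a a_x + \tfrac{\sigma^2}{2} a_{xx}$. This equation is independent of $\phi$. The initial data are $a(\underline x^{FB}) = 0$ and $a_x(\underline x^{FB}) = 2r\uv/(\sigma^2\phi)$, using $\tfrac{\sigma^2}{2}v_{xx} = r\uv$ at the threshold. In the phase plane $(a,a_x)$, trajectories of this autonomous system cannot cross. The $\phi_1$ trajectory starts at a lower slope, so it lies everywhere below the $\phi_2$ trajectory. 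As a result, project 1 needs more distance to reach any given effort level, and it also starts later ($\underline x^{FB}_1 > \underline x^{FB}_2$). Together these give $a_1 < a_2$ on $(\underline x^{FB}_2,\bar x]$.

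Parts \ref{it:pcfb02} and \ref{it:pcfb03} are where the main difficulty lies, because the ODE itself changes with the parameter.
\begin{itemize}
\item \emph{Thresholds.} I would differentiate the explicit $q$ in $r$ and in $\sigma$ under the integral sign. In $r$: $q$ is proportional to $r$ through the prefactor and $k$ does not involve $r$, so $q^{-1/2}$ shrinks, the integral shortens, and $\underline x^{FB}$ rises. In $\sigma$: I expect the bracket to shrink for $v > \uv$, which would show that the integral grows and $\underline x^{FB}$ falls. This needs checking through the sign of $\partial_\sigma$ of the bracket.
\item \emph{Effort for $r$.} I would use a single-crossing argument for the two effort curves. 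At the target both values equal $b$, and the higher-$r$ curve has the larger $q(b)$, hence higher effort near $\bar x$. Near project 2's threshold the ordering reverses: $a_1$ equals zero at $\underline x^{FB}_1$, which lies strictly inside project 2's continuation region, where $a_2 > 0$. To rule out multiple crossings, I would show that at any crossing point the difference $a_1 - a_2$ has a signed slope, using the two ODEs $r_i a = a a_x + \tfrac{\sigma^2}{2} a_{xx}$, or equivalently the monotone ordering of $q_1/q_2$ in $v$.
\item \emph{Effort for $\sigma$.} The statement allows a middle region, $x^\dagger_\sigma \le x^{\dagger\dagger}_\sigma$, so I only need to establish the sign of $a_1 - a_2$ near each end. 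Near project 1's threshold, $a_2 = 0$ below $\underline x^{FB}_2$ while $a_1 > 0$. Near the target, I would compare $q_1(b)$ with $q_2(b)$ using the explicit formula. Continuity then delivers the two thresholds.
\end{itemize}
I expect the sign computations for $\partial_\sigma q$, which involve the exponential term, to be the most delicate step.
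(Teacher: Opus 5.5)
\textbf{Overall.} Your level variable $q(v)=v_x^2$, with its explicit solution, is a genuinely different route from the paper. The paper proves this proposition with mimicking arguments and local expansions of the HJB equation at the endpoints, and uses the linearization only later, for resets. Much of your route goes through:
\begin{itemize}
\item part (i);
\item the threshold in (ii), since $q\propto r$ exactly;
\item the threshold and both endpoint effort comparisons in (iii). The sign you flagged does hold: with $c=\sigma^2\phi/2$ and $u=(v-\uv)/c$, the $c$-derivative of the bracket is $-(1-e^{-u}-ue^{-u})-\frac{\uv}{c}ue^{-u}<0$;
\item the effort ranking in (iv) via the $\phi$-free effort ODE, which needs only $\underline x^{FB}_1\ge\underline x^{FB}_2$.
\end{itemize}

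\textbf{Gap 1: strictness of the threshold in (iv).} The strict inequality $\underline x^{FB}_1>\underline x^{FB}_2$ is not established. The ranking $v_1\le v_2$ gives only $\underline x^{FB}_1\ge\underline x^{FB}_2$. The phase-plane comparison cannot upgrade it, because where each trajectory starts is pinned by value matching, $rb=\phi_i\big(\tfrac12 a_i^2+\tfrac{\sigma^2}{2}a_{i,x}\big)$ at $\bar x$. There the larger $\phi_1$ offsets the lower trajectory, so "needing more distance" says nothing about the starting point. With equal thresholds, $v_1$ and $v_2$ agree to third order, which is why the paper needs a fourth-order Taylor argument. Your own formula gives a cleaner fix. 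Write $q=\frac{4rc}{\sigma^2}m$ with $m=(v-\uv)+(\uv-c)(1-e^{-u})$. Then
\[
\partial_c(cm)=\uv(1-e^{-u}-ue^{-u})+c(u-2+2e^{-u}+ue^{-u})>0 ,
\]
so $q$ rises strictly in $\phi$ and $\int_{\uv}^{b}q^{-1/2}\,dv$ falls.

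\textbf{Gap 2: single crossing in (ii).} This is the more serious gap, since it is the crux of part (ii), and you give no working mechanism for it.
\begin{itemize}
\item At a crossing point, the effort ODEs $r_ia=aa_x+\frac{\sigma^2}{2}a_{xx}$ only relate $a_{1,xx}-a_{2,xx}$ to $a_{1,x}-a_{2,x}$. They impose no sign on $a_{1,x}-a_{2,x}$.
\item The ratio $q_1/q_2\equiv r_1/r_2$ is constant, so its ``monotone ordering in $v$'' carries no information.
\end{itemize}
A global property is needed. The paper sets $D=v_2-v_1$ and $T=r_1v_1-r_2v_2$. At zeros of $D_x$ one has $\frac{\sigma^2}{2}D_{xx}=-T$, and $T_x>0$ wherever $D_x\le 0$; together these rule out a second zero.

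Within your framework the argument is as follows. A crossing at $x$ means $q_1(v_1)=q_2(v_2)$, where $v_i=v_i(x)$. Since $q_1=\lambda q_2$ with $\lambda=r_1/r_2>1$, this forces $v_1<v_2$. Using $a_{i,x}=q_i'(v_i)/(2\phi)$,
\[
a_{1,x}-a_{2,x}=\frac{q_2(v_2)}{2\phi}\Big[(\ln q_2)'(v_1)-(\ln q_2)'(v_2)\Big].
\]
Hence every crossing is an up-crossing, and there is exactly one, provided $q$ is strictly log-concave in $v$. That property is true but must be proved:
\begin{itemize}
\item if $\uv\ge c$, $m$ is concave, hence strictly log-concave;
\item if $\uv<c$, set $\epsilon=1-\uv/c\in(0,1)$. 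Then $m'^2-mm''=1-\epsilon e^{-u}(2+u-\epsilon)>0$, because $\epsilon(2+u-\epsilon)<1+u\le e^u$.
\end{itemize}
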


Part (i) is intuitive: the higher the benefit, the more valuable the project.

Part (ii) states that a less patient principal terminates the project
earlier than a more patient one. Further, the less patient principal
asks for more effort if and only if the project is sufficiently close
to completion. This is because she values early completion more highly
than a more patient principal. Since the reward is realized only upon
completion, and additional effort is costly upfront, the net benefit
of accelerating completion outweighs its cost only when the project is
near the threshold.

Part (iii) shows that an increase in the volatility $\sigma$ lowers
the termination threshold. This is because completing the project is a
real option, whose value increases with volatility. This part also
shows that incentives for effort become stronger with volatility when
the project is far from completion, and weaker as it nears
completion. To see this, observe that when the project is near the
termination threshold, the downside is negligible since
$v_{x}(x) \approx 0$ and $v(x) \approx \underline v$. In contrast, the
upside is significant because $v_{xx}(\underline x) > 0$. The opposite
occurs as the project approaches the completion point $\bar x$.

Finally, part (iv) shows that higher effort costs lead to earlier
termination and lower effort. The intuition is direct: with effort
more expensive, less is exerted at every progress level and the
project is abandoned sooner as completing it would be more expensive.

\section{Unobservable Effort} \label{Sec:Analysis}

In this section, we begin the analysis of the optimal contracting
problem under moral hazard. The relevant state variables for the
problem are the project's current progress level and the agent's
continuation utility. We first note some necessary conditions for
incentive compatibility and use them to formulate the principal's
problem recursively. We then characterize the optimal contract.

\subsection{Local Incentive Compatibility}\label{Sec:LocalIC}

Given an incentive-compatible contract ($A,C,\tau)$, the agent's
continuation utility is defined by
\begin{equation*}
W_t \ceq \mathbb{E}^A_t \left[ \int_t^\infty e^{-r (s -t)} u(A_s,C_s) ds \right].
\end{equation*}
Its evolution takes the following familiar form.

\begin{lemma}[\citealp{sannikov2008continuous, he2011model}]\label{Lem:W}
  The agent's continuation utility under any incentive-compatible
  contract $(A,C,\tau)$ satisfies
  \begin{equation}\label{W-SDE}
    dW_t = \left( r W_t - u(A_t,C_t) \right) dt
    + \beta_t (-\eta r W_t)(dX_t - A_t dt)
  \end{equation}
  for some progressively measurable process $\beta$.
\end{lemma}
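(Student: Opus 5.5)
We would follow the standard martingale representation argument of \citet{sannikov2008continuous}. Fix an incentive-compatible contract $(A,C,\tau)$ and define the agent's total expected utility conditional on the information at time $t$,
\begin{equation*}
  V_t \ceq \mathbb{E}^A_t\left[\int_0^\infty e^{-rs}u(A_s,C_s)\,ds\right]
  = \int_0^t e^{-rs}u(A_s,C_s)\,ds + e^{-rt}W_t .
\end{equation*}
Since the contract gives the agent a finite payoff and $u<0$, the process $V$ is a martingale under the probability measure $\mathbb{P}^A$ induced by effort $A$; integrability follows from $\E^A[\int_0^\infty e^{-rs}|u|\,ds]<\infty$.

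Under $\mathbb{P}^A$, Girsanov's theorem implies that $Z^A_t \ceq (X_t - \int_0^t A_s\,ds)/\sigma$ is a standard Brownian motion. We would then argue that the filtration generated by $X$ coincides with that generated by $Z^A$. This is the usual weak-formulation point: we work on the canonical space of $X$, and under $\mathbb{P}^A$ the process $X$ is a Brownian motion with drift. We would invoke the martingale representation theorem for the Brownian filtration. This gives a progressively measurable process $Y$ with $\int_0^t Y_s^2\,ds<\infty$ a.s. and
\begin{equation*}
  dV_t = e^{-rt}Y_t\,\sigma\, dZ^A_t = e^{-rt}Y_t\,(dX_t - A_t\,dt).
\end{equation*}
Justifying the representation on the $X$-filtration, rather than on the unobserved $Z$-filtration, is the only delicate step. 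We expect it to be the main obstacle. We would handle it by citing the weak formulation used in \citet{sannikov2008continuous} and \citet{he2011model}.

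Next, we differentiate $e^{-rt}W_t = V_t - \int_0^t e^{-rs}u(A_s,C_s)\,ds$. This yields
\begin{equation*}
  d(e^{-rt}W_t) = e^{-rt}Y_t\,(dX_t - A_t\,dt) - e^{-rt}u(A_t,C_t)\,dt .
\end{equation*}
Multiplying through by $e^{rt}$ and using $d(e^{-rt}W_t) = e^{-rt}(dW_t - rW_t\,dt)$ gives
\begin{equation*}
  dW_t = \bigl(rW_t - u(A_t,C_t)\bigr)\,dt + Y_t\,(dX_t - A_t\,dt).
\end{equation*}

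Finally, we would reparametrize the sensitivity. Since $u<0$, we have $W_t<0$, so $-\eta r W_t>0$. We can therefore define $\beta_t \ceq Y_t/(-\eta r W_t)$, which is progressively measurable because $Y$ and $W$ are. Substituting this into the last display yields \eqref{W-SDE}.

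Incentive compatibility is used here only to identify $W$ with the on-path continuation value under $\mathbb{P}^A$. The result is therefore really a representation statement, which is the role Lemma~\ref{Lem:W} plays in the paper.
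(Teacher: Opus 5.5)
Your proposal is correct and is essentially the paper's argument. The paper omits this proof as standard, but its proof of Lemma~\ref{Lem:IcRisk} runs exactly your steps: the martingale $\int_0^t e^{-rs}u\,ds + e^{-rt}W_t$, martingale representation, matching drifts, substituting $\sigma\,dZ_t = dX_t - A_t\,dt$, and normalizing the integrand by $-\eta r W_t>0$.

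On the step you flag as delicate, you need not show that $X$ and $Z^A$ generate the same filtration, and for path-dependent $A$ this can fail in general. What you need instead is that $(\mathbb{P}^A,\mathcal F^X)$-martingales are stochastic integrals against $Z^A$. This follows from the predictable representation property of the Brownian filtration of $X$ under the reference measure, which is preserved under the locally equivalent Girsanov change of measure to $\mathbb{P}^A$.
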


The proof is standard and hence omitted. We have adopted
\possessivecite{he2011model} convention of normalizing the process
$\beta$ controlling the strength of incentives by $-\eta r W_t$. (Note
that $W_t < 0$ because of CARA utility.) This way, $\beta_t$ measures
directly the sensitivity of the agent's monetary compensation to his
performance.

The following lemma gives necessary conditions for incentive
compatibility.

\begin{lemma}[\citealp{he2011model}]\label{Lem:IcNoS}
  Let $(A,C,\tau)$ be an incentive-compatible contract and let $\beta$
  be the process from \eqref{W-SDE}. Then $r W_t = u(A_t,C_t)$ and
  thus the agent's continuation utility $W_t$ is a
  martingale. Moreover, the agent's consumption is given by
\begin{equation}\label{cons}
C_t = \frac{\phi}{2}A_t^2 + r\, CE(W_t),
\end{equation}
and effort is given by
\begin{equation}\label{IC}
A_t = \phi^{-1}\max\{\beta_t, 0\}.
\end{equation}
\end{lemma}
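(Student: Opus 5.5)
The plan is to derive all three claims from local deviations that are available in an incentive-compatible contract: a one-shot perturbation of consumption (hidden saving or borrowing), and a one-shot perturbation of effort. Throughout I use the representation \eqref{W-SDE} from Lemma~\ref{Lem:W}, together with the CARA structure. Under CARA, a deviation that raises the agent's wealth by $\Delta$ at time $t$ scales all his future utilities by $e^{-\eta r\Delta}$.

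\emph{Step 1: consumption Euler condition.} Fix $t$ and let the agent consume $C_s-\epsilon$ on $[t,t+dt]$. He saves the difference and from then on consumes $r$ times the accumulated extra savings. This keeps the no-Ponzi condition intact and multiplies continuation utility by $e^{-\eta r \epsilon\, dt}$. The first-order loss today is $u_c(A_t,C_t)\epsilon\,dt = -\eta u(A_t,C_t)\epsilon\, dt$. The gain is $\epsilon\,dt$ times the derivative of the continuation value with respect to wealth, which is $-\eta r W_t\,\epsilon\,dt$. Incentive compatibility requires zero net marginal gain for both signs of $\epsilon$, which gives $u(A_t,C_t)=rW_t$. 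Substituting into \eqref{W-SDE} removes the drift, so $dW_t=\beta_t(-\eta r W_t)\,dX^{A}$-innovation. Hence $W$ is a local martingale. It is bounded above by $0$, and I will upgrade it to a true martingale using its definition as a conditional expectation: $W_t$ plus the integral of discounted utility is a martingale, and the integrand now equals $rW_t$. Formula \eqref{cons} then follows by inverting $u(A_t,C_t)=rW_t$ with \eqref{carau} and \eqref{eq:CE}: $-\tfrac1\eta e^{-\eta(C_t-h(A_t))}=rW_t$ gives $C_t-h(A_t)=r\,CE(W_t)$.

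\emph{Step 2: effort condition.} Change effort to $\hat a$ on $[t,t+dt]$ and compensate consumption so that the utility scale is unchanged. Because utility is multiplicative, the cost shift $h(\hat a)-h(A_t)$ acts like a wealth loss whose marginal utility is $-\eta rW_t$. Equivalently, the instantaneous payoff is $u(\hat a,C_t)$ plus the drift contribution $\beta_t(-\eta rW_t)(\hat a-A_t)$ from \eqref{W-SDE}. Using $u_a=\eta h'(a)u$ and $u=rW_t$, pointwise optimality of $A_t$ reduces to maximizing
\begin{equation*}
(-\eta r W_t)\bigl(\beta_t a - h(a)\bigr)
\end{equation*}
over $a\ge 0$. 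Since $-\eta rW_t>0$ and $h$ is quadratic, the maximizer is $\phi^{-1}\max\{\beta_t,0\}$, which is \eqref{IC}.

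\emph{Main obstacle.} The hard part is rigor: turning these heuristic $dt$-perturbations into statements about genuine admissible deviations, and getting the true-martingale property without extra integrability assumptions. I would work with the agent's value under a deviation, $\tilde W_t e^{-\eta r S_t}$, and show that it is a local supermartingale for every admissible $(\hat A,\hat C)$ if and only if the drift conditions above hold. Necessity would come from building deviations on sets of positive $dP\times dt$ measure where a condition fails, localized by stopping times so the relevant processes stay bounded. For the martingale property I would use the boundedness $W\le0$ together with the identity $\int_t^\infty e^{-r(s-t)}rW_s\,ds$ having conditional expectation $W_t$. This follows the standard argument of \citet{he2011model}.
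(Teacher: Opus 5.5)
Your proposal is correct and follows essentially the same route as the paper: the Euler condition comes from the savings-scaling property $W_t(s)=e^{-\eta r s}W_t$ of Lemma~\ref{lem:AWE} combined with $u_c=-\eta u$, and the effort condition comes from a one-shot deviation whose effect on the drift is read off \eqref{W-SDE}. The only cosmetic difference is that you concentrate out consumption and maximize $(-\eta r W_t)(\beta_t a-h(a))$, whereas the paper holds $C_t$ fixed and maximizes $a\beta_t(-\eta r W_t)+u(a,C_t)$; once $u(A_t,C_t)=rW_t$, both problems are strictly concave and have the same first-order condition at $A_t$, which gives \eqref{IC}.
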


The above conditions follow readily from first-order conditions to the
agent's problem as in \cite{he2011model}. To sketch the argument,
suppose the agent has deviated under the incentive-compatible contract
$(A,C,\tau)$ and has accumulated private savings $S_t=s \in \R$ at
time $t$. Then his continuation utility, $W_t(s)$, depends only on the
public history of progress (as the contract conditions on it) and the
savings $s$. Using the properties of CARA utility, it is
straightforward to verify that $W_t(s)= e^{-\eta r s}W_t$, where
$W_t= W_t(0)$ is the continuation utility at this same progress
history under no savings (see Lemma~\ref{lem:AWE} in the Appendix). This is because under CARA, the optimal use of the savings
$s$ is to simply consume the interest $rs$ at every time instant in
perpetuity, and the corresponding constant part of the consumption
utility can be factored out of the agent's problem. This implies that,
holding effort fixed, the Euler equation for optimal time-$t$
consumption takes the form
\begin{equation*}
  u_c(A_t,C_t) = \frac{d}{ds}W_t(s) = -\eta r e^{-\eta r s}W_t.
\end{equation*}
Noting that $u_c(A_t,C_t) = - \eta u(A_t,C_t)$ by \eqref{carau} and
that $s=0$ by incentive compatibility, this Euler equation simplifies
to $u(A_t,C_t) = rW_t$, which is the first claim in
Lemma~\ref{Lem:IcNoS}. Using CARA utility and \eqref{eq:CE}, this is
seen to be equivalent to \eqref{cons}.

The effort condition \eqref{IC} in turn is simply the condition for
$A_t$ to be optimal in the agent's one-shot-deviation problem
\begin{equation*}
  \max_{a \geq 0} a \beta_t(-\eta r W_t) +u(a,C_t),
\end{equation*}
where the first term captures the expected impact of effort on
continuation utility and the second term captures the flow disutility
of effort (cf.\ \citealp{sannikov2008continuous}). If
$\beta_t \geq 0$, the first-order condition evaluated at $a=A_t$ gives
$\beta_t (\eta r W_t) = u_a(A_t,C_t)$. By CARA utility,
$u_a(A_t,C_t)= u_c(A_t,C_t)(-\phi A_t) = \eta r W_t \phi A_t$, where
the second equality follows by the consumption Euler evaluated at
$s=0$. Thus, $\beta_t = \phi A_t$. If instead $\beta_t<0$, then
$A_t=0$.

The upshot of Lemma~\ref{Lem:IcNoS} for the principal's problem is
that the level of the agent's continuation utility cannot play a role
in providing incentives. By \eqref{IC}, the sensitivity $\beta_t$ of
the agent's monetary compensation to performance pins down his effort
$A_t$ independent of $W_t$, as expected under CARA
preferences. Furthermore, because the agent can freely save and
borrow, toying with the timing of when $W_t$ is delivered does not
change what the agent does. One way to deliver $W_t$ is by using an
incentive-compatible contract under which the agent does not save or
borrow, in which case the payments to him are additively separable in
the cost of effort and the interest on the certainty equivalent of
$W_t$ as shown in equation \eqref{cons}. The agent's continuation
utility $W_t$ then enters the principal's value function as an
additive cost term; we verify this formally in the next section.

\subsection{Recursive Formulation}\label{Sec:Recursive}

We consider a relaxation of the second-best problem
\eqref{second-best} where incentive compatibility is replaced with the
local conditions from Lemma~\ref{Lem:IcNoS}. That is, at any time $t$,
given current progress $x \leq \bar x$ and continuation utility $w$,
we solve
\begin{equation}\label{sb-sequence}
  f(x,w) \ceq \max_{A, C, \tau}
  \mathbb{E}_{t,(x,w)}^{A}\bigg[ e^{-r (\tau-t)}
  \big(\mathbf{1}_{\{X_\tau \geq \bar{x}\}} b
  + \mathbf{1}_{\{X_{\tau} < \bar{x}\}} \underline v \big)
  - \int_{t}^{\infty} e^{-r (s-t)} C_s\, ds \bigg]
\end{equation}
subject to the laws of motion \eqref{progress} and \eqref{W-SDE} with
$(X_t,W_t)= (x,w)$ as well as the local incentive compatibility
conditions \eqref{cons} and \eqref{IC}. Then, by construction,
$f(0,CE^{-1}(0))$ is an upper bound on the principal's maximal payoff
in \eqref{second-best}.

To formulate the relaxed problem recursively, we guess that the value
function $f$ is additively separable with the following functional
form:
\begin{equation}\label{fdefn}
f(x,w) = \pi(x) - CE(w).
\end{equation}
We refer to $\pi(x)$ as the project value; the second term is the cost
of delivering the continuation utility $w$. With this guess, the HJB
equation for the relaxed problem is
\begin{equation*}
  rf(x,w) = \max_{a,c,\beta} \Big\{ -c + a f_x(x,w) 
    + \frac{1}{2} \sigma^2 f_{xx}(x,w)
    + \frac{1}{2} [\beta \eta r (-w) \sigma]^2 f_{ww}(x,w) \Big\},
\end{equation*}
because the drift of continuation utility is zero by
Lemma~\ref{Lem:IcNoS}, and $f_{xw} = 0$. Substituting for consumption
$c$ and sensitivity $\beta$ using \eqref{cons} and \eqref{IC} (with
$\beta = 0$ when $a = 0$, since $f_{ww} < 0$), and using the functional
form of $f$, this becomes
\begin{equation*}
  r \pi(x) - r\, CE(w) = \max_{a }
  \left\{ -\frac{\phi}{2} a^2 - r\, CE(w) + a
    \pi_x(x) + \frac{1}{2} \sigma^2 \pi_{xx}(x) - \frac{1}{2}
    \eta r \phi^2 a^2 \sigma^2 \right\},
\end{equation*}
which simplifies to
\begin{equation}
  r \pi(x) = \max_{a } \left\{ -\frac{\phihat}{2} a^2
    + a \pi_x(x) + \frac{1}{2} \sigma^2 \pi_{xx}(x) \right\},\label{prpr}
\end{equation}
where we have defined $\kappa \ceq 1 + \phi \eta r \sigma^2$ and
$\phihat \ceq \phi \kappa$.

Note that equation \eqref{prpr} only involves the project value $\pi$,
with progress the only state variable. By inspection of
\eqref{hjb:planner}, it is of the same form as the first-best HJB, the
only difference being the adjusted cost parameter $\hat \phi$. It
follows that here, too, the solution consists of a termination
threshold $\underline x$ and a value function $\pi$ which is
infinitely differentiable on the continuation region
$[\underline x, \bar x]$ and satisfies the HJB equation \eqref{prpr}
subject to value matching
\begin{equation}
\label{Val}
\pi(\bar x) = b, \quad \pi(\underline x) = \underline v,
\end{equation}
and smooth pasting
\begin{equation}
\label{past}
\pi_x(\underline x) = 0.
\end{equation}
This solution then defines the value $f$ of the relaxed problem
\eqref{sb-sequence} via \eqref{fdefn}.

We show next that the bound is tight: $f(0,CE^{-1}(0))= \pi(0)$ is the
value of the principal's problem \eqref{second-best}, and $\pi$ is the
optimal project value. To this end, define a contract $(A, C, \tau)$
from the solution to the HJB equation \eqref{prpr} subject to
\eqref{Val} and \eqref{past} as follows. Let $\tau$ be the first time
$X_t \notin (\ux, \bar x)$. For $t \leq \tau$, set
$A_t = \hat\phi^{-1}\pi_x(X_t)$ and $\beta_t = \phi A_t$, and
determine $C_t$ and $W_t$ via \eqref{W-SDE} and \eqref{cons}. For
$t > \tau$, set $A_t = 0$ and $C_t = r\, CE(W_\tau)$.

\begin{proposition}\label{Prop:BaseSuff}
  Let $(A,C,\tau)$ be the contract defined by the solution to
  \eqref{prpr} subject to \eqref{Val}--\eqref{past}. Then $(A,C,\tau)$
  is incentive compatible, and thus it is an optimal contract.
\end{proposition}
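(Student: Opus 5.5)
The argument has two parts. First we check that the contract delivers $\pi(0)$ to the principal, so that with incentive compatibility it attains the relaxed upper bound $f(0,CE^{-1}(0))$ and is optimal. Second, and mainly, we verify global incentive compatibility against arbitrary joint deviations $(\hat A,\hat C)$ in effort and hidden savings that satisfy the no-Ponzi condition.

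\textbf{Optimality given incentive compatibility.} Start $W$ at $W_0=CE^{-1}(0)$, so participation binds. By construction $W$ is a martingale with $dW_t=\beta_t(-\eta r W_t)\sigma\,dZ_t$ and $\beta_t=\phi A_t$ bounded, since $\pi_x$ is continuous on the compact set $[\underline x,\bar x]$. Hence $CE(W_t)$ follows an arithmetic process with drift $\tfrac12\eta r\phi^2A_t^2\sigma^2$.

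Apply It\^o's formula to $e^{-rt}(\pi(X_t)-CE(W_t))$ up to $\tau$. Using \eqref{prpr} and \eqref{cons}, the drift equals $-e^{-rt}C_t$ exactly. The stopping time $\tau$ is a.s.\ finite, because $X$ is a Brownian motion with nonnegative drift on a bounded interval, and all integrands are bounded. Optional stopping together with \eqref{Val} therefore gives the principal's payoff $\pi(0)-CE(W_0)=\pi(0)$. After $\tau$, paying $rCE(W_\tau)$ in perpetuity costs exactly $CE(W_\tau)$.

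\textbf{Global incentive compatibility.} Fix any deviation and let $S_t$ be the resulting savings. By the CARA structure (Lemma~\ref{lem:AWE}), the candidate continuation value at savings $s$ is $W_t e^{-\eta r s}$. Define the gain process
\[
G_t=\int_0^t e^{-ru}u(\hat A_u,\hat C_u)\,du+e^{-rt}W_te^{-\eta r S_t}.
\]
Under $\hat A$, apply It\^o's formula, using $dS_t=(rS_t+C_t-\hat C_t)dt$ and $dX_t=\hat A_tdt+\sigma dZ_t$. The drift of $G$ is $e^{-rt}$ times
\[
u(\hat a,\hat c)+W e^{-\eta r s}\big[-\eta r(C-\hat c)+\beta(-\eta r)(\hat a-A)\big]+\text{(It\^o correction)}.
\]
Pointwise optimization over $\hat c$ and $\hat a$ shows this drift is $\le 0$. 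The argument is the same as the one-shot problem in Lemma~\ref{Lem:IcNoS}, but now with the factor $e^{-\eta r s}$ throughout. Because $u(\hat a,\hat c)$ scales the same way in $\hat c$, the maximized value equals exactly the term that cancels $rWe^{-\eta rs}$, using $rW=u(A,C)$ and $\beta=\phi A$. The maximum is attained at $\hat c=C+rs$ and $\hat a=A$. So $G$ is a local supermartingale, and it is a local martingale under obedience with $S\equiv0$.

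\textbf{Main obstacle.} The hard part is passing from a local supermartingale to $\E[G_\infty]\le G_0=W_0$ without imposing integrability conditions on the deviation. I would localize with stopping times $\tau_n$ that reduce the local martingale part, and use $G_{t\wedge\tau_n}$ together with the fact that the integral term is negative. The remaining step is to show that $\liminf_n \E[e^{-r\tau_n}W_{\tau_n}e^{-\eta rS_{\tau_n}}]\ge 0$ can be discarded in the right direction. This terminal term is negative, so a plain Fatou argument goes the wrong way.

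I would first try the following. Since $W_t=W_\tau$ is constant after $\tau<\infty$, write $e^{-rt}W_te^{-\eta rS_t}=W_\tau e^{-rt}e^{-\eta r S_t}$. If $\limsup_t e^{-rt}e^{-\eta rS_t}>0$ on some event, then the agent is running down savings at an exponential rate. Such paths either violate $e^{-rt}S_t\to0$ or yield utility $-\infty$, because $\hat C$ must be large negative somewhere. A careful comparison argument on these paths, in the spirit of \citet{bloedel2023persistent}, would make this rigorous. Given this, the terminal term vanishes along a subsequence, monotone convergence handles the integral term, and we obtain $\E^{\hat A}[\int_0^\infty e^{-rt}u\,dt]\le W_0$, which is incentive compatibility.
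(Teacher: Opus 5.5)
Your architecture matches the paper's proof. The result is proved as the special case $g=\lambda_0=\lambda_r=0$ of Proposition~\ref{Prop:MartingaleLevy}; your $G$ is the paper's $Y$, and your drift computation is its Step~1. Two slips are harmless. There is no It\^o correction in $G$, since $S$ has finite variation. And the drift of $e^{-rt}(\pi(X_t)-CE(W_t))$ is $+e^{-rt}C_t$, so the martingale is $e^{-rt}(\pi(X_t)-CE(W_t))-\int_0^t e^{-rs}C_s\,ds$.

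The gap is in removing the localization, which is the entire difficulty. Write $J_t\ceq e^{-\eta r S_t}W_t$. Then $\E[G_{\tau_n}]\le W_0$ and $u<0$ give $\E[\int_0^\infty e^{-rt}u(\hat A_t,\hat C_t)\,dt]\le W_0+\E[e^{-r\tau_n}|J_{\tau_n}|]$. So you need $\E[e^{-r\tau_n}|J_{\tau_n}|]\to0$ along a subsequence, i.e., convergence in $L^1$. Your sketch has three problems.
\begin{enumerate}
\item It targets pathwise vanishing of the terminal term. That fails for exactly the reason you gave about Fatou: a negative term tending to zero almost surely can keep its expectation bounded away from zero.
\item It treats only $t\to\infty$ after $\tau$, where $W$ is frozen. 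The localizing times can stop before $\tau$, where $W$ is still stochastic and $|J|$ may be large on small-probability events.
\item The pathwise claim does not follow from no-Ponzi. Linear borrowing such as $S_t=-t/\eta$ satisfies $e^{-rt}S_t\to0$ yet gives $e^{-rt}e^{-\eta rS_t}\equiv1$. What rules such behavior out is that it is paid for with very low utility. Turning that into a usable estimate is exactly the step you deferred to ``a careful comparison argument.''
\end{enumerate}

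The missing idea is the paper's Step~2: an a priori bound on the deviation's continuation payoff in terms of $J$. For every stopping time $\rho$,
\[
\E\Big[\int_\rho^\infty e^{-r(s-\rho)}u(\hat A_s,\hat C_s)\,ds\,\Big|\,\mathcal F_\rho\Big]\le \frac{J_\rho}{K},\qquad K\ceq e^{\eta H},\quad H\ceq\tfrac12\eta r\sigma^2\|\beta\|_\infty^2 .
\]
It is derived from the wealth equivalent $V_t\ceq S_t+CE(W_t)$, which satisfies $dV_t=(rV_t-\xi_t+D_t)\,dt+\sigma\beta_t\,dZ_t$, with $\xi_t\ceq\hat C_t-\frac{\phi}{2}\hat A_t^2$ and $D_t\le H$ by the effort condition \eqref{IC}. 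The steps are:
\begin{enumerate}
\item Integrate against $e^{-r(s-\rho)}$.
\item Remove $e^{-r(T-\rho)}V_T$ using no-Ponzi and the constancy of $CE(W)$ after $\tau$.
\item Apply Jensen to the exponential utility.
\item Apply conditional Jensen to $e^{-\eta r\mathcal B_\rho}$, where $\mathcal B_\rho\ceq\sigma\int_\rho^\infty e^{-r(s-\rho)}\beta_s\,dZ_s$ has conditional mean zero because $\beta$ is bounded.
\end{enumerate}
This yields $\E[e^{-r\rho_n}|J_{\rho_n}|]\le K\,\E[\int_{\rho_n}^\infty e^{-rs}|u(\hat A_s,\hat C_s)|\,ds]\to0$ by dominated convergence. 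That step uses the fact that only deviations with finite expected payoff need be considered. The bound holds at every stopping time, so it also works for your localizing sequence, provided it tends to infinity. Without this bound, your argument does not establish global incentive compatibility.
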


Proposition~\ref{Prop:BaseSuff} is a special case of
Proposition~\ref{Prop:MartingaleLevy} in Section~\ref{Sec:DisModel}.

\subsection{The Optimal Contract} By Proposition~\ref{Prop:BaseSuff},
an optimal contract can be found by solving the auxiliary planner's
problem \eqref{prpr}. This problem differs from the first-best problem
in Section~\ref{Sec:Planner} only in that the cost parameter $\phi$ is
replaced by the incentive-adjusted cost parameter
$\hat \phi = \kappa \phi > \phi$, where
$\kappa = 1+ \phi \eta r \sigma^2$.\footnote{The reduction to an
  auxiliary planner's problem is not specific to quadratic costs. For
  a general effort cost $h$, the sensitivity required to induce effort
  $a$ is $\beta = h_a(a)$, and the incentive-adjusted cost is
  $\hat h(a) \ceq h(a) + \frac{\eta r \sigma^2}{2} h_a(a)^2$ (cf.
  \citealp[eq.~(13)]{he2011model}), which reduces to the
  multiplicative form $\hat h = \kappa h$ when $h$ is quadratic. Our
  results carry over to a general $h$ as long as $\hat h$ is more
  convex than $h$ in the sense that $\hat h_{aa} \geq h_{aa}$, which
  is ensured by $h_{aaa} \geq 0$ (see footnote~\ref{fn:cost}).} The
adjustment is a risk premium. Inducing effort $a$ requires tying the
agent's compensation to progress with sensitivity $\beta = \phi a$
(Lemma~\ref{Lem:IcNoS}). Compensating the CARA agent for the resulting
flow variance $\beta^2 \sigma^2$ costs
$\frac{\eta r}{2} \beta^2 \sigma^2$ per unit of time; the coefficient
of risk aversion is $\eta r$ rather than $\eta$ because the agent
spreads any change in wealth over his future consumption.\footnote{See
  Lemma~\ref{lem:AWE} in the Appendix.} Hence, the incentive-adjusted
cost
\begin{equation*}
  \frac{\phihat}{2} a^2 =\frac{\phi}{2} a^2 + \frac{\eta r}{2} \beta^2 \sigma^2
\end{equation*}
is simply the cost of effort plus the risk premium for the exposure
needed to induce it. The upshot is that the effect of dynamic moral
hazard on effort and the principal's value has thus been reduced to a
comparative static with respect to $\phi$ in the first-best
problem. (The agent's compensation and consumption of course differ
qualitatively from the first best as he now has skin in the game.)

The first-order condition to problem
\eqref{prpr} gives the optimal effort
\begin{equation*}
a(x) = \phihat^{-1} \pi_x(x).
\end{equation*}
By Lemma~\ref{Lem:IcNoS}, the pay-performance sensitivity required to induce it is
\begin{equation}\label{eq:beta}
  \beta(x) = \phi a(x) = \frac{\pi_x(x)}{1+\phi \eta r \sigma^2}.
\end{equation}
Thus, at each progress level, incentives are exactly as in the static
linear-CARA-normal model of \citet{holmstrom1987aggregation} for a
project with marginal product $\pi_x(x)$ and an agent with
risk-aversion coefficient $\eta r$. Substituting the first-order
condition back into the HJB equation \eqref{prpr} yields a
second-order ODE for the project value $\pi$:
\begin{equation}\label{eq:ODE}
r \pi(x) = \frac{1}{2\phihat} \pi_x(x)^2 + \frac{1}{2} \sigma^2 \pi_{xx}(x).
\end{equation}
The problem thus boils down to finding a termination threshold
$\underline x$ and a project value $\pi$ that solve this ODE subject
to value matching \eqref{Val} and smooth pasting \eqref{past}.

The following proposition summarizes the optimal contract.

\begin{proposition}\label{Prop:SecondBest}
  Let $\pi$, $a$, and $\ux$ be the project value, effort policy, and
  termination threshold under the optimal contract, and let $\beta$ be
  the pay-performance sensitivity. In the continuation region
  $[\ux, \bar x]$, the following properties hold:
  \begin{enumerate}[label=(\roman*)]
  \item\label{it:s01} Value is increasing in progress:
    $\pi_x(x) \geq 0$ with strict inequality if and only if $x > \ux$.
  \item\label{it:s02} Value is convex in progress:
    $\pi_{xx} > 0$.
  \item\label{it:s03} Effort and incentives are increasing in
    progress: $a_x > 0$ and $\beta_x > 0$.
  \item\label{it:s04} Effort grows in expectation at rate $r$: the
    discounted effort process $e^{-rt} a(X_t)$ is a martingale up to
    the second-best stopping time $\tau$.
  \end{enumerate}
\end{proposition}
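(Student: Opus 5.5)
The plan is to reduce Proposition~\ref{Prop:SecondBest} to Proposition~\ref{Prop:FirstBest}. The auxiliary problem \eqref{prpr} with \eqref{Val}--\eqref{past} is exactly the first-best problem \eqref{hjb:planner} with $\phi$ replaced by $\phihat=\kappa\phi$; all other primitives ($b$, $\underline v$, $\bar x$, $\sigma$, $r$) are unchanged. Hence $\pi$ is the first-best value function of a fictitious project with cost parameter $\phihat$, $\ux$ is its termination threshold, and $a(x)=\phihat^{-1}\pi_x(x)$ is its first-best effort. Parts~\ref{it:s01}, \ref{it:s02}, the claim $a_x>0$ in~\ref{it:s03}, and~\ref{it:s04} then follow verbatim from Proposition~\ref{Prop:FirstBest}\ref{it:d02}--\ref{it:d05} applied to that fictitious project. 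For~\ref{it:s04}, the second-best stopping time $\tau$ (first exit from $(\ux,\bar x)$) coincides with the first-best stopping time of the fictitious project, and progress under the contract has drift $a(X_t)$, which is precisely the controlled dynamics there. The remaining claim $\beta_x>0$ follows from \eqref{eq:beta}: $\beta=\pi_x/\kappa$, so $\beta_x=\pi_{xx}/\kappa>0$ by~\ref{it:s02}.

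Since Proposition~\ref{Prop:FirstBest} is stated without proof, I would also give direct arguments as a backstop, working with the ODE \eqref{eq:ODE}.

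For~\ref{it:s02}, evaluating \eqref{eq:ODE} at $\ux$ with $\pi(\ux)=\underline v>0$ and $\pi_x(\ux)=0$ gives $\pi_{xx}(\ux)=2r\underline v/\sigma^2>0$. Differentiating \eqref{eq:ODE} gives
\begin{equation*}
r\pi_x=\phihat^{-1}\pi_x\pi_{xx}+\tfrac12\sigma^2\pi_{xxx}.
\end{equation*}
Suppose $x_0$ is the first point where $\pi_{xx}=0$. There $\pi_{xxx}(x_0)=2r\pi_x(x_0)/\sigma^2>0$. But $\pi_{xx}$ is positive to the left of $x_0$ and hits zero at $x_0$, so $\pi_{xxx}(x_0)\le 0$, a contradiction. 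Hence $\pi_{xx}>0$ throughout, and~\ref{it:s01} follows by integrating from $\pi_x(\ux)=0$.

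For~\ref{it:s04}, apply It\^o's formula to $e^{-rt}\pi_x(X_t)$ with $dX_t=a(X_t)\,dt+\sigma\,dZ_t$. The drift is
\begin{equation*}
-r\pi_x+\phihat^{-1}\pi_x\pi_{xx}+\tfrac12\sigma^2\pi_{xxx},
\end{equation*}
which vanishes by the differentiated ODE. The resulting local martingale is stopped at $\tau$. On $[\ux,\bar x]$, $\pi_x$ and $\pi_{xx}$ are bounded and $\tau$ has finite expectation, so it is a true martingale; dividing by $\phihat$ transfers this to $e^{-rt}a(X_t)$.

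The only delicate step is existence and regularity: we need a pair $(\ux,\pi)$ solving \eqref{eq:ODE} with \eqref{Val}--\eqref{past} such that $\pi$ is smooth up to the boundary. I would take this as given, exactly as the paper does when invoking the first-best solution of \citet{georgiadis2015projects} with $\phi$ replaced by $\phihat$.
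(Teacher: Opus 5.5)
Your proposal is correct and follows the paper's proof, which likewise observes that \eqref{prpr} is the first-best problem with $\phihat$ in place of $\phi$, so parts (i), (ii), and (iv) follow from Proposition~\ref{Prop:FirstBest} and part (iii) from $a=\phihat^{-1}\pi_x$ and $\beta=\phi a$. Proposition~\ref{Prop:FirstBest} is in fact proved in the Appendix, so your backstop is not needed, but it is valid. Your order (convexity first, with $\pi_x(x_0)=\int_{\ux}^{x_0}\pi_{xx}>0$ at the first zero $x_0$ of $\pi_{xx}$, then monotonicity) is slightly more economical than the paper's.
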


Since the optimal effort and project value are derived from the
auxiliary planner's problem, the result and the intuition are
analogous to Proposition~\ref{Prop:FirstBest}.

\paragraph{A Closed-Form Example} When $\uv = \phihat \sigma^2/2$, the
project value ODE can be solved in closed form to illustrate
Proposition~\ref{Prop:SecondBest}. In this case, value is quadratic in
progress, with
\begin{equation*}
  \pi(x) = \uv + \frac{r \phihat}{2} (x - \ux)^2, \qquad
  \ux = \bar x - \sqrt{\frac{2(b - \uv)}{r \phihat}}.
\end{equation*}
Effort and incentives are linear in the distance from the termination
threshold:
\begin{equation*}
  a(x) = r (x - \ux), \qquad \beta(x) = \phi r (x - \ux).
\end{equation*}
Since $a(X_t) = r(X_t - \ux)$, effort has drift $r\, a(X_t)$ along the
path. \hfill $\square$

\bigskip

The next proposition provides comparative statics of the optimal
contract.

\begin{proposition}\label{Prop:comp}
  Consider two projects, labeled 1 and 2, that differ in exactly one
  parameter.
\begin{enumerate}[label=(\roman*)]
\item\label{it:pc01} If $b_1 > b_2$, then $\ux_1 < \ux_2$ and
  $a_1(x) > a_2(x)$ for all $x \in (\ux_1, \bar x]$.
\item\label{it:pc02} If $\eta_1 > \eta_2$, then $\ux_1 > \ux_2$ and
  $a_1(x) < a_2(x)$ for all $x \in (\ux_2, \bar x]$.
\item\label{it:pc03} If $\phi_1 > \phi_2$, then $\ux_1 > \ux_2$ and
  $a_1(x) < a_2(x)$ for all $x \in (\ux_2, \bar x]$.
\item\label{it:pc04} If $r_1 > r_2$, then $\ux_1 > \ux_2$, and there
  exists $x^{\ddagger} \in (\ux_1, \bar x]$ such that $a_1(x) < a_2(x)$
  for all $x \in (\ux_2, x^{\ddagger})$ and $a_1(x) > a_2(x)$ for all
  $x \in (x^{\ddagger}, \bar x]$.
\end{enumerate}
\end{proposition}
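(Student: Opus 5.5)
The plan is to reduce every part of the statement to the first-best comparative statics in Proposition~\ref{Prop:compfb}, using the fact that the second-best problem \eqref{prpr} is the first-best problem \eqref{hjb:planner} with $\phi$ replaced by $\phihat=\phi(1+\phi\eta r\sigma^2)$. Effort is $a(x)=\phihat^{-1}\pi_x(x)$, which is exactly the first-best effort policy of a planner facing cost parameter $\phihat$. So $\pi$, $a$, and $\ux$ for a given parameter vector equal $v$, $a^{FB}$, and $\underline x^{FB}$ evaluated at $(b,\bar x,\sigma,\uv,r,\phihat)$. Each part then asks how the auxiliary parameters move when one primitive changes.

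\emph{Parts (i)--(iii).} For part (i), $b$ does not enter $\phihat$, so the claim is Proposition~\ref{Prop:compfb}\ref{it:pcfb01} applied to the auxiliary problem. For part (ii), $\eta$ enters only through $\phihat$, and $\phihat$ is strictly increasing in $\eta$. Hence $\eta_1>\eta_2$ gives $\phihat_1>\phihat_2$, and Proposition~\ref{Prop:compfb}\ref{it:pcfb04} yields $\ux_1>\ux_2$ and $a_1<a_2$ on $(\ux_2,\bar x]$. Part (iii) follows the same way, since $\phihat=\phi+\phi^2\eta r\sigma^2$ is strictly increasing in $\phi$. Again $\phi$ enters only through $\phihat$, because the policy $a=\phihat^{-1}\pi_x$ depends on $\phi$ only via $\phihat$.

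\emph{Part (iv), main obstacle.} Here $r$ enters twice: as the discount rate and inside $\phihat$. So an increase in $r$ is a first-best change in $r$ (Proposition~\ref{Prop:compfb}\ref{it:pcfb02}) combined with an increase in $\phihat$ (Proposition~\ref{Prop:compfb}\ref{it:pcfb04}). For the threshold, I would pass through the intermediate project with parameters $(r_1,\phihat_2)$. Going from $(r_2,\phihat_2)$ to $(r_1,\phihat_2)$ raises the threshold by \ref{it:pcfb02}, and going from $(r_1,\phihat_2)$ to $(r_1,\phihat_1)$ raises it further by \ref{it:pcfb04}, so $\ux_1>\ux_2$.

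The effort ranking cannot be obtained by chaining, because the two effects conflict near $\bar x$. I would instead argue directly with the ODE \eqref{eq:ODE} via a single-crossing argument on $\pi_{1,x}/\phihat_1$ versus $\pi_{2,x}/\phihat_2$, adapting whatever argument underlies \ref{it:pcfb02}. The steps are:
\begin{enumerate}[label=(\alph*)]
\item On $(\ux_2,\ux_1]$ we have $a_1=0<a_2$, since project 1 is stopped there; so the ranking starts with $a_1<a_2$.
\item Normalize $p_i=\pi_i$ and write the ODE as $\sigma^2 p_{i,xx}/2=r_i p_i - p_{i,x}^2/(2\phihat_i)$.
\item At any crossing point $x_0$ where $a_1=a_2$, compare slopes. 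Since $a_i'=\phihat_i^{-1}\pi_{i,xx}$, the ODE gives
\[
\tfrac{\sigma^2}{2}\bigl(a_1'-a_2'\bigr)=\tfrac{r_1\pi_1}{\phihat_1}-\tfrac{r_2\pi_2}{\phihat_2}
\]
at $x_0$, because the squared-slope terms cancel when $a_1=a_2$. A crossing is therefore upward precisely when $r_1\pi_1/\phihat_1>r_2\pi_2/\phihat_2$.
\item To show there is at most one crossing and that it is upward, use the first-best martingale structure from Proposition~\ref{Prop:FirstBest}\ref{it:d05}. Equivalently, treat $\pi_i$ as a function of $a_i$ along the solution and show the comparison in (c) is monotone in $x$ beyond the first crossing.
\end{enumerate}
Existence of $x^\ddagger$ is not proved here: it requires $a_1>a_2$ near $\bar x$, which I would try to obtain by comparing $\pi_{i,x}(\bar x)$ through the boundary condition $\pi_i(\bar x)=b$ and the ODE at $\bar x$. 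If that comparison fails, the fallback is to allow $x^\ddagger=\bar x$, which the statement permits; then only $a_1<a_2$ on $(\ux_2,\bar x)$ would need to be shown. Step (d) and the comparison at $\bar x$ are the delicate parts, since the $\phihat$ effect pushes toward lower effort everywhere.
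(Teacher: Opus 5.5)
You get parts (i)--(iii) exactly as the paper does. The threshold comparison in part (iv), chaining through the auxiliary problem with parameters $(r_1,\phihat_2)$, also matches. Your crossing identity in step (c) is correct too. In the paper's normalization $\theta_i \ceq \pi_i/\phihat_i$, it reads $\frac{\sigma^2}{2}(a_1-a_2)_x = T$ wherever $a_1=a_2$, with $T \ceq r_1\theta_1-r_2\theta_2$. This normalization is more useful than $p_i=\pi_i$, because each $\theta_i$ solves the first-best HJB with $\phi=1$ and $a_i=\theta_{i,x}$.

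\textbf{The gap is step (d).} It carries the entire effort claim, and you leave it as a pointer to the martingale property rather than an argument. The missing idea is to differentiate the quantity you already found: $T_x = r_1a_1-r_2a_2 \geq (r_1-r_2)a_1>0$ wherever $a_1\geq a_2$ on $(\ux_1,\bar x]$. The single-crossing argument then runs as follows.
\begin{enumerate}
\item Let $x^\ddagger$ be the first point of $(\ux_1,\bar x]$ with $a_1\geq a_2$. It exceeds $\ux_1$ because $a_1(\ux_1)=0<a_2(\ux_1)$.
\item Since $a_1$ arrives from below, $(a_1-a_2)_x(x^\ddagger)\geq 0$, i.e., $T(x^\ddagger)\geq 0$.
\item Equality is ruled out by the differentiated ODE $r_ia_i = a_ia_{i,x}+\frac{\sigma^2}{2}a_{i,xx}$. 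This identity lies behind Proposition~\ref{Prop:FirstBest}\ref{it:d05}, and it is the only place that structure enters. At a tangency it gives $\frac{\sigma^2}{2}(a_1-a_2)_{xx}=(r_1-r_2)a_1>0$. Then $a_1>a_2$ on both sides of $x^\ddagger$, contradicting $a_1<a_2$ to its left.
\item Hence $T(x^\ddagger)>0$, and $a_1>a_2$ just to the right of $x^\ddagger$.
\item Suppose $a_1-a_2$ first returned to zero at some $x_1>x^\ddagger$. Then $T$ would be increasing on $[x^\ddagger,x_1]$, so $T(x_1)>0$ and $(a_1-a_2)_x(x_1)>0$. That is impossible when arriving from above.
\end{enumerate}
This is the argument of Proposition~\ref{Prop:compfb}\ref{it:pcfb02} applied to $D\ceq\theta_2-\theta_1$, which is what the paper invokes.

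\textbf{Your treatment of existence is also misdirected.} Do not try to show $a_1>a_2$ near $\bar x$ from the boundary condition at $\bar x$. That ranking depends on parameters: the effect of $\phihat_1>\phihat_2$ can prevail all the way to the target, and the paper notes the higher discount rate yields either less or more effort there. Your fallback is not an independent route either, since $a_1<a_2$ on all of $(\ux_2,\bar x)$ fails whenever a crossing exists. No case needs to be resolved in advance. Set $x^\ddagger\ceq\bar x$ when $a_1<a_2$ throughout $(\ux_1,\bar x]$; the single-crossing property then delivers the statement in both cases. This is how the paper defines $x^\ddagger$.
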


The benefit $b$ affects the optimal contract as it does the first
best. Risk aversion $\eta$ and the cost parameter $\phi$ enter the
second-best problem only through the incentive-adjusted cost parameter
$\phihat$, which is increasing in each, so a more risk-averse agent is
treated like one with a higher cost of effort: the principal tolerates
less failure and induces less effort. The discount rate $r$ enters
twice, as the rate proper and through $\phihat$. Both channels make
the principal less tolerant of failure, but they conflict for effort:
impatience raises effort near the target
(Proposition~\ref{Prop:compfb}\ref{it:pcfb02}), whereas higher
$\phihat$ lowers it everywhere, so the contract with the higher
discount rate implements less effort near the threshold and either
less or more near the target. Volatility is absent from the
proposition because it, too, enters twice, and the two channels
conflict already for the threshold: a higher $\sigma$ raises the
option value of continuing but also the cost of incentives, and either
may dominate.

\begin{figure}[t]
\centering
    \includegraphics[width=.48\linewidth]{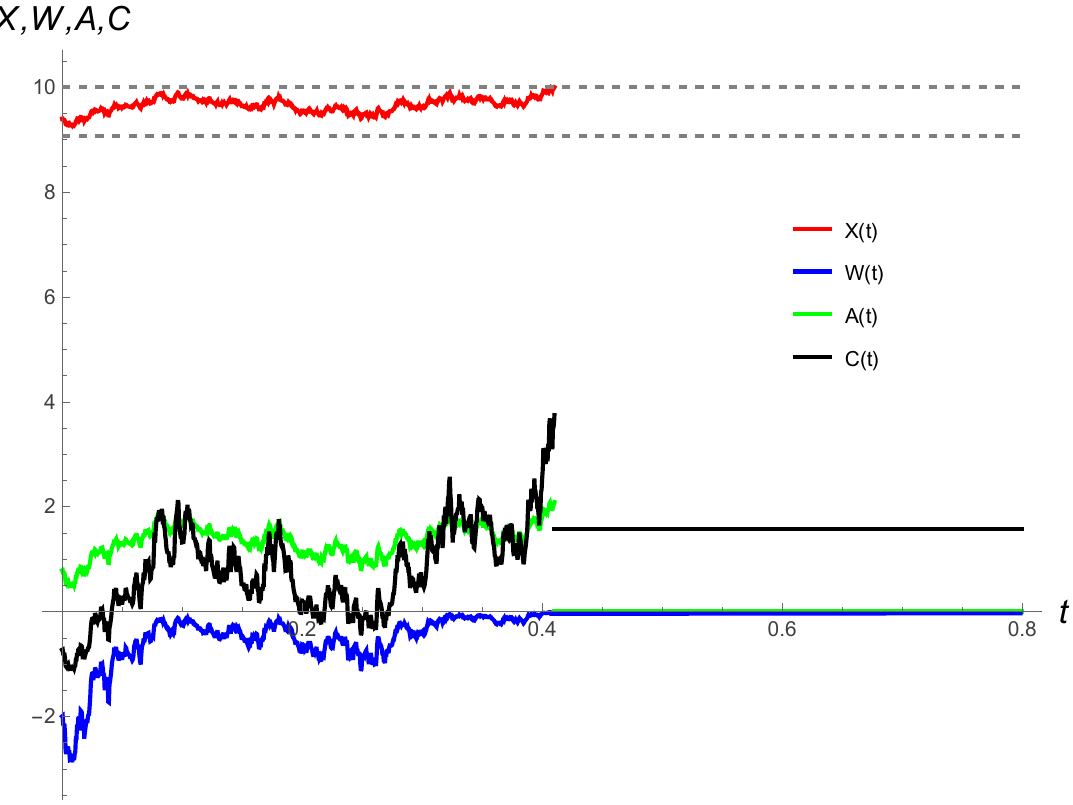}\quad
    \includegraphics[width=.48\linewidth]{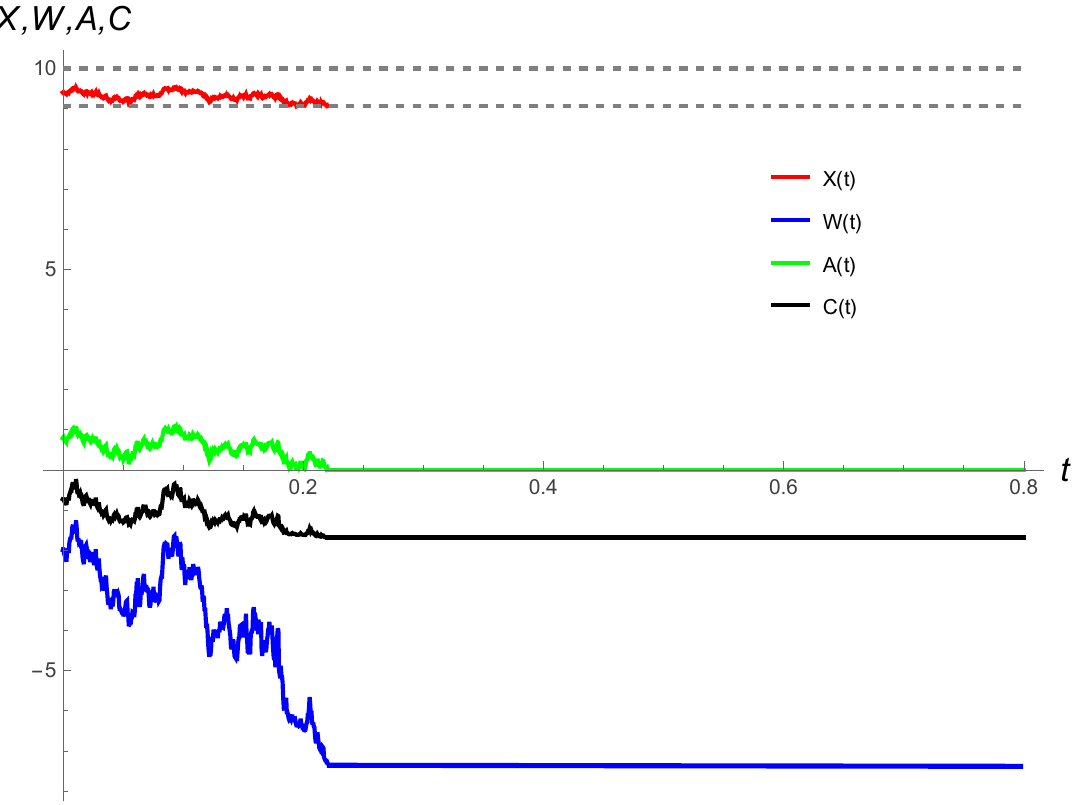}
\caption{Sample paths for a completed project (left panel) and a terminated project (right panel). Parameter values: $(b,\bar x,\sigma,\underline v,r,\eta,\phi,X_0 )=(8,10,1,3,2,2,1,9.4)$. }\label{fig:Sample}
\end{figure}

Figure~\ref{fig:Sample} plots two sample paths of progress under the
optimal contract, along with the agent's continuation utility, effort,
and consumption (which by incentive compatibility coincides with
compensation).  In the left panel the project is completed; in the
right panel it is terminated early. Along the successful path, effort
and consumption rise as progress nears the target. At completion,
effort drops to zero and consumption drops by the cost of effort and
stays constant thereafter. This downward jump does not arise in the
unsuccessful project, since effort converges to zero as the
termination threshold approaches, so that consumption converges
continuously to its post-termination level.

Note that, as the project evolves, the optimal contract remains
renegotiation-proof: the principal's full value
$f(X_t,W_t) =\pi(X_t) - CE(W_t)$ is decreasing in the
agent's continuation value $W_t$ at every $(X_t,W_t)$, so at no
history can both parties gain by replacing the contract (cf.\
\citealp{fudenberg1990short}).

\begin{figure}[t]
\centering
\begin{minipage}[c]{.48\linewidth}
\centering
\includegraphics[width=\linewidth]{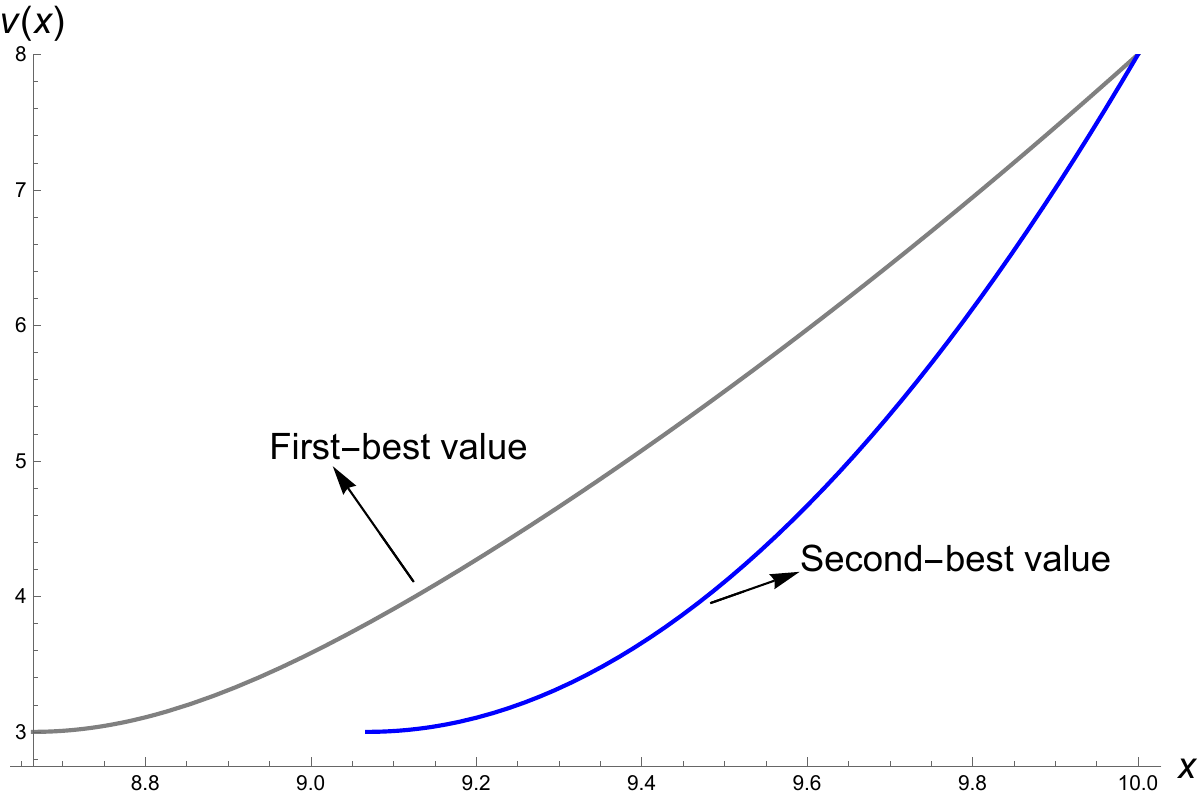}
\end{minipage}\quad
\begin{minipage}[c]{.48\linewidth}
\centering
\includegraphics[width=\linewidth]{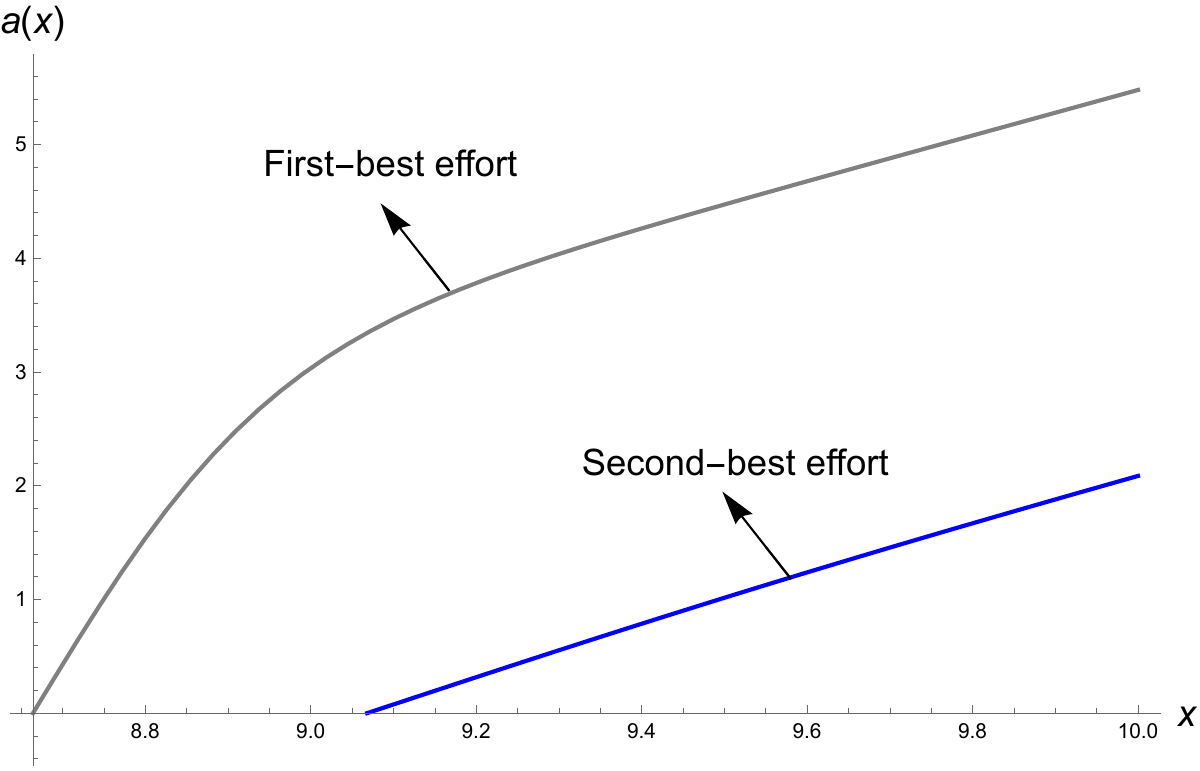}
\end{minipage}
\caption{Second-best value function (left) and effort profile (right) compared to the first best. Parameter values: $(b,\bar x,\sigma,\underline v,r,\eta,\phi )=(8,10,1,3,2,2,1)$. \label{fig:Principal}}
\end{figure}

We then compare the optimal contract to the first best.

\begin{proposition}\label{Prop:Principal}
  Let $\pi$, $\ux$, and $a$ be the project value, termination
  threshold, and effort policy under the optimal contract, and let
  $v$, $\ux^{FB}$, and $a^{FB}$ be their first-best counterparts.
\begin{enumerate}[label=(\roman*)]
\item\label{it:p01} Value is lower under moral hazard, with the
  difference vanishing with the risk premium:
  \begin{equation*}
    0 < v(x) - \pi(x) \leq \phi \eta r \sigma^2\,(b - v(x)) \quad
    \text{for all } x \in (\ux^{FB}, \bar x).
  \end{equation*}
\item\label{it:p02} Moral hazard makes the principal less tolerant of
  failure: $\ux > \ux^{FB}$.
\item\label{it:p03} Effort is uniformly lower under moral hazard:
  $a(x) < a^{FB}(x)$ for all $x \in [\ux, \bar x]$.
\end{enumerate}
\end{proposition}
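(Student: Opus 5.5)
The plan is to treat $\pi$ as the value of the first-best problem \eqref{first-best} with cost parameter $\phihat=\kappa\phi$ in place of $\phi$. Every claim then becomes a comparative static in the cost parameter. Parts \ref{it:p02} and \ref{it:p03} follow directly from Proposition~\ref{Prop:compfb}\ref{it:pcfb04} applied with $\phi_1=\phihat>\phi_2=\phi$. That result gives $\ux>\ux^{FB}$ and $a(x)<a^{FB}(x)$ on $(\ux^{FB},\bar x]$ for the auxiliary-planner effort $\phihat^{-1}\pi_x$, which is exactly the second-best effort. Since $[\ux,\bar x]\subset(\ux^{FB},\bar x]$, the inequality holds on all of $[\ux,\bar x]$. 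The endpoint $x=\bar x$ is included in that proposition. If that inclusion were in doubt, I would check it separately by comparing $\pi_x(\bar x)$ and $v_x(\bar x)$ via the ODE at $\bar x$ together with the lower value $\pi<v$ just inside $\bar x$.

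For part \ref{it:p01}, I would use a scaling argument on the policy rather than on the ODE. For the lower bound $v>\pi$, take the second-best policy $(a,\tau)$ and run it in the first-best problem. It yields the same terminal payoff and a strictly smaller cost, because $\phi<\phihat$ and $a>0$ on a set of positive measure before $\tau$ whenever $x>\ux$. This gives $v(x)>\pi(x)$ for $x\in(\ux,\bar x)$. On $(\ux^{FB},\ux]$ we have $\pi(x)=\uv<v(x)$ by Proposition~\ref{Prop:FirstBest}\ref{it:d02}.

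For the upper bound, take the first-best optimal policy $(a^{FB},\tau^{FB})$ from $x$ and feed the agent the scaled-down effort $a^{FB}/\kappa$ in the auxiliary problem. The direct route of using the same effort fails because the cost rises by the factor $\kappa$, which is not bounded by $b-v$. Instead I would argue through the ODE and a comparison principle. Define $\tilde\pi := v-(\kappa-1)(b-v)=\kappa v-(\kappa-1)b$. Then $\tilde\pi_x=\kappa v_x$ and $\tilde\pi_{xx}=\kappa v_{xx}$. Plugging into \eqref{eq:ODE}:
\[
\frac{1}{2\phihat}\tilde\pi_x^2+\frac{\sigma^2}{2}\tilde\pi_{xx}
=\kappa\Big(\frac{1}{2\phi}v_x^2+\frac{\sigma^2}{2}v_{xx}\Big)
=\kappa r v
=r\tilde\pi+r(\kappa-1)b
\ge r\tilde\pi .
\]
So $\tilde\pi$ is a subsolution of the auxiliary HJB on $[\ux^{FB},\bar x]$. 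It satisfies $\tilde\pi(\bar x)=b$ and, at the lower end, $\tilde\pi(\ux^{FB})=\kappa\uv-(\kappa-1)b<\uv$.

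By a verification argument, $\tilde\pi\le\pi$. Concretely, apply Itô's formula to $e^{-rt}\tilde\pi(X_t)$ under the feedback effort $\tilde\pi_x/\phihat=v_x/\phi=a^{FB}$ and stop at the exit of $(\ux^{FB},\bar x)$. The discounted value plus running cost is a submartingale, and at exit the payoff is at least $\tilde\pi$ (namely $b$ at the top, and $\uv>\tilde\pi$ at the bottom). Hence $\tilde\pi(x)$ is bounded above by the payoff of a feasible policy in the auxiliary problem, which is at most $\pi(x)$. This yields $v-\pi\le(\kappa-1)(b-v)=\phi\eta r\sigma^2(b-v)$.

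The main obstacle is making this comparison rigorous: boundedness to justify optional stopping (fine, since $v_x$ is bounded on the compact region and the exit time has finite expectation), and checking the sign conventions at the lower boundary. I would make sure the policy feeding the bound is genuinely feasible for the auxiliary problem, which it is, since it is a Markov effort and an exit-time stopping rule.
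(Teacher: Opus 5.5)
Your proof is correct and is essentially the paper's. Parts (ii)--(iii) follow from Proposition~\ref{Prop:compfb}\ref{it:pcfb04} with $\phi_1=\phihat>\phi_2=\phi$, the lower bound in (i) from the mimicking argument, and the upper bound from running the first-best policy in the auxiliary problem. Your remark that this ``direct route'' fails is mistaken, though: your subsolution verification uses the feedback effort $\tilde\pi_x/\phihat=a^{FB}$, so it \emph{is} the direct route. The paper finishes it in two lines by noting that the first-best expected effort cost equals the expected discounted terminal payoff minus $v(x)$, hence is at most $b-v(x)$, so the extra cost $(\kappa-1)$ times it is at most $(\kappa-1)(b-v(x))$.
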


\autoref{fig:Principal} plots the second-best value function and
effort profile against their first-best counterparts. Since the
second-best problem is the first-best problem with $\phihat > \phi$,
the comparisons follow as comparative statics of the first-best
problem. The upper bound on the value gap in turn obtains by
implementing the first-best policy under moral hazard. It shows that
the cost of moral hazard vanishes if effort costs vanish
($\phi \to 0$), the agent approaches risk neutrality ($\eta \to 0$),
noise vanishes ($\sigma^2 \to 0$), or the parties become perfectly
patient ($r \to 0$) making waiting free, with both policies relying on
mere luck.

The ranking of efforts and thresholds has a sharp implication for
progress itself, which can be seen by using a coupling argument. 

\begin{proposition}\label{Prop:Coupling}
  Let $X^{FB}$ and $X$ denote progress under the first-best and the
  second-best policy when both are driven by the same Brownian motion,
  with $\tau^{FB}$ and $\tau$ the corresponding stopping
  times. Suppose the project is worth starting in the first
  best. Then, almost surely, ${X^{FB}_{t \wedge \tau^{FB}} > X_t}$ for
  all $0 < t < \tau$. Consequently, whenever the second best completes
  the project, the first best has completed it strictly earlier;
  whenever the first best terminates the project, the second best has
  terminated it strictly earlier; and the probability of completion is
  higher under the first best.
\end{proposition}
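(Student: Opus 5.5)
We would prove Proposition~\ref{Prop:Coupling} by a pathwise comparison of two ODE-driven processes sharing the same noise. Fix a Brownian path $Z$ and write
\begin{equation*}
X^{FB}_t = \int_0^t a^{FB}(X^{FB}_s)\,ds + \sigma Z_t,
\qquad
X_t = \int_0^t a(X_s)\,ds + \sigma Z_t,
\end{equation*}
each valid up to its own stopping time. Here $a^{FB}$ and $a$ are smooth, hence locally Lipschitz, on their continuation regions, so strong solutions exist and are unique. The key observation is that the difference $D_t \ceq X^{FB}_t - X_t$ contains no Brownian term. It is therefore a $C^1$ function of time, with
\begin{equation*}
\dot D_t = a^{FB}(X^{FB}_t) - a(X_t), \qquad D_0=0,
\end{equation*}
on $[0,\tau\wedge\tau^{FB})$. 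The hypothesis that the project is worth starting in the first best means $0>\ux^{FB}$, so $a^{FB}(0)>0$ by Proposition~\ref{Prop:FirstBest}. Proposition~\ref{Prop:Principal}\ref{it:p03} gives $a(0)<a^{FB}(0)$. If $0\le \ux$, the second best stops at time $0$ and there is nothing to prove, so assume $\ux<0$. Then $\dot D_0>0$, hence $D_t>0$ on some interval $(0,\epsilon)$.

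Next, we show that $D$ can never return to zero before $\tau\wedge\tau^{FB}$. Suppose $t^*>0$ is the first time with $D_{t^*}=0$. Then $X^{FB}_{t^*}=X_{t^*}=y$, and $y\in(\ux,\bar x)$ because both processes are still running. By Proposition~\ref{Prop:Principal}\ref{it:p02}, $y>\ux>\ux^{FB}$, so Proposition~\ref{Prop:Principal}\ref{it:p03} gives
\begin{equation*}
\dot D_{t^*} = a^{FB}(y)-a(y) > 0 .
\end{equation*}
But $D>0$ just before $t^*$ and $D_{t^*}=0$ force $\dot D_{t^*}\le 0$, a contradiction. Hence $X^{FB}_t>X_t$ for all $0<t<\tau\wedge\tau^{FB}$.

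The main subtlety is extending the inequality past $\tau^{FB}$, where $X^{FB}$ is frozen at $X^{FB}_{\tau^{FB}}$.
\begin{itemize}
\item \emph{First best terminates first.} If $\tau^{FB}<\tau$ with $X^{FB}_{\tau^{FB}}=\ux^{FB}$, then by continuity $X_{\tau^{FB}}\le \ux^{FB}<\ux$. This contradicts $X$ still running, since the second best stops upon hitting $\ux$. So the first best can never terminate before the second best; moreover, strict ordering before $\tau^{FB}$ together with $\ux>\ux^{FB}$ means $X$ hits $\ux$ strictly before $X^{FB}$ hits $\ux^{FB}$.
\item \emph{First best completes first.} If $\tau^{FB}<\tau$ with $X^{FB}_{\tau^{FB}}=\bar x$, then for $t\in[\tau^{FB},\tau)$ we have $X^{FB}_{t\wedge\tau^{FB}}=\bar x>X_t$, because $X$ has not yet reached $\bar x$.
\item \emph{Simultaneous stopping is ruled out.} Simultaneous completion would require $D_\tau=0$ at $\bar x$, which the strict ordering excludes. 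Simultaneous termination at different levels is impossible for the same reason as in the first case.
\end{itemize}

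The consequences then follow. If the second best completes at $\tau$, then $X_t<X^{FB}_{t\wedge\tau^{FB}}$ for $t<\tau$ means the first best reached $\bar x$ before $\tau$; strictness follows from the argument above. If the first best terminates, the second best had already terminated strictly earlier. Completion under the second best therefore implies completion under the first best on the same path, which gives the probability ranking. Strictness of the probability ranking comes from a positive-probability set of paths on which $X$ hits $\ux$ while $X^{FB}$ goes on to $\bar x$; this follows from the support theorem for Brownian motion.

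The step I expect to need the most care is this bookkeeping around the stopping times and the frozen process. The comparison argument itself is routine, because $D$ is differentiable with no martingale part.
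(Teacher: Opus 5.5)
Your proposal is correct and is essentially the paper's proof: the Brownian terms cancel in $D=X^{FB}-X$, the first-zero derivative argument with Proposition~\ref{Prop:Principal}\ref{it:p03} keeps $D$ positive, and the rest is bookkeeping around $\tau$ and $\tau^{FB}$. For strictness of the completion probability, the paper avoids the support theorem: the second best terminates with positive probability, and at that time the first best has either completed or is still active, from which it completes with positive probability.

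One small repair concerns simultaneous completion. To exclude it you need $D>0$ on the closed interval $(0,\tau\wedge\tau^{FB}]$, as the paper states, because positivity on the open interval alone does not rule out $D_\tau=0$. Your same derivative argument delivers this, since $D$ is $C^1$ up to the endpoint and $a^{FB}>a$ on all of $[\ux,\bar x]$, including $\bar x$.
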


That is, moral hazard not only makes the project more likely to fail,
it also extends the completion time for any realized path of the
Brownian shocks. In particular, if we identify termination with
completion at $+\infty$, then for any path of shocks, the first-best
project is completed weakly earlier than the second best, and strictly
earlier whenever the first-best completion time is finite. Conversely,
identifying completion with termination at $+\infty$, the second best
leads to weakly earlier termination than the first best along every
realized shock path, strictly so when the second-best termination time
is finite. Project durations are not ranked, however, if we do not
condition on the outcome: the project can end earlier under the second
best, when it is terminated while the more failure-tolerant first-best
policy still continues, or later, when the first best has already
completed it while the second best is still at work.

Finally, we show that agency frictions may distort the goal setting
and design of projects in addition to their execution. To do so,
assume that the principal can choose the scope of the
project. Specifically, instead of the target $\bar x$ and the benefit
$b$ being given, a project completed at progress level $x$ yields a
benefit $b(x)$, where $b$ is increasing and concave with
$b(0) < \uv < \lim_{x \to \infty} b(x)$ and $b_x(x) \to 0$ as
$x \to \infty$. The principal can then end the project at any time
$\tau$ and either abandon it for the salvage value $\uv$ or declare it
complete for the benefit $b(X_\tau)$. We assume that the project is
worth starting. The solution again takes the form of a project value
function $\pi$ that solves the project value ODE \eqref{eq:ODE} on the
continuation region $[\ux, \bar x]$, with the project terminated at
$\ux$ and completed at $\bar x$, but now both thresholds are
endogenous. Thus, value matching and smooth pasting must hold at both
ends: $\pi(\ux) = \uv$, $\pi_x(\ux) = 0$, $\pi(\bar x) = b(\bar x)$,
and $\pi_x(\bar x) = b_x(\bar x)$ (see the Supplemental Appendix for
details). The first-best problem is analogous, with $\phi$ in place of
$\phihat$; its continuation region is $[\ux^{FB},\bar x^{FB}]$.

\begin{proposition}\label{Prop:Endo}
  With endogenous scope, the optimal contract is less ambitious and
  less tolerant of failure than the first best: $\bar x < \bar x^{FB}$
  and $\ux > \ux^{FB}$.
\end{proposition}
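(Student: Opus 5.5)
The plan is to work in the phase plane of the autonomous project-value ODE \eqref{eq:ODE}, treating the slope as a function of the value level. On the continuation region $\pi$ is strictly increasing (as in Proposition~\ref{Prop:SecondBest}\ref{it:s01}, whose argument carries over), so I can write $\pi_x = g(\pi)$. Then $\pi_{xx} = g g'(\pi)$. Setting $q \ceq g^2$, equation \eqref{eq:ODE} becomes the \emph{linear} first-order ODE
\begin{equation*}
  q'(p) = \frac{2}{\sigma^2}\Big(r p - \frac{q(p)}{2\phihat}\Big), \qquad q(\uv) = 0,
\end{equation*}
where the initial condition encodes value matching and smooth pasting at $\ux$. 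Denote the solutions by $q_{\phihat}$ and $q_{\phi}$, with the latter for the first best. The right-hand side is strictly increasing in the cost parameter for $q>0$. A standard comparison argument (also visible from the explicit integrating-factor formula) then gives $q_{\phihat}(p) > q_{\phi}(p)$ for all $p > \uv$. So, at every value level, the second-best value function is \emph{steeper} than the first best. I would also check that $q$ is increasing in $p$ on the relevant range, because $rp > q/(2\phihat)$ near $p=\uv$ and this persists up to the crossing below.

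\textbf{Upper threshold.} The conditions $\pi(\bar x) = b(\bar x)$ and $\pi_x(\bar x) = b_x(\bar x)$ say that the trajectory $p \mapsto (p, \sqrt{q(p)})$ meets the curve $\Gamma \ceq \{(b(x), b_x(x)) : x \geq 0\}$. Since $b$ is increasing and concave, $\Gamma$ is the graph of a weakly decreasing function of $p$. At $p = \uv$ (attained on $\Gamma$ since $b(0)<\uv<\lim b$), this function is positive while $q=0$, and it tends to $0$ as $p \to \lim b$. Hence each increasing trajectory crosses $\Gamma$ exactly once, at a level $\bar p$ with $b(\bar x) = \bar p$. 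Because the second-best trajectory lies strictly above the first-best one and $\Gamma$ is decreasing, the crossing occurs at a strictly lower level, $\bar p < \bar p^{FB}$. Strict monotonicity of $b$ then gives $\bar x < \bar x^{FB}$. I need to take care to argue that the optimum corresponds to this (first) crossing, using the paper's characterization of the solution via the two pairs of boundary conditions.

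\textbf{Lower threshold.} Here the phase-plane argument alone is ambiguous. The length of the continuation region is $\bar x - \ux = \int_{\uv}^{\bar p} dp/\sqrt{q(p)}$, which is shorter in the second best, but $\bar x$ is also lower. Instead I would use a value comparison. Any policy (effort, stopping rule, completion decision) yields a weakly higher payoff in the first-best problem than in the auxiliary problem with $\phihat>\phi$, so $\pi \leq v$ pointwise. Since $\pi \geq \uv$, we get $\pi = \uv$ wherever $v = \uv$, so $\ux \geq \ux^{FB}$. For strictness, suppose $\ux = \ux^{FB}$. Then for a level $p$ slightly above $\uv$, the second best reaches $p$ at $\ux + \int_{\uv}^{p} dp'/\sqrt{q_{\phihat}(p')}$. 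This is strictly less than the first-best location $\ux^{FB} + \int_{\uv}^{p} dp'/\sqrt{q_{\phi}(p')}$, since $q_{\phihat}>q_\phi$. Hence $\pi > v$ there, contradicting $\pi \leq v$.

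The main obstacle is rigor near the singular point $p=\uv$, where $q=0$ and $1/\sqrt q$ blows up. I need the integrals to converge, which holds because $q \approx (2r\uv/\sigma^2)(p-\uv)$, so the integrand is $O((p-\uv)^{-1/2})$. I also need strict ordering of $q$ immediately to the right of $\uv$, and the first-order coefficients coincide there. I would handle this with the explicit formula
\begin{equation*}
  q(p)=\frac{2r}{\sigma^2}\int_{\uv}^p e^{-(p-s)/(\sigma^2\phihat)}\, s\, ds,
\end{equation*}
which is strictly increasing in $\phihat$ for every $p>\uv$.
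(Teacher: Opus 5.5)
Your proposal is correct. For the upper threshold it takes a genuinely different route from the paper.

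\textbf{Upper threshold.} The paper argues entirely through the value gap $D \ceq v - \pi \geq 0$. If $\bar x > \bar x^{FB}$, then $\bar x^{FB}$ lies inside the second-best continuation region, where $\pi > b = v$, a contradiction. If $\bar x = \bar x^{FB}$, then $D(\bar x) = D_x(\bar x) = 0$, and subtracting the two HJB equations at $\bar x$ gives $D_{xx}(\bar x) < 0$ (using $b_x(\bar x) > 0$), so $D < 0$ just to the left.

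You instead pass to the level variable. This is the same reduction the paper uses only later, in Lemma~\ref{lem:level} for resets. You then locate the completion level as the unique crossing of the increasing curve $\sqrt{q_{\phihat}}$ with the weakly decreasing locus $\Gamma$. This pins down $\bar x$ through $\sqrt{q_{\phihat}(b(\bar x))} = b_x(\bar x)$ without reference to $\ux$ or to the value ranking. It also makes the mechanism transparent: at every value level the second-best value is steeper, so it meets the flattening marginal benefit sooner.

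The cost is that you use concavity of $b$ globally, for the monotonicity of $\Gamma$ and the uniqueness of the crossing. Given the structure of the solution, the paper's local argument needs only $b_x(\bar x) > 0$ and that both continuation regions contain $0$. Since the crossing is unique, your caveat about selecting the ``first'' crossing is moot.

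\textbf{Lower threshold.} Here your argument has the same structure as the paper's, which invokes the proof of Proposition~\ref{Prop:compfb}\ref{it:pcfb04}: weak ranking $\pi \leq v$ by mimicking, then a local contradiction if $\ux = \ux^{FB}$. Your comparison of $\int_{\uv}^{p} dp'/\sqrt{q}$ replaces the paper's fourth-order Taylor expansion at the common threshold and is arguably cleaner.

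\textbf{A slip to fix.} Since $q' = 2gg' = 2\pi_{xx}$, the level ODE is $q' = \frac{4}{\sigma^2}\bigl(rp - \frac{q}{2\phihat}\bigr)$. Its solution is $q(p) = \frac{4r}{\sigma^2}\int_{\uv}^{p} e^{-2(p-s)/(\sigma^2\phihat)}\, s\, ds$, with $q \approx \frac{4r\uv}{\sigma^2}(p-\uv)$ near $\uv$; this is exactly $F$ in Lemma~\ref{lem:level}. None of the properties you rely on is affected: strict monotonicity in $\phihat$ for $p > \uv$, $q' > 0$, and integrability of $q^{-1/2}$ at $\uv$ all still hold.
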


Whatever the choice of the target $\bar x$, the principal then offers
the agent the optimal contract from Proposition~\ref{Prop:BaseSuff}
for this $\bar x$.

\subsection{Implementation}\label{Sec:Implementation}

The optimal contract (with exogenous or endogenous project scope) pins
down the agent's consumption, but the payments are unique only up to
present value along each path. We say that a payment scheme implements
the optimal contract if, given the stopping rule $\tau$, it is optimal
for the agent to follow the optimal effort policy $A_t = a(X_t)$, and
this yields both parties the same expected payoffs as the optimal
contract. We note here three natural implementations. The first is the
incentive-compatible contract itself, under which the agent consumes
what he is paid: by \eqref{cons},
$C_t = \frac{\phi}{2} A_t^2 + r\, CE_t$, where $CE_t \ceq CE(W_t)$ is
the certainty equivalent of his continuation utility.

Second, all payments can be deferred until the project is
stopped. Paying at $\tau$ the lump sum
$\int_0^\tau e^{r(\tau - t)} C_t\, dt + CE_\tau$, the compounded value
of the interim payments plus the certainty equivalent of the remaining
promise, leaves the present value of payments (and hence the agent's
feasible set) unchanged along every path. In this implementation, the
agent either borrows against the lump sum payment to finance
consumption along the way, or, after poor performance, saves in
anticipation of a small or even negative payment---a penalty akin to
liquidated damages or a malus on deferred pay. Payment then occurs
only upon completion or termination, though the amount still depends
on the path of progress, and not just the stopping time.

Third, the optimal contract can be implemented via a sequence of spot
contracts (cf.\ \citealp{fudenberg1990short}). By It\^o's formula,
$dCE_t = \beta_t (dX_t - A_t\, dt) + \frac{\eta r}{2} \beta_t^2
\sigma^2\, dt$. Consider the payment scheme $\Gamma$ that pays at each
$t \leq \tau$ (i.e., while the project is active) the flow
\begin{equation}\label{eq:spot}
  d\Gamma_t \ceq \Big[\frac{\phi}{2} A_t^2 + \frac{\eta
    r}{2} \beta_t^2 \sigma^2 \Big] dt + \beta_t\,(dX_t - A_t\, dt),
\end{equation}
with $A_t = a(X_t)$ and $\beta_t = \beta(X_t)$, and nothing
thereafter. By construction, we have
$d\Gamma_t = \frac{\phi}{2} A_t^2\, dt + dCE_t$, so by inspection of
\eqref{cons}, the scheme pays the increment $dCE_t$ of the agent's
certainty equivalent in place of its annuity $r\, CE_t\, dt$, leaving to
the agent the consumption smoothing the incentive-compatible contract
performs on his behalf.

\begin{proposition}\label{Prop:Spot}
  Under the spot payment scheme $\Gamma$, for every $t \geq 0$, almost
  surely, the agent's continuation problem at $t$ with any savings
  $S \in \R$ coincides with his continuation problem under the optimal
  incentive-compatible contract with savings $S - CE_t$. It follows
  that his continuation utility has certainty equivalent $S$, the
  value of his outside option. Hence, $\Gamma$ implements the optimal
  contract, with the agent consuming $C_t$ and holding savings $CE_t$,
  and with the participation constraint satisfied at all times, on
  and off the path.\footnote{Formally,
    letting $W_t(S; \Gamma)$ denote the agent's time-$t$ continuation
    utility under $\Gamma$ given savings $S$, the participation
    constraint is $CE(W_t(S; \Gamma)) \geq S$ almost surely, the
    outside option being worth $S$ (Lemma~\ref{lem:AWE}).}
\end{proposition}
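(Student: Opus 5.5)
We will prove the proposition by a change-of-variables argument that maps the agent's problem under $\Gamma$ with savings $S$ onto his problem under the incentive-compatible contract $(A,C,\tau)$ with savings $S-CE_t$. Then we invoke Proposition~\ref{Prop:BaseSuff} together with the savings-factorization Lemma~\ref{lem:AWE}.

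\emph{Step 1: the pathwise identity.} By It\^o's formula applied to $CE(W_t)$, using \eqref{W-SDE} with $rW_t=u(A_t,C_t)$ and the normalization of $\beta$, we get
\begin{equation*}
dCE_t=\beta_t(dX_t-A_t\,dt)+\tfrac{\eta r}{2}\beta_t^2\sigma^2\,dt .
\end{equation*}
Hence \eqref{eq:spot} gives $d\Gamma_t=\frac{\phi}{2}A_t^2dt+dCE_t$ for $t\le\tau$, and $d\Gamma_t=0$ afterwards. This holds pathwise as a function of the realized progress path, whatever effort the agent actually exerts, because $\Gamma$, $A_t=a(X_t)$, $\beta_t=\beta(X_t)$ and $CE_t$ are all functionals of the public path $X$. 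Comparing with \eqref{cons}, we have $C_t\,dt=\frac{\phi}{2}A_t^2dt+r\,CE_t\,dt$, so
\begin{equation*}
d\Gamma_t-C_t\,dt=dCE_t-r\,CE_t\,dt .
\end{equation*}
This also holds after $\tau$, where $CE_t=CE_\tau$ is constant and $C_t=r\,CE_\tau$. Therefore, if $S^\Gamma$ is the agent's savings under $\Gamma$ and $S^C$ his savings under the IC contract, for any common consumption and effort choices we obtain
\begin{equation*}
d\big(S^\Gamma_t-CE_t\big)=r\big(S^\Gamma_t-CE_t\big)dt+(C_t-\hat C_t)dt .
\end{equation*}
This is exactly the law of motion of $S^C$. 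Starting from $S^\Gamma_t=S$ thus produces the same dynamics as $S^C_t=S-CE_t$.

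\emph{Step 2: matching feasible sets.} The two problems have the same controls $(\hat A,\hat C)$, the same progress dynamics and the same flow utility. It remains to check the no-Ponzi constraints. We need $e^{-rs}(S^\Gamma_s-S^C_s)=e^{-rs}CE_s\to0$ a.s. This is the main obstacle, since $CE_s$ is path-dependent and off-path effort can make $\tau$ infinite. I would argue as follows. Under any effort choice, $\tau<\infty$ a.s. is not guaranteed, but $CE$ has bounded integrand $\beta\le \pi_x(\bar x)/\kappa$ on the compact continuation region. Its martingale part grows at most like $\sqrt{s\log\log s}$ by the law of the iterated logarithm (or a time-change bound). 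Its drift is bounded by $\beta\bar A+\frac{\eta r}{2}\beta^2\sigma^2$, where the effort term is integrable because $\int\beta_s\hat A_s\,ds$ can be absorbed into $\beta_s\,dX_s$ with bounded $\beta$. Either $\tau<\infty$, so $CE$ is eventually constant, or $X$ stays in $(\ux,\bar x)$ forever. In the latter case $\int_t^s\beta\,dX=G(X_s)-G(X_t)-\frac12\sigma^2\int\beta_x\,du$ for an antiderivative $G$, which is bounded plus linear in time. So $e^{-rs}CE_s\to0$ and the transversality constraints coincide.

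\emph{Step 3: conclusion.} The problems therefore coincide. By Proposition~\ref{Prop:BaseSuff} and Lemma~\ref{lem:AWE}, the value with savings $S-CE_t$ under the IC contract is $e^{-\eta r(S-CE_t)}W_t$. Its certainty equivalent is $S-CE_t+CE_t=S$, equal to the outside option, so participation holds at every history on and off path. Optimal behavior in the IC problem starting from zero savings is to follow $A$ and consume $C$. By the correspondence, under $\Gamma$ with $S_0=0$ the agent follows $A$, consumes $C$, and holds $S^\Gamma_t=CE_t$. The present value of payments equals $CE_0=0$ plus the present value of $C$, so the principal's payoff is unchanged.
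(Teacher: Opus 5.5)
Your proof is correct and follows essentially the paper's route. The paper integrates by parts to show that the two payment streams differ in present value by the $\mathcal F_t$-measurable amount $CE_t$; this is the integrated form of your identity $d(S^\Gamma_t-CE_t)=r(S^\Gamma_t-CE_t)\,dt+(C_t-\hat C_t)\,dt$. It then applies Lemma~\ref{lem:AWE} and $CE_0=0$ (from the binding participation constraint) exactly as you do.

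Your Step 2 is more laborious than needed, and its remark that $\tau<\infty$ is not guaranteed off path is wrong. Since $\hat A\ge 0$, we have $X_s\ge \sigma Z_s$ before $\tau$, so $\tau<\infty$ almost surely under any deviation, as the paper notes in the proof of Proposition~\ref{Prop:MartingaleLevy}. Hence $CE$ is eventually constant and the no-Ponzi conditions coincide immediately. Your dichotomy argument nevertheless covers both cases, so there is no gap.
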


The idea is simple: rather than smoothing the agent's consumption, the
principal can just pay him what he is owed as it accrues, and the
agent smooths consumption himself by consuming only the interest on
his savings. That is, whereas under the incentive-compatible contract
the agent's financial account is kept with the principal, the spot
implementation $\Gamma$ has the agent bank with the outside credit
market instead.

Remarkably, the spot implementation $\Gamma$ requires no commitment on
either side. Leaving with savings $S$ is worth $S$ to the agent
(Lemma~\ref{lem:AWE}). By Proposition~\ref{Prop:Spot}, this is exactly
the value of his continuation utility at every history and every
savings level, so the participation constraint holds at all times, on
and off the path. The scheme is likewise sequentially rational for the
principal: since the agent's outside option scales one to one with his
savings, an agent with savings $S$ and an outside option worth $S$ is
equivalent to an agent with no savings and an outside option worth
$0$, so the principal's continuation value given progress $X_t$ under
$\Gamma$ is $f(X_t, CE^{-1}(0)) = \pi(X_t)$, the value of the optimal
(long-term) contract starting from $X_t$ with an agent with no
rent. Thus, the principal cannot gain by altering the spot terms or
the stopping rule. Nor does she owe anything once the project stops:
the agent's post-termination consumption under the optimal contract
equals $r\, CE_\tau$, which is exactly what his savings finance.

It is also worth noting that each spot contract in $\Gamma$ is an
optimal contract in the static linear-CARA-normal model of
\citet{holmstrom1987aggregation}: by \eqref{eq:spot} and
\eqref{eq:beta}, it pays the agent his effort cost and a risk premium,
plus $\beta_t = \pi_x(X_t)/(1 + \phi \eta r \sigma^2)$ times the
innovation in progress, which is their contract for output $dX_t$ with
marginal product $\pi_x(X_t)$ and risk aversion $\eta r$. Importantly,
linearity is not assumed but it is shown to be fully optimal. However,
here the slope $\beta_t$ varies with progress, so the aggregate
compensation is not linear in cumulative progress.\footnote{In the
  stationary version of our model, in which the principal receives the
  flow $dX_t$ rather than a reward upon completion, the optimal
  contract is stationary, with $\beta = 1/\kappa$ and effort
  $1/\phihat$. In this version the linear spot contracts aggregate and
  the agent's compensation is linear in cumulative progress over any
  horizon, reproducing the dynamic solution in
  \citet{holmstrom1987aggregation} with $\eta r$ as the coefficient of
  risk aversion. This gives an infinite-horizon model with private
  intermediate consumption and savings in which linearity is fully
  optimal, with no need for limiting arguments (cf.\
  \citealp{hellwig2002discrete}).}

\section{Endogenous Resets}\label{Sec:ResModel}

In the baseline model, there is only a single viable approach to
reaching the target, and abandoning it thus amounts to terminating
the entire project. Here, we analyze the opposite extreme, in which an
infinite number of ex ante identical approaches is available, and the
principal can switch to a fresh one instead of terminating the
project.

Specifically, the principal can reset the project at any time by
incurring a fixed cost $L > 0$, upon which progress restarts from an
exogenous restart point $x^R < \bar x$. We allow for $x^R \neq 0$, so
that the original approach may be superior (or inferior) to the
others.\footnote{Whether the agent is retained or replaced at a reset
  is immaterial: under CARA the cost of his continuation utility is
  the same either way, and under the spot implementation nothing is
  owed.}

A contract now also specifies the sequence of stopping times
$T_1 \leq T_2 \leq \cdots$ at which progress is reset to
$x^R$. The principal's problem is \eqref{second-best} with the
reset costs subtracted:
\begin{equation}\label{reset-problem}
  \max_{A, C, \tau, (T_i)} \mathbb{E}^{A} \bigg[ e^{-r \tau}
  \big(\mathbf{1}_{\{X_\tau \geq \bar{x}\}} b
  + \mathbf{1}_{\{X_{\tau} < \bar{x}\}} \underline v \big)
  - \int_0^{\infty} e^{-r t} C_t \, dt
  - \sum_{i} \mathbf{1}_{\{T_i \leq \tau\}} e^{-r T_i} L \bigg],
\end{equation}
where the maximum is over incentive-compatible contracts satisfying
participation \eqref{ir}.

Let $\pi^{NR}(\cdot;\phihat)$ be the project value in the baseline
model without resets. Then $\pi^{NR}(\cdot;\phi)$ is the first-best
value. We say that the optimal contract uses resets if no contract
without resets is optimal. The following lemma characterizes when this
is the case.

\begin{lemma}\label{lem:NecessarySuff}
  The optimal contract uses resets if and only if
  ${L < \pi^{NR}(x^R; \phihat) - \uv}$. The first best uses resets
  whenever the second best does, and only the first best uses them if
  and only if
  ${\pi^{NR}(x^R; \phihat) \leq L + \uv < \pi^{NR}(x^R; \phi)}$.
\end{lemma}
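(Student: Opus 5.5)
The plan is to reduce the reset problem, exactly as in Section~\ref{Sec:Recursive}, to an auxiliary planner's problem with cost parameter $\phihat$ (respectively $\phi$ for the first best). In that problem the planner can also pay $L$ to jump the state to $x^R$. The separable guess $f(x,w)=\pi(x)-CE(w)$ goes through unchanged, because a reset only moves $X$ and not $W$. Write $\Pi(\cdot;\phihat)$ for the resulting project value with resets; $\Pi(\cdot;\phi)$ is its first-best counterpart.

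The characterization rests on a single observation. At any stopping time the principal can terminate and collect $\uv$, or reset and obtain $\Pi(x^R;\phihat)-L$, so resets are valuable only if $\Pi(x^R;\phihat)-L>\uv$.

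\emph{Sufficiency.} Suppose $L<\pi^{NR}(x^R;\phihat)-\uv$. Take the optimal no-reset policy, with termination at $\ux$, and modify it so that it resets to $x^R$ at $\ux$ instead of terminating, then follows the no-reset policy thereafter. This strictly raises value whenever termination at $\ux$ occurs with positive probability. That is the case because $\ux>-\infty$, $\ux\le x^R$ when $\pi^{NR}(x^R)>\uv$, and termination happens with positive probability from any interior start by standard Brownian arguments. If $0\le \ux$ the project is never started, and then resetting immediately at time $0$ gives the gain directly. Hence no contract without resets is optimal.

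\emph{Necessity.} Suppose $L\ge \pi^{NR}(x^R;\phihat)-\uv$. I would show by a verification argument that $\pi^{NR}(\cdot;\phihat)$ remains the value when resets are allowed. The function $\pi^{NR}$ already satisfies the HJB \eqref{prpr} on the continuation region and the obstacle condition $\pi^{NR}\ge\uv$. The new ingredient is the reset obstacle $\pi^{NR}(x)\ge \pi^{NR}(x^R)-L$ for all $x$. This follows from $\pi^{NR}\ge \uv\ge \pi^{NR}(x^R)-L$. The standard Itô argument for the combined impulse and stopping problem then gives $\Pi=\pi^{NR}$. Combined with Proposition~\ref{Prop:BaseSuff} applied to this planner problem, the no-reset contract is optimal. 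This proves the first equivalence. Applying the same equivalence with $\phi$ in place of $\phihat$ gives that the first best uses resets if and only if $L<\pi^{NR}(x^R;\phi)-\uv$.

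\emph{Comparison.} Proposition~\ref{Prop:Principal}\ref{it:p01}, or simply the monotonicity of the planner's value in the cost parameter (the optimum is an infimum of a cost that increases in the parameter over fixed policies), gives $\pi^{NR}(x^R;\phihat)\le\pi^{NR}(x^R;\phi)$. Hence the second-best reset condition implies the first-best one. The region in which only the first best uses resets is therefore exactly $\pi^{NR}(x^R;\phihat)-\uv\le L<\pi^{NR}(x^R;\phi)-\uv$, which is the stated condition.

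\emph{Main obstacle.} I expect the hardest step to be the verification/IC part. The reduction to the auxiliary problem requires that Proposition~\ref{Prop:BaseSuff}, which is stated for the baseline model, extends to contracts with resets. The point to check is that incentive compatibility of the candidate contract is unaffected by jumps in $X$ that are triggered at stopping times. I would handle this by treating each inter-reset interval as a baseline problem started at $x^R$ and pasting the intervals together with the strong Markov property, with a localization argument as in Proposition~\ref{Prop:MartingaleLevy}. A secondary worry is edge cases where $x^R$ lies outside the no-reset continuation region. These are harmless, since then $\pi^{NR}(x^R)=\uv$ and the condition fails for any $L>0$.
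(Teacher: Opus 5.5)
Your proposal is correct and follows the paper's proof essentially step for step. Sufficiency modifies the optimal no-reset contract to reset rather than terminate at $\ux$, gaining $\pi^{NR}(x^R;\phihat)-L-\uv>0$ on an event of positive probability. Necessity is a verification argument whose key input is exactly your inequality $\pi^{NR}\geq \uv\geq \pi^{NR}(x^R;\phihat)-L$, so each reset raises $\pi^{NR}(X_t)$ by at most $L$ and the discounted value net of costs is a supermartingale; the comparison uses the same mimicking monotonicity $\pi^{NR}(\cdot;\phi)\geq \pi^{NR}(\cdot;\phihat)$, while the incentive-compatibility issue you flag is handled in the paper, as in your plan, by extending the verification of Proposition~\ref{Prop:MartingaleLevy} to contracts with resets.
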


The intuition behind the lemma is straightforward. If
$L<\pi^{NR}(x^R;\phihat) - \underline v$, then whenever the principal
is on the brink of abandoning the project for the salvage value
$\underline v$, she prefers to pay the lump sum $L$ to jump to the
point $x^R$ where her continuation value is
$\pi^{NR}(x^R;\phihat)$. Conversely, if the cost of resetting is too
large relative to the gain in continuation value, it is better to
abandon the project. The comparison to the first best follows from the
mimicking argument in the proof of
Proposition~\ref{Prop:compfb}\ref{it:pcfb04}, which gives
$\pi^{NR}(\cdot; \phi) > \pi^{NR}(\cdot; \phihat)$ on $(\ux, \bar x)$
since $\phi < \phihat$.

We now examine how the principal should use resets, and how the
second-best policy differs from the first best. To that end, we assume
${L < \pi^{NR}(x^R; \phihat) - \uv}$. Because resets do not directly
affect the agent, the derivation of Section~\ref{Sec:LocalIC} applies
with one adjustment: \eqref{W-SDE} holds between resets, and the
agent's continuation utility is continuous at the reset times, which
are stopping times of progress, as martingales in a Brownian
filtration are continuous.  Lemma~\ref{Lem:IcNoS} and the recursive
formulation are then unchanged, and the principal's value takes the
form \eqref{fdefn} with a project value
$\pi$.\footnote{\label{fn:resets-verification}The verification
  argument in the proof of Proposition~\ref{Prop:MartingaleLevy}
  extends readily to resets, because the project still stops in finite
  time under any deviation and the sensitivity $\beta$ under the
  optimal contract is bounded as a continuous function of
  $x \in [\ux^R, \bar x]$.}

At a reset the principal's continuation value is $\pi(x^R) - L$. She
thus faces the no-reset problem with the salvage value
$\tilde v \ceq \pi(x^R) - L$ in place of $\uv$, where $\tilde v > \uv$
because the optimal no-reset contract remains feasible and
${L < \pi^{NR}(x^R; \phihat) - \uv}$. Hence the project is never
terminated but reset at a threshold $\ux^R$, and $\pi$ solves the
ODE~\eqref{eq:ODE} on the continuation region $[\ux^R, \bar x]$
subject to
\begin{equation}\label{eq:resetbc}
  \pi(\bar x) = b, \qquad \pi(\ux^R) = \pi(x^R) - L, \qquad
  \pi_x(\ux^R) = 0.
\end{equation}

The reset threshold $\ux^R$ lies above the termination threshold $\ux$
of the baseline model: the principal gives up on an approach sooner
than she would on the whole project. This is because the fallback is
now a fresh approach, which in expectation is worth more than the
salvage value from abandonment. The opportunity cost of continuing
with the current approach is thus higher, and the option to fall back
is exercised earlier.

The first-best problem is the ODE~\eqref{eq:ODE} subject to
\eqref{eq:resetbc} with $\phi$ in place of $\phihat$. We now compare
the two solutions.

\begin{proposition}\label{Prop:reset}
  Let $a$ and $\ux^R$ be the effort policy and reset threshold under
  the optimal contract, and let $a^{FB}$ and $\ux^{R,FB}$ be their
  first-best counterparts. Suppose $L < b - \uv$, so that both
  policies use resets for restart points close enough to the target.
\begin{enumerate}[label=(\roman*)]
\item \label{it:r03} Moral hazard makes the principal reset more
  aggressively when the restart point is close to the target: there
  exists $\varepsilon > 0$ such that $\ux^R > \ux^{R,FB}$ for all
  $x^R \in (\bar x - \varepsilon, \bar x)$.
\item \label{it:r04} Moral hazard can make the principal reset more
  conservatively: there exists $\varepsilon > 0$ such that if
  $\uv < \varepsilon$, then there are restart points $x^R$ with
  $0 < \pi^{NR}(x^R; \phihat) - \uv - L < \varepsilon$, and
  $\ux^R < \ux^{R,FB}$ for every such restart point.
\item \label{it:r06} If moral hazard makes the principal reset more
  aggressively, it lowers effort everywhere: if
  $\ux^R > \ux^{R,FB}$, then $a(x) < a^{FB}(x)$ for all
  $x \in (\ux^R, \bar x]$. If moral hazard makes the principal reset
  more conservatively, it raises effort at low progress levels, and
  the two effort policies cross at most once: if
  $\ux^R < \ux^{R,FB}$, there exists $x^* \in (\ux^{R,FB}, \bar x]$
  such that $a(x) > a^{FB}(x)$ for all $x \in (\ux^{R,FB}, x^*)$ and
  $a(x) < a^{FB}(x)$ for all $x \in (x^*, \bar x]$.
\end{enumerate}
\end{proposition}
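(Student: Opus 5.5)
The plan rests on two structural features of the project value ODE \eqref{eq:ODE}, which holds under both $\phihat$ (second best) and $\phi$ (first best). First, it is autonomous in $x$. Solutions with $\pi_x(x_0)=0$ and $\pi(x_0)=y$ are therefore translates of one another. For a cost parameter $c\in\{\phi,\phihat\}$ and a floor value $y$, let $D_c(y)$ be the distance the solution starting at slope $0$ and value $y$ needs to climb to the value $b$. Writing $\tilde v_c \ceq \pi_c(x^R)-L$ for the effective salvage value under cost $c$, the boundary conditions \eqref{eq:resetbc} say that the reset threshold is
\begin{equation*}
  \ux^R_c=\bar x-D_c(\tilde v_c).
\end{equation*}
By the mimicking argument behind Proposition~\ref{Prop:compfb}\ref{it:pcfb04} and Proposition~\ref{Prop:Principal}\ref{it:p02}, applied with salvage $y$, $D_c(y)$ is decreasing in $c$ and increasing in $b-y$.

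Second, effort $a=\pi_x/c$ satisfies a cost-free ODE. Differentiating \eqref{eq:ODE} and dividing by $c$ gives
\begin{equation*}
  r a = a a_x+\tfrac{\sigma^2}{2}a_{xx}.
\end{equation*}
Both effort policies thus solve the same second-order autonomous ODE with $a(\ux^R_c)=0$. They differ only through the initial slope. Evaluating \eqref{eq:ODE} at the threshold, where $\pi_x=0$, gives $r\tilde v_c=\frac{\sigma^2}{2}\pi_{xx}$, so the initial slope is $a_x(\ux^R_c)=2r\tilde v_c/(\sigma^2 c)$. This representation is what I will use for part (iii).

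For part (i), let $x^R\to\bar x$. Continuity of $\pi_c$ at $\bar x$ (with bounded $\pi_x$) gives $\tilde v_c\to b-L$ under both costs, and the assumption $L<b-\uv$ guarantees that resets are used there by Lemma~\ref{lem:NecessarySuff}. Hence $\ux^R_c\to\bar x-D_c(b-L)$. Strict monotonicity of $D_c$ in $c$ yields $D_{\phihat}(b-L)<D_\phi(b-L)$, and by continuity the strict inequality $\ux^R>\ux^{R,FB}$ persists for $x^R$ within some $\varepsilon$ of $\bar x$.

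For part (ii), I choose $x^R$ so that the second-best gain from resetting, $\pi^{NR}(x^R;\phihat)-\uv-L$, is positive but tiny. A continuity argument on the fixed-point equation for $\tilde v$ then shows $\tilde v_{\phihat}\to\uv$, so $\ux^R$ is close to the baseline threshold $\ux=\bar x-D_{\phihat}(\uv)$. In the first best, by contrast, $\tilde v_\phi\geq \pi^{NR}(x^R;\phi)-L$, which exceeds $\uv$ by the first-best/second-best value gap at $x^R$. That gap is bounded away from zero uniformly over the relevant $x^R$, which stay in a compact subset of the interior.

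The key observation is that $D_{\phihat}(y)\to\infty$ as $y\to 0$, because $\pi\equiv 0$ solves \eqref{eq:ODE} and solutions starting near $0$ with zero slope rise arbitrarily slowly. Hence for $\uv$ small the second-best threshold lies far below. Meanwhile $D_\phi(\tilde v_\phi)$ stays bounded, since $\tilde v_\phi$ is bounded away from zero. This gives $\ux^R<\ux^{R,FB}$.

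I expect this step to be the main obstacle. The difficulty is making the uniformity in $\varepsilon$ precise: the restart points depend on $\uv$, and one needs a lower bound on $\tilde v_\phi$ that does not degenerate as $\uv\to0$. I would first try to bound $\tilde v_\phi-\uv$ below by $\pi^{NR}(x^R;\phi)-\pi^{NR}(x^R;\phihat)$ evaluated at the level $\pi^{NR}(x^R;\phihat)\approx L$, which is fixed.

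For part (iii), I use the effort ODE $ra=aa_x+\frac{\sigma^2}{2}a_{xx}$. By uniqueness for this ODE, any two distinct solutions that meet must cross transversally, because agreement of both value and slope at a point would force them to coincide. I then argue single crossing by a phase-plane shooting argument in $(a,a_x)$. Both trajectories start on the axis $a=0$, and the translation invariance of the ODE orders the family by starting point and initial slope.

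Case $\ux^R>\ux^{R,FB}$. At $\ux^R$ we have $a=0<a^{FB}$. If the two policies crossed at some later point, they would have to end in the reversed order, but the first-best curve is the one that started earlier. I will compare $\pi_x(\bar x)/c$ via the relation $\pi_x^2=2c(r\pi-\frac{\sigma^2}{2}\pi_{xx})$ at $\pi=b$ to rule this out, which gives $a<a^{FB}$ on $(\ux^R,\bar x]$.

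Case $\ux^R<\ux^{R,FB}$. Here $a>0=a^{FB}$ at $\ux^{R,FB}$, so $a$ lies above at low progress. By transversality and the ordering argument, at most one crossing $x^*$ can occur before $\bar x$.
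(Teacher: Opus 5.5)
Your parts (i) and (ii) follow the paper's line, but part (iii) has a genuine gap. Transversality of crossings is correct, but it does not bound their number: two transversal crossings in opposite directions are consistent with it. The ordering that matters is by initial slope alone, since starting points drop out in the phase plane. You never determine which policy has the larger initial slope. Doing so requires comparing $2r\tilde v^{FB}/(\sigma^2\phi)$ with $2r\tilde v/(\sigma^2\phihat)$, and hence some dominance of the first best over the second best, which your sketch never invokes. Your case-1 plan does not substitute for this. The relation $\pi_x(\bar x)^2 = 2c(rb - \frac{\sigma^2}{2}\pi_{xx}(\bar x))$ involves the unknown $\pi_{xx}(\bar x)$ and $v_{xx}(\bar x)$, so it does not rank $a(\bar x)$ and $a^{FB}(\bar x)$. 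Even such a ranking would only exclude an odd number of crossings.

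Your phase-plane idea does work once completed. By convexity (Proposition~\ref{Prop:SecondBest} applied to the no-reset problem with salvage value $\tilde v_c>0$), $a_x>0$ on each continuation region. So $p \ceq a_x$ can be written as a function of $a$ solving $dp/da = \frac{2a}{\sigma^2}(r/p-1)$, which has unique solutions for $p>0$. The two graphs are therefore strictly ordered by their intercepts at $a=0$. By mimicking, $\tilde v^{FB} = v(x^R) - L \geq \pi(x^R) - L = \tilde v>0$, and $\phi<\phihat$, so the first-best graph lies above. Hence at every crossing $a^{FB}$ is strictly steeper and crosses $a$ from below. This gives at most one crossing, none in case 1 (where $a^{FB}$ starts above), and $x^*$ equal to the crossing point or $\bar x$ in case 2.

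The paper reaches the same crossing direction differently. It subtracts the two value ODEs at a crossing to get $\frac{\sigma^2}{2}(v_{xx}-\pi_{xx}) = r(v-\pi) + \frac{\phihat-\phi}{2}a^2>0$, which uses $v\geq\pi$ on the whole common region. The phase-plane route needs the dominance only at $x^R$, and indeed only $\tilde v^{FB} > \tilde v/\kappa$.

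Finally, in (ii) the uniformity you flag is the crux. As $\uv\to 0$ the baseline threshold diverges and the value functions have no nondegenerate limit, so continuity or compactness in $x$ is not enough by itself. The paper handles this with the first integral $\pi_x^2 = F(\pi; s, \phi)$ of Lemma~\ref{lem:level} (with $\phihat$ in place of $\phi$ for the second best), which is continuous down to $s=0$. It yields $\tilde v - \uv \leq C m$ with $C$ independent of $\uv$. It also gives $\tilde v^{FB} \geq \ell - L$, bounded away from zero uniformly in small $\uv$, where $\ell$ is the first-best value at the point where $\pi^{NR}(\cdot;\phihat) = \uv + L$. That is exactly the quantity you proposed to bound.
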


To understand parts (i) and (ii), note that two forces determine how
moral hazard affects the reset threshold. On one hand, a reset
substitutes for effort and allows the principal to economize on
variable costs, which is more valuable under moral hazard as the
effective cost of effort is then higher. On the other hand, the value
of a reset also depends on how much progress is needed to reach the
target with the new approach, which reduces the value of a reset under
moral hazard because of the higher cost of effort. If the restart
point is sufficiently close to the target, the first force
dominates. This is part (i), which shows that a large number of
sufficiently promising approaches has the principal reaching for a new
one inefficiently often. If instead the work remaining after a reset
makes a fresh approach barely worth using under moral hazard, the
second force dominates. This is part (ii). A low salvage value puts
the principal's termination threshold, absent resets, far from the
target, and a fresh approach worth barely more than the salvage value
puts her reset threshold close to the termination threshold, hence
also far from the target. Figure~\ref{fig:Simulation} illustrates the
two cases.
\begin{figure}[t]
\centering
    \includegraphics[width=.48\linewidth]{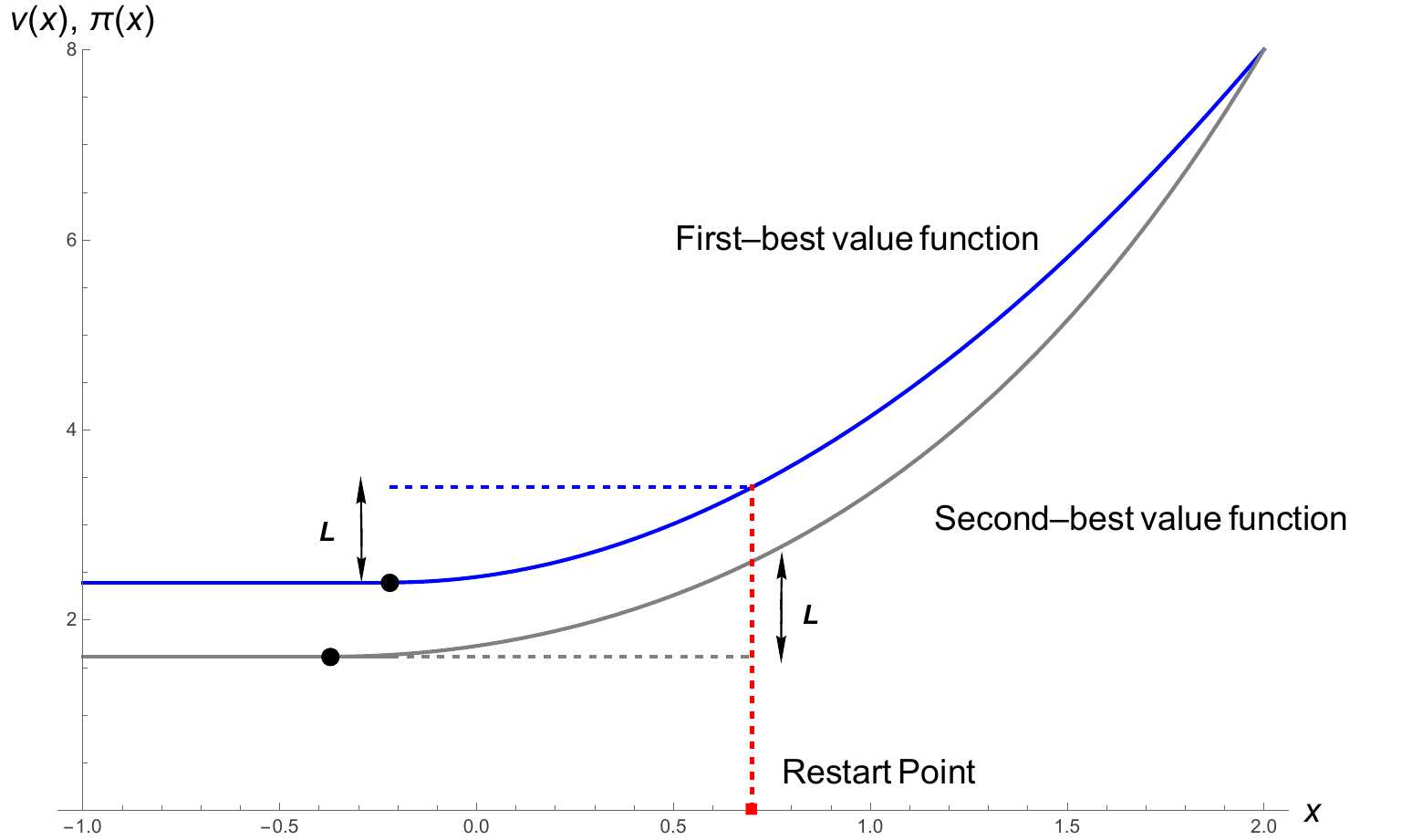}\quad
    \includegraphics[width=.48\linewidth]{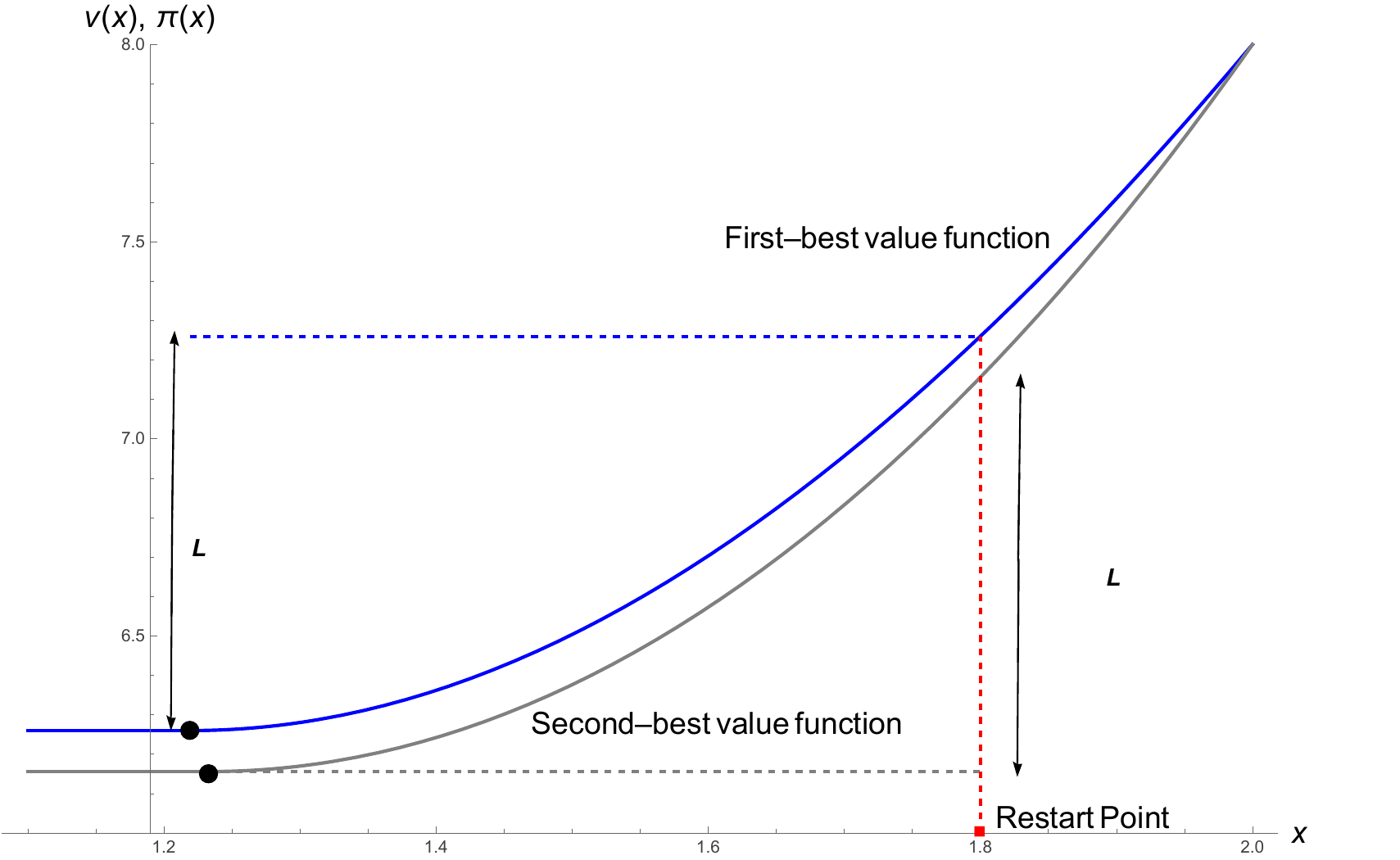}
    \caption{Restart point: $x^R = 0.7$ (left panel), $x^R = 1.8$
      (right panel). Other parameter values:
      $(b,L,\bar x,\sigma,\underline v,r,\eta,\phi
      )=(8,1,2,2,1,2,1/2,1)$.}\label{fig:Simulation}
\end{figure}

The effort comparisons in part (iii) follow from the ranking of reset
thresholds. Effort vanishes at the reset threshold, so whichever
policy resets more aggressively implements lower effort just above its
own threshold. In the case of part (i) this is the second best, and
moral hazard lowers effort everywhere, as in the baseline model. In
the case of part (ii) it is the first best, so moral hazard raises
effort at low progress levels, where the second best is still working
on an approach the first best has abandoned. The two effort policies
cross at most once, so the baseline ranking is restored at higher
progress levels.

\section{Risk of Breakdowns}\label{Sec:DisModel}

Many innovation projects, particularly in the life sciences, face
discrete, high-impact, and often irreversible setbacks. Clinical
trials may be terminated after adverse events, and early-stage biotech
programs may collapse upon a single scientific failure. These risks
are partly under the agent's control, as when he accelerates timelines
or pursues more aggressive protocols. We extend the baseline model to
allow for such risk taking.

In addition to choosing effort, the agent now makes a hidden choice
between a risky mode, $Q_t = 1$, and a safe mode, $Q_t = 0$. The risky
mode speeds up progress but raises the risk of a breakdown. In the
absence of a breakdown, progress evolves according to
\begin{equation}\label{eq:progress-risk}
  dX_t = (A_t + g Q_t)\, dt + \sigma\, dZ_t,
\end{equation}
where $g > 0$. Breakdowns arrive at rate $\lambda_0 > 0$ in the safe
mode and at the increased rate $\lambda_0 + \lambda_r$ in the risky
mode, where $\lambda_r > 0$. They are counted by a process $N$ with
intensity $\lambda_0 + \lambda_r Q_t$. The first breakdown terminates
the project at the salvage value $\uv$.

In this setting, a contract $(A, C, Q, \tau)$ also includes the
recommended risk mode $Q$ taking values in $\{0,1\}$, and all
components are progressively measurable with respect to progress and
breakdowns. Let $\zeta$ denote the first breakdown time. We adopt the
convention whereby the stopping time $\tau$ captures termination by a
breakdown in addition to completion and termination by choice. It is
thus required to satisfy $\tau \leq \zeta$.

As before, we focus on incentive-compatible contracts, under which the
agent obeys $A$ and $Q$ and consumes the payment $C$.  His time-$t$
continuation utility $W_t$ is thus as defined in
Section~\ref{Sec:LocalIC}, with the expectation taken under $(A,
Q)$. The following lemma gives necessary conditions for incentive
compatibility analogously to Lemmas~\ref{Lem:W} and \ref{Lem:IcNoS}.

\begin{lemma}\label{Lem:IcRisk}
  Let $(A, C, Q, \tau)$ be incentive compatible. The agent's
  continuation utility $W$ satisfies
  \begin{multline}\label{eq:W-risk}
    dW_t = \left( r W_t - u(A_t,C_t) \right) dt
    + \beta_t (-\eta r W_t) (dX_t - (A_t + g Q_t ) dt) \\
    + \psi_t (-\eta r W_{t-}) (dN_t - \left( \lambda_0 + \lambda_r Q_t \right) dt)
  \end{multline}
  for some progressively measurable processes $\beta$ and
  $\psi$. Moreover, $r W_t = u(A_t, C_t)$, and thus $W$ is a
  martingale. Consumption and effort are given by \eqref{cons} and
  \eqref{IC}, and the risk mode satisfies
  \begin{equation}\label{eq:IC-risk}
    Q_t\, (g \beta_t + \lambda_r \psi_t)
    = \max\{ g \beta_t + \lambda_r \psi_t,\, 0 \}.
  \end{equation}
\end{lemma}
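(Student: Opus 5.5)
The argument follows the proofs of Lemmas~\ref{Lem:W} and \ref{Lem:IcNoS}, with a jump component added for breakdowns.

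\textbf{Representation \eqref{eq:W-risk}.} Define the agent's total expected utility under $(A,Q)$ conditional on the time-$t$ information,
\[
M_t \ceq \int_0^t e^{-rs}u(A_s,C_s)\,ds + e^{-rt}W_t .
\]
This process is a martingale in the filtration generated by $X$ and $N$. Under the measure induced by $(A,Q)$, the filtration is generated by a Brownian motion $\sigma^{-1}(dX_t-(A_t+gQ_t)dt)$ and a counting process $N$ with intensity $\lambda_0+\lambda_r Q_t$. The martingale representation theorem for Brownian–Poisson filtrations (a Girsanov change of measure plus the jump-process representation, e.g.\ Jacod's theorem) then gives
\[
dM_t = e^{-rt}\big[\tilde\beta_t\,(dX_t-(A_t+gQ_t)dt)+\tilde\psi_t\,(dN_t-(\lambda_0+\lambda_rQ_t)dt)\big]
\]
for predictable $\tilde\beta,\tilde\psi$. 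Since $W<0$ under CARA utility, we can define $\beta_t=\tilde\beta_t/(-\eta r W_t)$ and $\psi_t=\tilde\psi_t/(-\eta r W_{t-})$. Differentiating $e^{-rt}W_t$ then yields \eqref{eq:W-risk}.

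\textbf{Consumption and effort.} Here I reuse the hidden-savings argument. By Lemma~\ref{lem:AWE}, which is unaffected by the jump component because the CARA factorization only concerns savings, continuation utility with savings $s$ is $e^{-\eta r s}W_t$. The consumption Euler equation at $s=0$ gives $u_c(A_t,C_t)=-\eta rW_t$, i.e.\ $rW_t=u(A_t,C_t)$. The drift of $W$ therefore vanishes and $W$ is a (local) martingale; I would argue it is a true martingale as in the baseline. Rewriting $rW_t=u(A_t,C_t)$ with \eqref{eq:CE} gives \eqref{cons}.

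For the remaining conditions, consider the agent's one-shot deviation problem at $t$ over $(a,q)$:
\[
\max_{a\ge0,\;q\in\{0,1\}}\; u(a,C_t)+(-\eta rW_t)\big[\beta_t(a+gq)+\psi_t\lambda_r q\big].
\]
The objective is additively separable in $a$ and $q$. The $a$-part is exactly the problem in Lemma~\ref{Lem:IcNoS}, so \eqref{IC} follows. Because $-\eta rW_t>0$, the $q$-part is maximized by $q=1$ iff $g\beta_t+\lambda_r\psi_t\ge0$. Optimality of $Q_t$ is therefore equivalent to \eqref{eq:IC-risk}.

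\textbf{Main obstacle.} The delicate step is justifying that incentive compatibility implies these pointwise conditions almost everywhere. The standard route is a contradiction argument. Suppose the conditions fail on a set of positive $dt\otimes dP$ measure. Deviate on that set to the pointwise maximizer, and with savings set to zero. Then $\int_0^t e^{-rs}u(\hat A_s,C_s)ds+e^{-rt}W_t$ has nonnegative drift, strictly positive on the deviation set, under the deviation measure. Taking expectations, using a localization by stopping times to handle integrability, shows the deviation strictly improves utility.

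For consumption, the analogous deviation is a small perturbation of consumption financed by savings, priced via Lemma~\ref{lem:AWE}. The one new ingredient relative to \citet{he2011model} is the change of the breakdown intensity under a deviation in $q$. This requires the Girsanov theorem for point processes, with density involving $(\lambda_0+\lambda_r\hat Q)/(\lambda_0+\lambda_r Q)$, which is bounded, so the argument goes through.
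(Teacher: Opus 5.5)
Your proposal is correct and follows the paper's proof essentially step for step: the martingale representation of $M_t=\int_0^t e^{-rs}u(A_s,C_s)\,ds+e^{-rt}W_t$ with the coefficients rescaled by $-\eta r W$, the Euler equation via Lemma~\ref{lem:AWE}, and the one-shot deviation problem, which is separable in $a$ and linear in $q$. Your additional remarks fill in details the paper leaves implicit when it simply asserts that incentive compatibility requires $(A_t,Q_t)$ to solve the one-shot problem. These are representing martingales of the public $(X,N)$ filtration via Girsanov, and the localization/contradiction argument (using the bounded intensity ratio, since $\lambda_0>0$) showing that incentive compatibility forces the pointwise conditions almost everywhere.
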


The representation \eqref{eq:W-risk} follows from the martingale
representation theorem for the filtration generated by $Z$ and
$N$. The sensitivity of the agent's compensation to progress is
captured by $\beta_t$. Its sensitivity to a breakdown is measured by
$\psi_t$, with $\psi_t < 0$ a penalty. Effort and consumption are as
in Lemma~\ref{Lem:IcNoS}, because the risk mode is separable from
effort in the agent's problem. Finally, the risky mode raises the
drift by $g$ and the breakdown intensity by $\lambda_r$, so the agent
prefers it if and only if $g \beta_t + \lambda_r \psi_t \geq 0$.

The risk choice ties the two sensitivities together. Inducing effort
requires $\beta_t > 0$, which by itself makes the risky mode
attractive, as faster progress raises the agent's pay. To deter risk
taking, the principal must therefore penalize breakdowns by
$\psi_t \leq -g \beta_t / \lambda_r$, and the more effort she asks
for, the harsher the required penalty.

The conditions of Lemma~\ref{Lem:IcRisk} are necessary for incentive
compatibility. The next result shows that they are also sufficient
when the sensitivities are bounded and the contract is constant after
$\tau$. We allow $g = \lambda_0 = \lambda_r = 0$ to cover the baseline
model.\footnote{Proposition~\ref{Prop:BaseSuff} then follows: with
  $g = \lambda_0 = \lambda_r = 0$, breakdowns never occur, the risk
  mode is immaterial, and Lemma~\ref{Lem:IcRisk} reduces to
  Lemmas~\ref{Lem:W} and \ref{Lem:IcNoS}, whose conditions the
  contract in Proposition~\ref{Prop:BaseSuff} satisfies by
  construction. Further, the contract is constant after $\tau$, and
  its sensitivity $\beta = \phi A = \phi \phihat^{-1} \pi_x(X_t)$ is
  bounded, as $\pi_x$ is continuous on $[\ux, \bar x]$. (The result
  also extends to the model with resets; see
  footnote~\ref{fn:resets-verification}.)}

\begin{proposition}\label{Prop:MartingaleLevy}
  Let $g, \lambda_0, \lambda_r \geq 0$. Suppose $(A, C, Q, \tau)$ is a
  contract with $A_t = Q_t = 0$ and $C_t$ constant for $t > \tau$, and
  $W$ is a process such that the conditions of
  Lemma~\ref{Lem:IcRisk} hold with $\beta$ and $\psi$ bounded and
  $\psi \leq 0$. Then $(A, C, Q, \tau)$ is incentive compatible, and
  $W$ is the agent's continuation utility.
\end{proposition}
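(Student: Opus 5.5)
The plan is a verification argument. For any deviation $(\hat A,\hat Q,\hat C)$ with savings $S$, construct a process whose expectation dominates the agent's deviation payoff and which equals the promised utility under obedience. Lemma~\ref{lem:AWE} suggests the candidate: the deviator's continuation utility at savings $s$ should be $e^{-\eta r s}W_t$. So define
\begin{equation*}
  G_t \ceq \int_0^t e^{-rs} u(\hat A_s,\hat C_s)\,ds + e^{-rt} e^{-\eta r S_t} W_t ,
\end{equation*}
where $X$, $N$ and $S$ evolve under the deviation. We need two facts: $G$ is a local supermartingale under the deviating measure $\E^{\hat A,\hat Q}$, and it is a local martingale under obedience with $S\equiv 0$. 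Since $G_0=W_0$, obedience then attains the bound. Showing $W$ is the continuation utility is the obedient special case.

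First compute the drift of $G$ by It\^o's formula for jump diffusions. Use \eqref{eq:W-risk} with $dX_t=(\hat A_t+g\hat Q_t)dt+\sigma dZ_t$, breakdown intensity $\lambda_0+\lambda_r\hat Q_t$, $dS_t=(rS_t+C_t-\hat C_t)dt$, and $rW_t=u(A_t,C_t)$. Dividing by $e^{-rt}e^{-\eta rS_t}(-W_t)>0$, the drift splits into three separate maximizations.
\begin{itemize}
\item[(a)] Consumption: $u(\hat a,\hat c)e^{\eta r S}/(-W)+\eta r\hat c$ is maximized at $\hat c = C_t+rS_t + h(\hat a)-h(A_t)$ by the Euler equation, using $u(A_t,C_t)=rW_t$.
\item[(b)] Effort: $\eta r\beta_t\hat a$ against the cost term is maximized at $\hat a=\phi^{-1}\max\{\beta_t,0\}=A_t$, as in \eqref{IC}.
\item[(c)] Risk mode: the breakdown term contributes $(\lambda_0+\lambda_r\hat Q)\big[(1-\eta r\psi_t)-1\big]$ to the relative drift, because the jump multiplies $W$ by $1+\eta r\psi_t\cdot(-1)\cdot(-1)$; I will do this sign bookkeeping carefully. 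Together with $g\hat Q\,\eta r\beta_t$, this reduces to choosing $\hat Q$ to maximize $\hat Q(g\beta_t+\lambda_r\psi_t)$, which is \eqref{eq:IC-risk}.
\end{itemize}
So the drift is $\le 0$, with equality under obedience. The condition $\psi\le 0$ keeps the post-jump utility $W_t(1-\eta r\psi_t)$ negative, so CARA utility stays well defined and the exponential structure is preserved.

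The main obstacle is passing from local (super)martingales to the actual inequality $\E[\int_0^\infty e^{-rt}u\,dt]\le W_0$ without ad hoc integrability conditions, using only the pathwise no-Ponzi condition. I would localize with stopping times $\tau_n$ where $|X|$, $|S|$ and $t$ stay bounded, so that $\E[G_{\tau_n}]\le G_0$. Then I let $n\to\infty$. The flow term converges by monotone convergence, since $u<0$. For the terminal term, note $e^{-rt}e^{-\eta rS_t}W_t\le 0$, so dropping it gives an upper bound on the flow integral up to $\tau_n$. This yields $\E\int_0^\infty e^{-rt}u\le W_0$ directly; only the obedient equality needs the terminal term to vanish.

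For obedience, $S\equiv0$. After $\tau$ the contract is constant, so $W$ is constant and deterministic. Before $\tau$, $\beta$ and $\psi$ are bounded, so $W$ is a stochastic exponential with bounded coefficients on $[0,\tau]$ and its moments are controlled. The remaining subtlety is that $\tau$ may be unbounded. I would show $e^{-rt}W_t\to0$ in $L^1$ using the martingale property of $W$ (bounded exponents give true martingales by Novikov/Lépingle–Mémin) together with $e^{-rt}\to0$. Here the no-Ponzi condition $e^{-rt}S_t\to0$ is used only to justify that deviators cannot push $S$ to $-\infty$. This mirrors \citet{bloedel2023persistent}, and the bounded-$\beta$ hypothesis is exactly what the argument needs.
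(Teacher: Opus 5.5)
You have the right verification process: your $G$ is the paper's $Y$, your drift computation (a)--(c) is the paper's Step~1, and your treatment of obedience is fine, since $W$ is a stochastic exponential with bounded coefficients and hence $\E[e^{-rt}|W_t|]=e^{-rt}|W_0|\to0$. (Two sign slips: after dividing by $e^{-\eta rS_t}(-W_t)$ the coefficient on $\hat c$ is $-\eta r$, not $+\eta r$, and a breakdown multiplies $W$ by $1-\eta r\psi_t$.)

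\textbf{The gap is the limit step for a deviation, which is where the whole difficulty lies.} You cannot drop the terminal term. Because $e^{-r\tau_n}e^{-\eta rS_{\tau_n}}W_{\tau_n}\le0$, removing it from $G_{\tau_n}$ makes it larger, not smaller. So $\E[G_{\tau_n}]\le W_0$ only gives
\begin{equation*}
  \E\Big[\int_0^{\tau_n}e^{-rs}u(\hat A_s,\hat C_s)\,ds\Big]
  \le W_0+\E\big[e^{-r\tau_n}e^{-\eta rS_{\tau_n}}|W_{\tau_n}|\big],
\end{equation*}
and you need the last expectation to vanish. That is not automatic: an agent who has borrowed heavily has $e^{-\eta rS}|W|$ huge.

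Your deviation step never uses the no-Ponzi condition, so it would also ``prove'' something false. Take $\hat A=A$, $\hat Q=Q$, $\hat C=C+k$ with $k>0$, financed by perpetual borrowing $S_t=-\frac{k}{r}(e^{rt}-1)$. The drift of $G$ is still nonpositive, since the drift computation does not see no-Ponzi. Yet this deviation yields $e^{-\eta k}W_0>W_0$. Nor can no-Ponzi enter merely as a bound on $S$: it allows $S_t\to-\infty$ subexponentially, e.g.\ $S_t=-t$.

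\textbf{What is missing is a quantitative link between $J_t\ceq e^{-\eta rS_t}W_t$ and the expected future flow utility.} This is where no-Ponzi and the bounds on $\beta,\psi$ are actually used. The paper shows that for every stopping time $\rho$,
\begin{equation*}
  \E\Big[\int_\rho^\infty e^{-r(s-\rho)}u(\hat A_s,\hat C_s)\,ds\,\Big|\,\mathcal F_\rho\Big]\le J_\rho/K ,
\end{equation*}
with $K=e^{\eta H}$. The derivation runs as follows:
\begin{enumerate}
  \item Write the dynamics of the wealth equivalent $V_t=S_t+CE(W_t)$.
  \item Bound its drift correction by $H$ using the IC conditions and the bounded sensitivities; $\psi\le0$ makes breakdowns only lower $V$.
  \item Integrate over $[\rho,T]$ and let $T\to\infty$. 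No-Ponzi, together with $\tau<\infty$ a.s.\ (so $CE(W_T)$ is eventually constant), kills $e^{-r(T-\rho)}V_T$.
  \item Apply Jensen's inequality over time and then conditionally to $e^{-\eta r\mathcal B_\rho}$.
\end{enumerate}
Then localize at $\rho_n\ceq\inf\{t:|J_t|\ge n\}\wedge n$. Unlike bounding $X$, $S$, $t$, this bounds the stochastic-integral integrands directly, since $W$ is not controlled by those. For any deviation with finite payoff it gives $\E[e^{-r\rho_n}|J_{\rho_n}|]\le K\,\E[\int_{\rho_n}^\infty e^{-rs}|u(\hat A_s,\hat C_s)|\,ds]\to0$. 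Without an estimate of this kind, your argument does not establish global incentive compatibility.
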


Our novel proof verifies global incentive compatibility against every
deviation in effort, risk, and consumption satisfying the no-Ponzi
condition. As is standard, one-shot optimality makes the agent's
payoff from deviating up to time $t$ and obeying thereafter a local
supermartingale in $t$. Rather than restricting deviations by
integrability conditions to pass to expectations, we localize: stopped
at times when the value of returning to obedience is bounded, the
payoff is a true supermartingale. To let these times go to infinity,
we bound the value of returning to obedience using only the almost
sure no-Ponzi condition and the boundedness of $\beta$ and $\psi$ (the
sign restriction on $\psi$ simplifies the argument but is not
essential).

The principal's problem is \eqref{second-best}, with the maximum now
over incentive-compatible contracts $(A, C, Q, \tau)$ and the
expectation taken under $(A, Q)$. As in Section~\ref{Sec:Recursive},
we impose only the conditions of Lemma~\ref{Lem:IcRisk}; this is
without loss by Proposition~\ref{Prop:MartingaleLevy} provided that
the resulting contract has bounded sensitivities. We also guess that
the value function takes the form \eqref{fdefn}. The HJB equation for
the project value $\pi$ is then
\begin{multline}\label{hjb:risk}
  r \pi(x) = \max_{a, q, \psi} \Big\{ -\frac{\phihat}{2} a^2
  + (a + g q)\, \pi_x(x) + \frac{\sigma^2}{2} \pi_{xx}(x) \\
  + (\lambda_0 + \lambda_r q) \Big( \uv - \pi(x) + \psi
  + \frac{1}{\eta r} \ln(1 - \eta r \psi) \Big) \Big\},
\end{multline}
where the maximum is subject to \eqref{eq:IC-risk}. The last term
captures the risk of a breakdown: if one arrives, the principal
receives $\uv$
in place of $\pi(x)$ and collects the drop
$\frac{1}{\eta r} \ln(1 - \eta r \psi) = CE(W_{t-}) - CE(W_t)$ in the
agent's certainty equivalent. The term $(\lambda_0 + \lambda_r q)\psi$
is the flow compensation for the penalty, paid while no breakdown
occurs. In expectation, a penalty thus costs its risk premium
$-\psi - \frac{1}{\eta r} \ln(1 - \eta r \psi) > 0$ per unit of
intensity.

Solving the HJB equation \eqref{hjb:risk} subject to value matching
\eqref{Val} and smooth pasting \eqref{past} yields a project value
$\pi$ and a termination threshold $\ux$. Define a contract
$(A,C,Q,\tau)$ from the solution as in Section~\ref{Sec:Recursive}:
Let $\tau$ be the first time $X_t \notin (\ux, \bar x)$, or the first
breakdown if earlier. For $t \leq \tau$, let $(A_t, Q_t, \psi_t)$ be
the maximizer in \eqref{hjb:risk} at $x = X_t$ and
$\beta_t = \phi A_t$, and determine $C_t$ and $W_t$ via
\eqref{eq:W-risk} and \eqref{cons}. For $t > \tau$, set
$A_t = Q_t = 0$ and $C_t = r\, CE(W_\tau)$.

The contract so obtained is incentive compatible by
Proposition~\ref{Prop:MartingaleLevy}: it is constant after $\tau$ and
satisfies the conditions of Lemma~\ref{Lem:IcRisk} by
construction. Moreover, the sensitivities $\beta = \phi A$ and
$\psi \leq 0$ are bounded, as they are continuous functions of
$\pi_x(X_t)$ in each risk mode and $\pi_x$ is continuous on
$[\ux, \bar x]$. We have thus found an optimal contract. The following
proposition summarizes its properties.

\begin{proposition}\label{Prop:OptimalContract}
  Let $\pi$, $a$, and $\ux$ be the project value, effort policy, and
  termination threshold under the optimal contract, and let $\psi$ be
  the breakdown penalty. Suppose that $\lambda_r$ or $\sigma$ is large
  enough.\footnote{The proof gives explicit sufficient conditions,
    \eqref{eq:sigma_cond} and \eqref{eq:suff2}.} In the continuation
  region $[\ux, \bar x]$, the following properties hold:
  \begin{enumerate}[label=(\roman*)]
  \item\label{it:brp01} Value is increasing and convex in progress:
    $\pi_x > 0$ on $(\ux, \bar x]$ and $\pi_{xx} > 0$.
  \item\label{it:brp02} The risky mode is optimal below some cutoff
    ${x_c^{SB} \in (\ux, \bar x]}$ and the safe mode above it.
  \item\label{it:brp05} The safe mode is optimal near completion for a
    high benefit: $x_c^{SB} < \bar x$ for $b$ large enough.
  \item\label{it:brp04} Effort is continuous and increasing except at
    $x_c^{SB}$, where it jumps down if $x_c^{SB} < \bar x$.
  \item\label{it:brp03} Breakdowns are penalized only in the safe mode,
    in proportion to effort: $\psi = 0$ below $x_c^{SB}$ and
    $\psi = -\phi g a/\lambda_r$ above it.
  \end{enumerate}
\end{proposition}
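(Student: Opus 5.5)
I will work directly with the HJB equation \eqref{hjb:risk} and first reduce the inner maximization pointwise. For a given mode $q$, the choice of $\psi$ only enters through the breakdown term $\psi+\frac{1}{\eta r}\ln(1-\eta r\psi)$. This term is maximized at $\psi=0$ and is strictly concave and decreasing on $\psi\le 0$. Hence in the risky mode ($q=1$) I set $\psi=0$, which is feasible by \eqref{eq:IC-risk} since $g\beta\ge0$. In the safe mode ($q=0$) the constraint $g\beta+\lambda_r\psi\le0$ binds, so $\psi=-g\phi a/\lambda_r$. Substituting gives two candidate Hamiltonians in $p\ceq\pi_x(x)$.

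In the risky mode,
\[
H_1(p,\pi)=\frac{p^2}{2\phihat}+gp-\lambda_r(\pi-\uv).
\]
In the safe mode, effort $a$ maximizes $-\frac{\phihat}{2}a^2+ap-\lambda_0\rho(\phi g a/\lambda_r)$. Here $\rho(y)\ceq y-\frac{1}{\eta r}\ln(1+\eta r y)\ge0$ is the convex risk premium of the penalty.

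This already yields part \ref{it:brp03}. It also yields the effort comparison: safe-mode effort solves a first-order condition with an extra convex cost, so it lies strictly below $p/\phihat$, the risky-mode effort at the same $p$. Both maps are continuous and increasing in $p$. Writing $H_0(p)$ for the safe-mode Hamiltonian, the ODE becomes
\[
r\pi=\max\{H_1(\pi_x,\pi),H_0(\pi_x)\}+\tfrac{\sigma^2}{2}\pi_{xx}-\lambda_0(\pi-\uv).
\]

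For part \ref{it:brp02}, I will study the switching function
\[
D(x)\ceq H_1-H_0=\Big[\frac{p^2}{2\phihat}+gp\Big]-\max_a\{\cdots\}-\lambda_r(\pi-\uv).
\]
At the termination threshold, $p=0$ and $\pi=\uv$, so $D=0$. I will also need to check that just above $\ux$ the risky mode is preferred: for small $p$ the gain $gp$ is first order, while the $\lambda_r(\pi-\uv)$ term is second order because $\pi_x(\ux)=0$. The goal is a single-crossing property: along the solution, $D$ is positive and then negative. The term $p^2/(2\phihat)+gp-H_0(p)$ is increasing in $p$, but the term $-\lambda_r(\pi-\uv)$ is decreasing in $x$.

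This single-crossing step is the main obstacle, and it is where the condition that $\lambda_r$ or $\sigma$ is large enters. My first attempt will be to differentiate $D$ along $x$ and show that at any zero of $D$, $D_x<0$. Computing $D_x=\big(\partial_p(H_1-H_0)\big)\pi_{xx}-\lambda_r\pi_x$, this requires
\[
\lambda_r\pi_x>\big(g+p/\phihat-a_0(p)\big)\pi_{xx}.
\]
I would bound $\pi_{xx}$ through the ODE, $\frac{\sigma^2}{2}\pi_{xx}=(r+\lambda_0)\pi-\ldots$, so that a large $\sigma$ makes $\pi_{xx}$ small relative to $\pi_x$. Alternatively, a large $\lambda_r$ dominates directly. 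This should give the explicit sufficient conditions.

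Before that, I need part \ref{it:brp01}. I will prove $\pi_x>0$ and $\pi_{xx}>0$ as in Proposition~\ref{Prop:SecondBest}. Monotonicity comes from a comparison or mimicking argument: starting higher, the principal can copy the policy shifted in level. For convexity, I differentiate the ODE within each mode and apply a maximum-principle argument to $\pi_x$, using smooth pasting $\pi_x(\ux)=0$ and $\pi_{xx}(\ux)>0$, the latter from the ODE at $\ux$ with $\uv>0$.

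Part \ref{it:brp05} follows from evaluating $D$ near $\bar x$. As $b\to\infty$, the term $\lambda_r(b-\uv)$ grows linearly in $b$. At the same time, $p=\pi_x(\bar x)$ grows only like $\sqrt{b}$, which I would confirm from the quadratic scaling of the ODE. The difference $H_1-H_0$ is at most $O(p)$ plus an $O(p^2)$ term with a smaller coefficient, so $D(\bar x)<0$ for large $b$.

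Finally, part \ref{it:brp04} combines these pieces. Within each mode, effort is a continuous increasing function of $\pi_x$, which is increasing by convexity. At $x_c^{SB}$, $\pi_x$ is continuous because $\pi$ is $C^1$ as a viscosity or classical solution. The switch from risky-mode effort $p/\phihat$ to the strictly smaller safe-mode effort therefore produces a downward jump.
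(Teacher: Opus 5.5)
Your reduction of the HJB (risky mode with $\psi=0$, safe mode with the binding penalty $\psi=-\phi g a/\lambda_r$), the effort comparison, and your handling of parts (i), (iv) and (v) are consistent with the paper. The single-crossing step of part (ii), however, has a genuine gap.

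Write $\Gamma(p)\ceq \frac{p^2}{2\phihat}+gp-H_0(p)$. Then $D=\Gamma(p)-\lambda_r(\pi-\uv)$ and $D_x=\Gamma'(p)\pi_{xx}-\lambda_r p$, with $\Gamma'(p)=g+p/\phihat-a_0(p)\geq g$. You propose to show $D_x<0$ at every interior zero of $D$ by bounding $\pi_{xx}$ through the ODE, so that $\lambda_r\pi_x$ dominates when $\sigma$ or $\lambda_r$ is large. No pointwise bound of this kind can hold. At a zero of $D$, substituting $\lambda_r(\pi-\uv)=\Gamma(p)$ into the HJB expresses $\pi_{xx}$, and hence $D_x$, as a function of $p$ alone:
\[
D_x=\Phi(p)\ceq\frac{2}{\sigma^2}\,\Gamma'(p)\Big[r\uv+\frac{r+\lambda_0+\lambda_r}{\lambda_r}\,\Gamma(p)-\frac{p^2}{2\phihat}-gp\Big]-\lambda_r p .
\]
Here $\Phi(0)=2gr\uv/\sigma^2>0$, which is the same fact as $D_x(\ux)=g\pi_{xx}(\ux)>0$ that you use to get the risky mode just above $\ux$. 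By continuity, $\Phi(p)>0$ for all small $p>0$, whatever the values of $\lambda_r$ and $\sigma$. Near $p=0$ the term $\Gamma'(p)\pi_{xx}$ stays near $2gr\uv/\sigma^2$ while $\lambda_r p\to 0$, so $\pi_{xx}$ is never small relative to $\pi_x$ there. The sign of $D_x$ at a zero therefore cannot be settled by local bounds. You have to control which values of $p$ can occur at zeros of $D$.

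The missing idea is the paper's ordering argument. First show that $\Phi$ is strictly decreasing on $p\geq 0$. This is where the explicit conditions \eqref{eq:sigma_cond} and \eqref{eq:suff2} come from: they follow from $\Gamma''\geq 0$, upper bounds on $\Gamma'$ and $\Gamma''$, an upper bound on the bracket, and concavity of the bracket in $p$. Then combine this with the strict monotonicity of $\pi_x$ from part (i). Since $D>0$ just above $\ux$, a component $(x_1,x_2)$ of $\{D<0\}$ with $x_2<\bar x$ would require $D_x(x_1)\le 0\le D_x(x_2)$. That is, $\Phi(\pi_x(x_1))\le 0\le \Phi(\pi_x(x_2))$ with $\pi_x(x_1)<\pi_x(x_2)$, which contradicts $\Phi$ decreasing. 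So $\{D<0\}$ is either empty or equal to $(x_c^{SB},\bar x)$.

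A smaller point in part (iii): you allow $\Gamma$ an $O(p^2)$ term ``with a smaller coefficient.'' Since $p=O(\sqrt b)$, any such term is of order $b$, the same order as $\lambda_r(b-\uv)$, and would need a coefficient comparison you do not make. The clean fact is that $\Gamma$ is at most linear. With $\mu\ceq\phi g/\lambda_r$, the first-order condition for $a_0$ gives $p/\phihat-a_0(p)=\frac{\lambda_0\mu}{\phihat}\rho'(\mu a_0)\le\frac{\lambda_0 g}{\kappa\lambda_r}$, because $\rho'<1$. Hence $\Gamma(p)\le g\big(1+\frac{\lambda_0}{\kappa\lambda_r}\big)p$, which yields $D(\bar x)<0$ for large $b$.
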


The principal encourages risk taking when little is at stake and then
shuts it down once enough progress has been made, possibly only at
completion (parts (ii) and (iii)). This pattern also arises in the
first best, where effort and risk choices are observable.\footnote{See
  the Supplemental Appendix for the analysis of the first-best
  benchmark.} In the risky mode below $x^{SB}_c$, the agent is not
penalized for breakdowns, but he is not rewarded for them either. The
pay-performance sensitivity on its own already motivates risk taking,
so a bonus upon breakdown would add compensation risk without relaxing
any constraints. In the safe mode above $x^{SB}_c$, the agent is
penalized for breakdowns to deter risk taking. As the penalty is
costly, it is set to be as small as possible (part (v)). Penalties
therefore grow with effort, and hence also with progress (part (iv)).

\begin{figure}[t]
\centering
\begin{minipage}[c]{.48\linewidth}
\centering
\includegraphics[width=\linewidth]{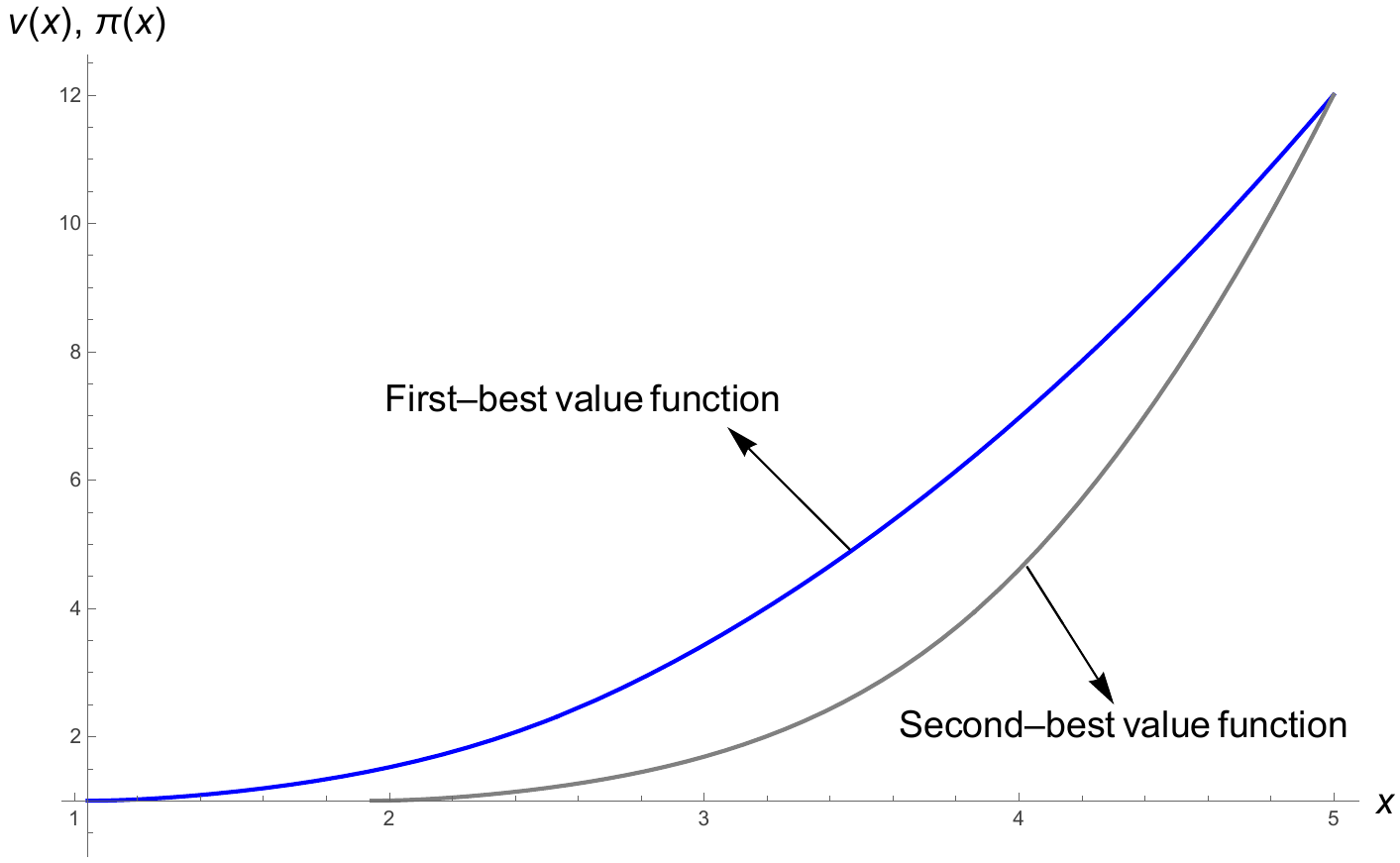}
\end{minipage}\quad
\begin{minipage}[c]{.48\linewidth}
\centering
\includegraphics[width=\linewidth]{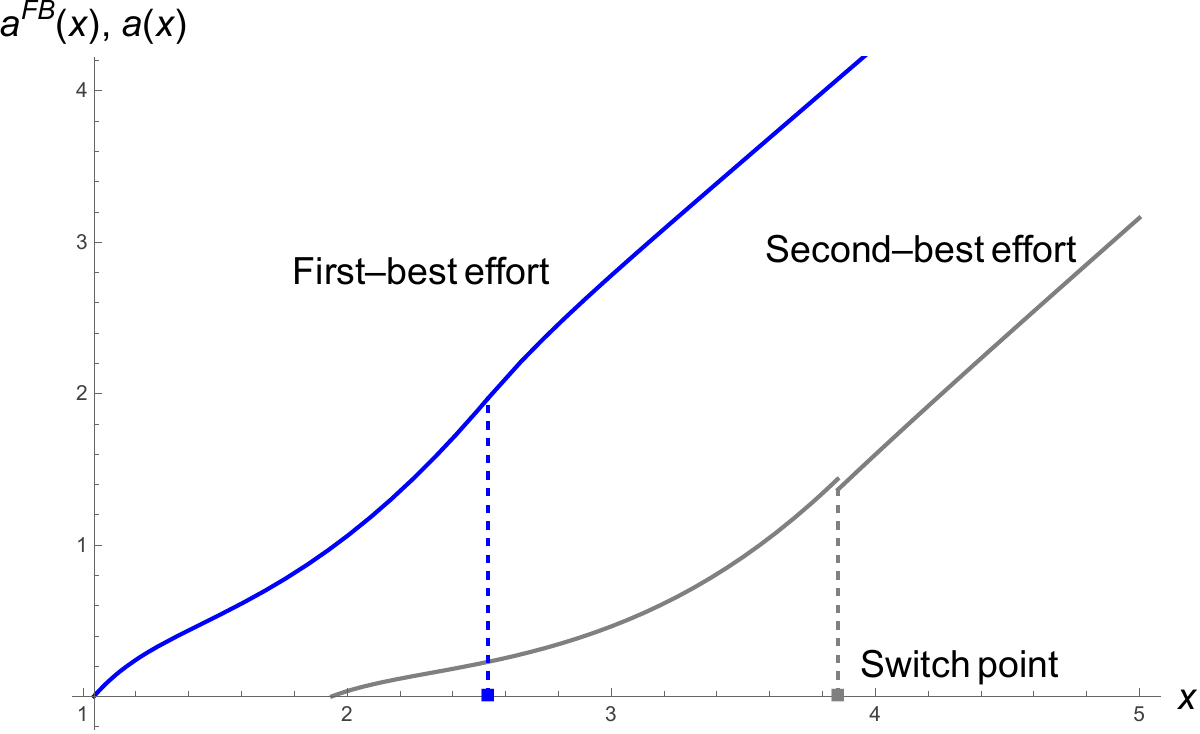}
\end{minipage}
\caption{Comparing the first best and second best under endogenous
  risk taking. Parameter values:
  $(b, \bar x, \sigma, \underline v, r, \eta, \phi, \lambda_0,
  \lambda_r, g) = (12, 5, 1, 1, 1, 2, 1, 1/2, 3,
  2)$. \label{fig:risk}}
\end{figure}

Interestingly, the agent's two hidden actions lead the principal to
implement effort that is discontinuous at the cutoff $x_c^{SB}$
between the risky and safe modes (part (iv)). At the cutoff, the
marginal benefit of effort is simply the marginal value of
progress. However, whereas in the risky mode effort $a$ requires only
the pay-performance sensitivity $\beta = \phi a$, in the safe mode it
also requires the penalty $\psi = -\phi g a/\lambda_r$ to deter risk
taking, which costs the principal its risk premium. The marginal cost
of effort is thus higher in the safe mode, leading to the downward
jump in effort at the cutoff. Effort may even peak in the interior of
the continuation region rather than at completion. In the
first best, by contrast, the risk mode does not affect the cost of
effort, and effort is continuous everywhere. \autoref{fig:risk}
illustrates.

 \section{Conclusion \label{Sec:Conc}}

 This paper studies optimal contracting for projects with gradual
 progress and lumpy success. Our modeling framework provides a
 tractable platform for studying dynamic incentives for complex
 projects. The results highlight several novel features of optimal
 project management. The presence of moral hazard leads to slower
 progress and lowers the probability of success. It makes the
 principal less failure tolerant and less ambitious than under
 observable actions. These features imply that in general a
 profit-maximizing principal will choose to pursue different projects
 than a social planner.\footnote{E.g., we show in the Supplemental
   Appendix that the principal may prefer a project with lower
   volatility even when the social planner selects the
   higher-volatility one for its increased option value.}

 If multiple approaches are available, the principal substitutes a
 fixed resetting cost for the agency costs of recovering from
 setbacks, but moral hazard can lead to more frequent or less frequent
 resets, depending on the expected amount of residual effort. Under
 endogenous breakdown risk, the principal uses discontinuous effort
 profiles and nonmonotonic incentive schemes to deter risk taking or
 to speed up the project's progress.
 
 Future research could extend our analysis in several directions. One
 possibility is to explore richer informational frictions, such as
 noisy or delayed feedback on project progress, common in R\&D, drug
 development, and engineering design. Another is to introduce
 uncertainty over project quality, with the productivity of effort
 initially unknown and learned by both parties from realized
 progress. In such a setting, effort serves a dual role: it brings the
 project closer to completion and generates information about the
 technology's viability.

\newpage
\section*{Appendix}
\setcounter{equation}{0}
\renewcommand{\theequation}{A.\arabic{equation}}
\renewcommand{\theHequation}{A.\arabic{equation}}
\setcounter{lemma}{0}
\renewcommand{\thelemma}{A.\arabic{lemma}}
\renewcommand{\theHlemma}{A.\arabic{lemma}}

\subsection*{Savings and Continuation Values}
Lemma~\ref{lem:AWE} shows that savings simply scale the agent's
continuation utility, because additional wealth is optimally consumed
as an annuity (cf.\ \citealp{he2011model}, Lemma 3). We state it for
all versions of the model at once. Let $(\mathcal F_t)$ denote the
filtration generated by the public history, that is, by progress and,
where present, breakdowns.

\begin{lemma}\label{lem:AWE}
  Fix $t \geq 0$ and a payment process $P$ adapted to
  $(\mathcal F_t)$. Let $\hat \Xi$ denote an action process (effort,
  or effort and risk) with effort component $\hat A$, and let
  \begin{equation*}
    W_t(S; P) \ceq \sup_{\hat \Xi, \hat C}
    \E^{\hat \Xi}\!\left[\int_t^{\infty} e^{-r(s-t)}\, u(\hat A_s, \hat C_s)\,
      ds \;\Big|\; \mathcal{F}_t\right]
  \end{equation*}
  be the agent's continuation utility at time $t$ with savings
  $S \in \R$, where the supremum is over $(\mathcal F_t)$-progressively
  measurable $(\hat \Xi, \hat C)$ satisfying
  $dS_s = (r S_s - \hat C_s)\, ds + dP_s$ with $S_t = S$ and the
  no-Ponzi condition. Then, almost surely,
  $W_t(S; P) = e^{-\eta r S}\, W_t(0; P)$ for all $S$. In particular,
  $W_t(S; 0) = CE^{-1}(S)$: the outside option is worth $S$ to an
  agent with savings $S$.
\end{lemma}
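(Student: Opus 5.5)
The plan is a shift argument. Fix an admissible strategy $(\hat \Xi, \hat C)$ for the agent starting from savings $0$ at time $t$. From it we build a strategy starting from savings $S$ that keeps the same action process $\hat \Xi$ and consumes $\hat C_s + rS$. Denote the savings paths by $S^0_s$ and $S^S_s$. Both evolve by the linear equation $dS_s = (rS_s - \hat C_s)\,ds + dP_s$, with the extra consumption $rS$ in the second case. Taking the difference,
\begin{equation*}
d(S^S_s - S^0_s) = r(S^S_s - S^0_s)\,ds - rS\,ds,
\end{equation*}
so $S^S_s - S^0_s \equiv S$ for all $s \geq t$. Hence $e^{-r s}S^S_s \to 0$ if and only if $e^{-rs}S^0_s \to 0$, and the no-Ponzi condition is preserved in both directions. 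The map is a bijection between the admissible sets with initial savings $0$ and $S$; its inverse subtracts $rS$ from consumption. Measurability and the distribution of the public history under $\hat \Xi$ are unaffected, because the actions are unchanged.

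Next we compare payoffs. By \eqref{carau}, $u(\hat a, \hat c + rS) = e^{-\eta r S} u(\hat a,\hat c)$, and the factor $e^{-\eta r S}$ is a deterministic constant for each fixed $S$. It therefore factors out of the conditional expectation: the payoff of the shifted strategy equals $e^{-\eta r S}$ times the payoff of the original. Since $e^{-\eta r S} > 0$, taking the (essential) supremum over the bijectively corresponding admissible sets gives $W_t(S;P) = e^{-\eta r S} W_t(0;P)$ almost surely. We apply this for each fixed $S$.

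The main obstacle is upgrading this to ``almost surely, for all $S$,'' since $W_t$ is defined as an essential supremum of conditional expectations and there are uncountably many $S$. I would first take the null set as the union over rational $S$. I would then use a version argument: define $W_t(S;P)$ through the same family of shifted strategies, so the identity holds pathwise by construction rather than up to separate null sets. Equivalently, one can observe that the construction produces an isomorphism of the optimization problems that does not depend on $\omega$.

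For the last claim, take $P \equiv 0$. Then $W_t(0;0)$ is the value of an agent with no payments and no savings. His budget constraint forces consumption to have zero present value, in the no-Ponzi sense. By Jensen's inequality applied to the concave $u$ (or by a one-line Euler argument), zero effort and zero consumption are optimal, giving $W_t(0;0) = \int_t^\infty e^{-r(s-t)} u(0,0)\,ds = -1/(\eta r) = CE^{-1}(0)$. The rigorous version of this step again requires bounding the utility from any no-Ponzi consumption path. I would handle it by comparison: for any admissible $\hat C$, Jensen's inequality bounds the payoff by that of the constant annuity of its present value, and that present value is at most zero. Scaling then gives $W_t(S;0) = e^{-\eta r S}(-1/(\eta r)) = CE^{-1}(S)$ by \eqref{eq:CE}.
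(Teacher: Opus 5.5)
Your proposal is correct and follows essentially the same route as the paper. Both use the consumption shift $\hat C \mapsto \hat C + rS$, which gives savings path $S^0 + S$ and a bijection of no-Ponzi-feasible sets that leaves the law of the public history unchanged, together with the CARA factorization $u(\hat a, \hat c + rS) = e^{-\eta r S}u(\hat a, \hat c)$, and then note that consuming nothing is optimal when $P = 0$. Your version argument for the ``almost surely, for all $S$'' quantifier and your Jensen justification of the $P=0$ step only spell out details the paper leaves implicit; the Jensen step is best done on $[t,T]$ with $T \to \infty$, as in Step~2 of the proof of Proposition~\ref{Prop:MartingaleLevy}, since the present value of consumption exists only as a limit.
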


\begin{proof}
  Let $(\hat \Xi, \hat C)$ be feasible from savings $0$, with savings
  path $S^0$. Then $(\hat \Xi, \hat C + rS)$ is feasible from savings
  $S$, with savings path $S^0 + S$, since
  \begin{equation*}
    d(S^0_s + S) = \big( r (S^0_s + S) - (\hat C_s + rS) \big)\, ds
    + dP_s
  \end{equation*}
  and $e^{-rs}(S^0_s + S) \to 0$ if and only if
  $e^{-rs} S^0_s \to 0$. By the same argument in reverse,
  $\hat C \mapsto \hat C + rS$ is a bijection between the consumption
  processes feasible from $0$ and from $S$ for each $\hat \Xi$. The
  law of the public history is the same under both, and by
  \eqref{carau},
  \begin{equation*}
    u(\hat A_s, \hat C_s + rS) = e^{-\eta r S}\, u(\hat A_s, \hat C_s),
  \end{equation*}
  so the payoff of every policy is multiplied by $e^{-\eta r S}$.
  Taking suprema gives the first claim. For $P = 0$, consuming
  nothing from zero savings is optimal and yields
  $W_t(0; 0) = -1/(\eta r) = CE^{-1}(0)$. Therefore,
  $W_t(S; 0) = e^{-\eta r S} CE^{-1}(0) = CE^{-1}(S)$ by
  \eqref{eq:CE}.
\end{proof}

\subsection*{Proof of Proposition \ref{Prop:FirstBest}}

\begin{lemma}\label{lem-convex}
  If $\hat x \in [\underline x^{FB}, \bar x]$ satisfies $v_x(\hat x) = 0$
  and $v_x(x) \geq 0$ $\forall x \in [\underline x^{FB},\hat x]$, then
  $v_{xx}(\hat x) >0$.
\end{lemma}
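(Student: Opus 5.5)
The plan is to read the sign of $v_{xx}(\hat x)$ directly off the HJB equation. Substituting the first-order condition \eqref{FB-FOC} into \eqref{hjb:planner} gives, on the continuation region,
\begin{equation*}
  r v(x) = \frac{1}{2\phi} v_x(x)^2 + \frac{1}{2}\sigma^2 v_{xx}(x).
\end{equation*}
At a point $\hat x$ where $v_x(\hat x)=0$, this becomes $v_{xx}(\hat x) = 2 r v(\hat x)/\sigma^2$. Hence the whole claim reduces to showing $v(\hat x) > 0$.

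For this I use the monotonicity hypothesis. Since $v_x \geq 0$ on $[\underline x^{FB}, \hat x]$, $v$ is nondecreasing there. Therefore $v(\hat x) \geq v(\underline x^{FB}) = \underline v > 0$, using value matching and the standing assumption $\underline v \in (0,b)$. It follows that $v_{xx}(\hat x) \geq 2r\underline v/\sigma^2 > 0$.

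Two details need care. First, the ODE must hold at $\hat x$, including the endpoint cases $\hat x = \underline x^{FB}$ or $\hat x = \bar x$. This follows from the stated smoothness of $v$ on the closed continuation region, taking one-sided limits at the endpoints. Second, the argument really needs only $v(\hat x) > 0$. One could even bypass the monotonicity hypothesis by arguing that $v \geq \underline v$ everywhere, since immediate termination is always feasible. I will nonetheless follow the hypothesis as stated, because it is the clean route.

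The only real obstacle is confirming that the HJB holds with equality at $\hat x$, in particular at the boundary point $\underline x^{FB}$ where smooth pasting gives $v_x = 0$. I would handle this by continuity of $v$, $v_x$ and $v_{xx}$ up to the boundary.
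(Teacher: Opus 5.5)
Your proof is correct and follows the paper's argument exactly: evaluate the HJB at $\hat x$ where $v_x(\hat x)=0$ (so first-best effort is zero) to get $v_{xx}(\hat x) = 2r v(\hat x)/\sigma^2$, then use monotonicity on $[\underline x^{FB},\hat x]$ and value matching to conclude $v(\hat x) \geq \underline v > 0$. Your additional remarks about the endpoints and the alternative bound $v \geq \underline v$ from immediate termination are valid but not needed.
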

\begin{proof}
  We have $ v(\hat x) \geq v(\underline x^{FB}) = \underline v >0$, where
  the first inequality follows since $v_x \geq 0$ on
  $[\underline x^{FB},\hat x]$, and the equality is by value matching at
  $\underline x^{FB}$. Evaluating the HJB equation \eqref{hjb:planner}
  at $\hat x$ gives $v_{xx} (\hat x) = 2\sigma^{-2}r v(\hat x) >0$, since
  $a^{FB}(\hat x) =\phi^{-1} v_x (\hat x)=0$ by the FOC \eqref{FB-FOC}.
\end{proof}

\begin{proof}[Proof of part \ref{it:d02}]
  Smooth pasting implies that $v_x(\underline x^{FB})=0$. Taking
  $\hat x= \underline x^{FB}$ in Lemma~\ref{lem-convex} then shows
  that $v_x>0$ on $(\underline x^{FB},\underline x^{FB} +\epsilon)$
  for some $\epsilon >0$. Suppose toward a contradiction that the set
  $O\ceq \{x \in [\underline x^{FB} +\epsilon, \bar x]: v_x(x)\leq
  0\}$ is nonempty. Let $x^* \ceq \inf O$. By continuity,
  $v_x(x^*) = 0$. Moreover, $v_x \geq 0$ on $[\underline x^{FB},x^*]$
  by definition of $x^*$. Taking $\hat x = x^*$ in
  Lemma~\ref{lem-convex} then implies that $v_x$ is increasing at
  $x^*$, which contradicts $x^*$ being the greatest lower bound of
  $O$. We conclude that $O$ is empty. Hence, $v_x>0$ on
  $(\underline x^{FB},\bar x]$.
\end{proof}
\begin{proof}[Proof of part \ref{it:d03}]
  Lemma~\ref{lem-convex} and smooth pasting imply that
  $v_{xx}(\underline x^{FB})>0$. Suppose toward a contradiction that the
  set $O\ceq \{x \in [\underline x^{FB}, \bar x]: v_{xx}(x)\leq 0\}$ is
  nonempty. Let $x^* \ceq \inf O$. By continuity, $v_{xx}(x^*) = 0$ and
  thus $x^*>\underline x^{FB}$. Differentiating the HJB equation
  \eqref{hjb:planner} with \eqref{FB-FOC} substituted in gives, for
  all $x \in [\underline x^{FB}, \bar x]$,
  \begin{equation}\label{eq:Envelope}
    r v_x(x) = \phi^{-1} v_x(x) v_{xx}(x) + \tfrac{1}{2} \sigma^2
    v_{xxx}(x).
  \end{equation}
  At $x^*$, this gives $v_{xxx}(x^*) > 0$ by part (i), so $v_{xx}$ is
  increasing at $x^*$, contradicting $x^*$ being the greatest lower
  bound of $O$. We conclude that $O$ is empty.
\end{proof}
\begin{proof}[Proof of part \ref{it:d04}]
  The result is immediate from the FOC \eqref{FB-FOC} and part (ii).
\end{proof}
\begin{proof}[Proof of part \ref{it:d05}]
  By \eqref{FB-FOC}, $a^{FB}(X_t) = \phi^{-1} v_x(X_t)$, so It\^o's
  formula gives, for $t < \tau$,
  \begin{equation*}
    d a^{FB}(X_t) = \phi^{-1} \big[ v_{xx}(X_t)\, dX_t + \tfrac{1}{2}
      \sigma^2 v_{xxx}(X_t)\, dt \big] = r\, a^{FB}(X_t)\, dt +
    \phi^{-1} \sigma v_{xx}(X_t)\, dZ_t,
  \end{equation*}
  where the last equality uses \eqref{progress} and
  \eqref{eq:Envelope}. Thus
  $d \big(e^{-rt} a^{FB}(X_t)\big) = e^{-rt} \phi^{-1} \sigma
  v_{xx}(X_t)\, dZ_t$, and since $v_{xx}$ is bounded on
  $[\underline x^{FB}, \bar x]$, the stopped process is a martingale.
\end{proof}

\subsection*{Proof of Proposition \ref{Prop:compfb}}
\begin{proof}[Proof of part \ref{it:pcfb01}]
  Fix parameters such that $b_1 > b_2$. The HJB equation
  \eqref{hjb:planner} with the FOC \eqref{FB-FOC} substituted in is
  autonomous in $x$: the right-hand side depends on $v$, $v_x$, and
  $v_{xx}$ but not on $x$ directly. Let $g$ denote the unique solution
  to
\[
  r g = \frac{1}{2\phi}(g')^2 + \frac{1}{2}\sigma^2 g''
\]
with initial conditions $g(0) = \underline v$ and $g'(0) = 0$. By
Proposition~\ref{Prop:FirstBest}, $g$ is increasing and strictly
convex on $(0,\infty)$, and $g' > 0$ is increasing.

By autonomy, each first-best value function is a horizontal translate
of $g$: For $i=1,2$, we have $v_i(x) = g(x - \underline x^{FB}_i)$ for
$x \in [\underline x^{FB}_i, \bar x]$. Value matching 
$v_i(\bar x) = b_i$ requires
\[
  g(\bar x - \underline x^{FB}_i) = b_i.
\]
Since $g$ is increasing and $b_1 > b_2$, it follows that
$\bar x - \underline x^{FB}_1 > \bar x - \underline x^{FB}_2$, i.e.,
$\underline x^{FB}_1 < \underline x^{FB}_2$.

For the effort ranking, note that
$a^{FB}_i(x) = \phi^{-1} v_{i,x}(x) = \phi^{-1} g'(x - \underline
x^{FB}_i)$.  Since $g'$ is increasing and
$x - \underline x^{FB}_1 > x - \underline x^{FB}_2$, we have
$a^{FB}_1(x) > a^{FB}_2(x)$ for all
$x \in (\underline x^{FB}_2, \bar x]$. The comparison extends to
$(\underline x^{FB}_1, \bar x]$ since project 2 is terminated in this
region.
\end{proof}

\begin{proof}[Proof of part~\ref{it:pcfb02}]
  Fix parameters such that $r_1 > r_2$. A mimicking argument implies
  that $v_1(x) \leq v_2(x)$ for all $x$ and
  $\underline{x}^{FB}_1 \geq \underline{x}^{FB}_2$.\footnote{Realized
    payoffs are not ranked path by path, as a higher discount rate
    also reduces the present value of the effort costs. However,
    because the expected continuation payoff under project~1's policy
    is bounded below by the salvage value $\underline v > 0$ at every
    history, a decrease in the discount rate raises the expected
    payoff at every history, and in particular at time zero.} We first show
  $\underline{x}^{FB}_1 > \underline{x}^{FB}_2$. If
  $\underline{x}^{FB}_1 = \underline{x}^{FB}_2 \eqqcolon \underline{x}$, then
  $v_{i,xx}(\underline{x}) = 2r_i \underline{v}/\sigma^2$ by the
  HJB. Since $r_1 > r_2$, we have
  $v_{1,xx}(\underline{x}) > v_{2,xx}(\underline{x})$, so $v_1 > v_2$
  in a right neighborhood, contradicting $v_1 \leq v_2$.

  Let $D \ceq v_2 - v_1$ on $[\ux^{FB}_2, \bar x]$, with $v_1 \ceq \uv$
  below $\ux^{FB}_1$. Then $D$ is continuously differentiable,
  $D(\ux^{FB}_1) > 0 = D(\bar x)$, and $D_x = v_{2,x} > 0$ on
  $(\ux^{FB}_2, \ux^{FB}_1]$. By \eqref{FB-FOC},
  $a^{FB}_1 - a^{FB}_2 = -\phi^{-1}D_x$, so it suffices to show that $D_x$ has
  a unique zero $x^\dagger_r$ in $(\ux^{FB}_1, \bar x)$, with
  $D_x < 0$ above it. A zero exists as $D(\ux^{FB}_1) > D(\bar x)$;
  let $x^\dagger_r$ be the first. Let $T \ceq r_1 v_1 - r_2 v_2$. At
  a zero of $D_x$, subtracting the HJB equations gives
  $\frac{\sigma^2}{2}D_{xx} = -T$, and if also $D_{xx} = 0$,
  subtracting \eqref{eq:Envelope} for the two projects gives
  $\frac{\sigma^2}{2}D_{xxx} = -(r_1 - r_2)v_{1,x} < 0$. Since
  $D_x > 0$ to the left of $x^\dagger_r$, $D_{xx}(x^\dagger_r) \leq 0$,
  and $D_{xx}(x^\dagger_r) = 0$ is impossible, as $D_x$ would then
  have a strict local maximum at $x^\dagger_r$. Hence
  $T(x^\dagger_r) > 0$. Moreover,
  $T_x = (r_1 - r_2)v_{1,x} - r_2 D_x > 0$ wherever $D_x \leq 0$. If
  $D_x$ had a second zero, then at the first such point $x^\ddagger$
  we would have $T(x^\ddagger) > T(x^\dagger_r) > 0$, so
  $D_{xx}(x^\ddagger) < 0$, whereas $D_x$ reaching zero from below
  requires $D_{xx}(x^\ddagger) \geq 0$.
\end{proof}

\begin{proof}[Proof of part \ref{it:pcfb03}]
    Fix parameters such that $\sigma_1 > \sigma_2$. First,
  $v_1 \geq v_2$: implement project 2's effort policy $a^{FB}_2$ in
  project 1 and stop when progress leaves $[\ux^{FB}_2, \bar x]$. By
  project 2's HJB equation and the convexity of $v_2$
  (Proposition~\ref{Prop:FirstBest}\ref{it:d03}),
  $a^{FB}_2 v_{2,x} + \frac{\sigma_1^2}{2}v_{2,xx} - r v_2 =
  \frac{\phi}{2}(a^{FB}_2)^2 + \frac{\sigma_1^2 - \sigma_2^2}{2}v_{2,xx}
  \geq \frac{\phi}{2}(a^{FB}_2)^2$, so by It\^o's formula $v_2(x)$ is
  at most the payoff of this policy in project 1, which is at most
  $v_1(x)$; the inequality is strict on $(\ux^{FB}_2, \bar x)$ since
  $v_{2,xx} > 0$. Hence $\ux^{FB}_1 \leq \ux^{FB}_2$ as well.

  We first show $\underline x^{FB}_1 < \underline x^{FB}_2$. Suppose
  toward a contradiction that
  $\underline x^{FB}_1 = \underline x^{FB}_2 \eqqcolon \underline
  x$. Then $v_i(\underline x) = \underline v$ and
  $v_{i,x}(\underline x) = 0$ for $i = 1,2$. Evaluating the HJB
  \eqref{hjb:planner} at $\underline x$ gives
  $v_{i,xx}(\underline x) = 2r\underline v/\sigma_i^2$. Since
  $\sigma_1 > \sigma_2$, we have
  $v_{1,xx}(\underline x) < v_{2,xx}(\underline x)$, so $v_2 > v_1$ in
  a right neighborhood of $\underline x$, contradicting
  $v_1 \geq v_2$.

  Given $\underline x^{FB}_1 < \underline x^{FB}_2$, the function
  $D \ceq v_1 - v_2$ satisfies $D(x) = 0$ for all
  $x \leq \underline x^{FB}_1$, $D(\underline x^{FB}_2) > 0$ (since
  $v_1(\underline x^{FB}_2) > \underline v = v_2(\underline
  x^{FB}_2)$), $D(x) \geq 0$ for all $x$, and $D(\bar x) = 0$. Note
  that $D_x = v_{1,x} - v_{2,x} = \phi(a^{FB}_1 - a^{FB}_2)$ by
  \eqref{FB-FOC}. Since $v_{2,x}(\underline x^{FB}_2) = 0$, we have
  $D_x(\underline x^{FB}_2) = v_{1,x}(\underline x^{FB}_2) > 0$, hence
  $a^{FB}_1(\underline x^{FB}_2) > a^{FB}_2(\underline x^{FB}_2) = 0$.

  We next show $D_x(\bar x) < 0$. Since $D \geq 0$ and
  $D(\bar x) = 0$, we have $D_x(\bar x) \leq 0$. If $D_x(\bar x) = 0$,
  then $v_{1,x}(\bar x) = v_{2,x}(\bar x)$ and the HJB equations at
  $\bar x$ give
  $\sigma_1^2 v_{1,xx}(\bar x) = \sigma_2^2 v_{2,xx}(\bar x)$. Since
  $\sigma_1 > \sigma_2$, this implies
  $D_{xx}(\bar x) = v_{1,xx}(\bar x) - v_{2,xx}(\bar x) < 0$. Combined
  with $D(\bar x) = D_x(\bar x) = 0$, this gives $D < 0$ in a left
  neighborhood of $\bar x$, contradicting $D \geq 0$. Hence
  $D_x(\bar x) < 0$, i.e., $a^{FB}_1(\bar x) < a^{FB}_2(\bar x)$.

  Define $x^{\dagger}$ as the infimum of
  $\{x > \underline x^{FB}_2 : D_x(x) = 0\}$ and $x^{\dagger\dagger}$
  as the supremum of $\{x < \bar x : D_x(x) = 0\}$. Since
  $D_x(\underline x^{FB}_2) > 0$ and $D_x(\bar x) < 0$, both
  thresholds exist and are interior. By construction, $D_x > 0$ on
  $(\underline x^{FB}_2, x^{\dagger})$ and $D_x < 0$ on
  $(x^{\dagger\dagger}, \bar x)$, which gives
  $a^{FB}_1(x) > a^{FB}_2(x)$ for
  $x \in (\underline x^{FB}_2, x^{\dagger})$ and
  $a^{FB}_1(x) < a^{FB}_2(x)$ for
  $x \in (x^{\dagger\dagger}, \bar x)$. For
  $x \in (\underline x^{FB}_1, \underline x^{FB}_2)$,
  $a^{FB}_1(x) > 0 = a^{FB}_2(x)$ since project 2 is terminated in
  this region.

\end{proof}

\begin{proof}[Proof of part \ref{it:pcfb04}]
  Fix parameters such that $\phi_1 > \phi_2$. A standard mimicking
  argument implies that $v_1(x) \leq v_2(x)$ for all $x$ and
  $\underline x^{FB}_1 \geq \underline x^{FB}_2$. Moreover, the value ranking
  is strict wherever project~1 continues: for $x \in (\underline x^{FB}_1, \bar x)$,
  implementing project~1's optimal effort and stopping policy for
  project~2 yields
  \[
    v_2(x) \;\geq\; v_1(x) + \E_x\left[\int_0^{\tau_1}
      e^{-rt}\,\frac{\phi_1-\phi_2}{2}\,
      \big(a^{FB}_1(X_t)\big)^2\,dt\right] \;>\; v_1(x),
  \]
  where $\tau_1$ denotes project~1's stopping time, and the second
  inequality holds because, started at $x \in (\underline x^{FB}_1, \bar x)$, the
  process spends positive expected discounted time in the region where
  $a^{FB}_1 > 0$ before $\tau_1$.

  We next show $\underline x^{FB}_1 > \underline x^{FB}_2$. Suppose toward a
  contradiction that
  $\underline x^{FB}_1 = \underline x^{FB}_2 \eqqcolon \underline x$. For
  $i = 1,2$, $v_i$ is infinitely differentiable on
  $[\underline x, \bar x]$ and satisfies the HJB equation
  \eqref{hjb:planner} with the FOC \eqref{FB-FOC} substituted in;
  moreover, $v_i(\underline x) = \underline v$ and
  $v_{i,x}(\underline x) = 0$ by value matching and smooth
  pasting. Evaluating the ODE, \eqref{eq:Envelope}, and the derivative
  of \eqref{eq:Envelope} at $\underline x$ gives
  \[
    v_{i,xx}(\underline x) = \frac{2r\underline v}{\sigma^2}, \qquad
    v_{i,xxx}(\underline x) = 0, \qquad
    v_{i,xxxx}(\underline x) =
    \frac{2}{\sigma^2}\left(r\,v_{i,xx}(\underline x)
      - \frac{v_{i,xx}(\underline x)^2}{\phi_i}\right).
  \]
  Thus $v_1$ and $v_2$ agree at $\underline x$ up to third order,
  while their fourth derivatives satisfy
  \[
    v_{2,xxxx}(\underline x) - v_{1,xxxx}(\underline x)
    = -\frac{2}{\sigma^2}
    \left(\frac{2r\underline v}{\sigma^2}\right)^2
    \left(\frac{1}{\phi_2}-\frac{1}{\phi_1}\right) < 0.
  \]
  By Taylor's theorem, $v_2(x) < v_1(x)$ for all $x$ in a right
  neighborhood of $\underline x$, contradicting the strict ranking
  established above. We conclude that
  $\underline x^{FB}_1 > \underline x^{FB}_2$.

  Given $\underline x^{FB}_1 > \underline x^{FB}_2$, at $\underline x^{FB}_1$ we
  have $a^{FB}_1(\underline x^{FB}_1) = 0 < a^{FB}_2(\underline x^{FB}_1)$. Suppose
  toward a contradiction that there exists a first crossing point
  $x^* > \underline x^{FB}_1$ where $a^{FB}_1(x^*) = a^{FB}_2(x^*) \eqqcolon
  a$. Since $a^{FB}_1$ approaches $a^{FB}_2$ from below,
  $a^{FB}_{1,x}(x^*) \geq a^{FB}_{2,x}(x^*)$. Equal efforts imply
  $v_{1,x}(x^*) = \phi_1 a > \phi_2 a = v_{2,x}(x^*)$. Subtracting the
  HJB equations at $x^*$,
  \[
    r\bigl(v_1(x^*) - v_2(x^*)\bigr) = \tfrac{1}{2}(\phi_1 - \phi_2)\,
    a^2 + \tfrac{1}{2}\sigma^2 \bigl(v_{1,xx}(x^*) -
    v_{2,xx}(x^*)\bigr).
  \]
  The left-hand side is nonpositive (since $v_1 \leq v_2$) and the
  first term on the right is strictly positive (since
  $\phi_1 > \phi_2$ and $a > 0$), so $v_{1,xx}(x^*) <
  v_{2,xx}(x^*)$. Hence
  $a^{FB}_{1,x}(x^*) = \phi_1^{-1} v_{1,xx}(x^*) < \phi_2^{-1}
  v_{2,xx}(x^*) = a^{FB}_{2,x}(x^*)$, contradicting
  $a^{FB}_{1,x}(x^*) \geq a^{FB}_{2,x}(x^*)$.
\end{proof}

\subsection*{Proof of Proposition \ref{Prop:SecondBest}}
Parts \ref{it:s01}, \ref{it:s02}, and \ref{it:s04} follow by
Proposition~\ref{Prop:FirstBest}, and part~\ref{it:s03} from
$a = \phihat^{-1} \pi_x$ and $\beta = \phi a$.

\subsection*{Proof of Proposition~\ref{Prop:comp}}

\begin{proof}[Proof of parts~\ref{it:pc01}--\ref{it:pc03}]
  The benefit $b$ enters the auxiliary problem \eqref{prpr} as it
  enters the first-best problem, and $\eta$ and $\phi$ enter only
  through $\phihat$, which is increasing in each, so the three parts
  are Proposition~\ref{Prop:compfb}\ref{it:pcfb01} and \ref{it:pcfb04}
  applied to the auxiliary problem.
\end{proof}

\begin{proof}[Proof of part~\ref{it:pc04}]
  Let $\ux(r, \phihat)$ denote the termination threshold of the
  auxiliary problem with discount rate $r$ and cost parameter
  $\phihat$, so that $\ux_i = \ux(r_i, \phihat_i)$ with
  $\phihat_1 > \phihat_2$. Holding $\phihat$ fixed, a higher discount
  rate raises the threshold by the argument in the proof of
  Proposition~\ref{Prop:compfb}\ref{it:pcfb02}; holding $r$ fixed, a
  higher cost parameter raises it by
  Proposition~\ref{Prop:compfb}\ref{it:pcfb04}. Hence
  $\ux_1 = \ux(r_1, \phihat_1) > \ux(r_1, \phihat_2) > \ux(r_2,
  \phihat_2) = \ux_2$.

  For effort, let $\theta_i \ceq \pi_i / \phihat_i$, so that
  $a_i = \theta_{i,x}$ and $\theta_i$ solves the first-best HJB
  equation with $\phi = 1$. Both projects are active on
  $[\ux_1, \bar x]$, with $a_1(\ux_1) = 0 < a_2(\ux_1)$. Let
  $x^{\ddagger} \ceq \inf\{x \in (\ux_1, \bar x] : a_1(x) \geq
  a_2(x)\}$, with $x^{\ddagger} \ceq \bar x$ if the set is empty. Then
  $a_1 < a_2$ on $(\ux_2, x^{\ddagger})$, as $a_1 = 0 < a_2$ on
  $(\ux_2, \ux_1]$. If $x^{\ddagger} < \bar x$, the argument in the
  proof of Proposition~\ref{Prop:compfb}\ref{it:pcfb02}, applied to
  $D \ceq \theta_2 - \theta_1$ and $T \ceq r_1\theta_1 - r_2\theta_2$,
  shows that $a_1 > a_2$ on $(x^{\ddagger}, \bar x]$.
\end{proof}

\subsection*{Proof of Proposition \ref{Prop:Principal}}

\begin{proof}[Proof of parts~\ref{it:p02} and \ref{it:p03}]
  Both parts follow by Proposition~\ref{Prop:compfb}\ref{it:pcfb04}.
\end{proof}

\begin{proof}[Proof of part~\ref{it:p01}]
  The proof of Proposition~\ref{Prop:compfb}\ref{it:pcfb04} gives
  $\pi < v$ on $(\ux, \bar x)$, and on $(\ux^{FB}, \ux]$ we have
  $\pi(x) = \uv < v(x)$. For the upper bound, using the
  first-best policy in 
  \eqref{prpr} gives
  \begin{equation*}
    \pi(x) \geq v(x) - \frac{\phihat - \phi}{2}\,
    \E_x\!\left[\int_0^{\tau^{FB}} e^{-rt} a^{FB}(X_t)^2\, dt\right],
  \end{equation*}
  where $\tau^{FB}$ is the first-best stopping time. By the definition
  of $v$,
  \begin{equation*}
    \frac{\phi}{2}\, \E_x\!\left[\int_0^{\tau^{FB}} e^{-rt}
      a^{FB}(X_t)^2\, dt\right] = \E_x\!\left[ e^{-r \tau^{FB}} \big(
      b\, \mathbf{1}_{\{X_{\tau^{FB}} = \bar x\}} + \uv\,
      \mathbf{1}_{\{X_{\tau^{FB}} < \bar x\}} \big) \right] - v(x)
    \leq b - v(x).
  \end{equation*}
  Combining,
  $v(x) - \pi(x) \leq \frac{\phihat - \phi}{\phi}\,(b - v(x)) =
  (\kappa - 1)(b - v(x)) = \phi \eta r \sigma^2 (b - v(x))$.
\end{proof}

\subsection*{Proof of Proposition \ref{Prop:Coupling}}
Set $Y_t \ceq X^{FB}_t - X_t$ on $[0, \tau \wedge \tau^{FB}]$. If
$\tau = 0$, the claims are immediate; so let $\tau > 0$. The Brownian
increments cancel, so $Y$ is differentiable with $Y_0 = 0$
and $\dot Y_t = a^{FB}(X^{FB}_t) - a(X_t)$. If $Y_t = 0$, then
$X^{FB}_t = X_t \in [\ux, \bar x]$ and
$\dot Y_t = a^{FB}(X_t) - a(X_t) > 0$ by
Proposition~\ref{Prop:Principal}\ref{it:p03}. Hence $Y$ becomes
positive immediately after $t = 0$ and cannot return to zero, which
would require a nonpositive derivative at the point of return; so
$Y_t > 0$ on $(0, \tau \wedge \tau^{FB}]$. Two consequences
follow. First, $\tau^{FB} \geq \tau$ unless the first best completes
before $\tau$, because termination before $\tau$ would require
$X^{FB}_t = \ux^{FB} < \ux \leq X_t < X^{FB}_t$. This gives
$X^{FB}_{t \wedge \tau^{FB}} > X_t$ for $0 < t < \tau$. Second, on
$\{X_\tau = \bar x\}$, if $\tau^{FB} \geq \tau$ then
$X^{FB}_\tau > X_\tau = \bar x$, which is impossible; so the first
best completed strictly before $\tau$. On
$\{X^{FB}_{\tau^{FB}} = \ux^{FB}\}$, the first best did not complete
before $\tau$, so $\tau^{FB} \geq \tau$, with equality impossible
since $X^{FB}_\tau > X_\tau \geq \ux > \ux^{FB}$; thus
$\tau < \tau^{FB}$, and $X_\tau = \bar x$ is excluded by the previous
case, so the second best terminated at $\tau$. The inclusion of the
completion events gives the weak ranking of completion
probabilities. It is strict because the second best terminates with
positive probability, and at that time the first best has either
completed the project or is still active, in which case it completes
with positive probability.

\subsection*{Proof of Proposition~\ref{Prop:Endo}}
Let $D \ceq v - \pi \geq 0$, where the inequality holds by a mimicking
argument. Both continuation regions contain $0$ as the project is
worth starting by assumption. First, $\bar x \leq \bar x^{FB}$:
otherwise $\bar x^{FB} \in (\ux, \bar x)$, so
$\pi(\bar x^{FB}) > b(\bar x^{FB}) = v(\bar x^{FB})$, contradicting
$D \geq 0$. Next, $\bar x < \bar x^{FB}$: if $\bar x = \bar x^{FB}$,
then value matching and smooth pasting give
$D(\bar x) = D_x(\bar x) = 0$, and the two HJB equations at $\bar x$
give
$\frac{1}{2}\sigma^2 (\pi_{xx} - v_{xx})(\bar x) = \frac{1}{2}
b_x(\bar x)^2 (\phi^{-1} - \phihat^{-1}) > 0$, where the inequality
uses $b_x > 0$, so $D_{xx}(\bar x) < 0$ and $D < 0$ to the left of
$\bar x$, a contradiction. Finally, $\ux > \ux^{FB}$ follows from the
proof of Proposition~\ref{Prop:compfb}\ref{it:pcfb04}, which does not
use the upper endpoint of the continuation region, since
$\phihat > \phi$.

\subsection*{Proof of Proposition \ref{Prop:Spot}}
Under $\Gamma$, the agent receives
$d\Gamma_s = \frac{\phi}{2} A_s^2\, ds + dCE_s$ until $\tau$ and
nothing thereafter. Integrating by parts, with $CE$ constant after
$\tau$, gives for every $t \geq 0$, almost surely,
\begin{equation*}
  \int_t^\tau e^{-r(s-t)} \Big( \frac{\phi}{2} A_s^2\, ds + dCE_s
  \Big) = \int_t^\infty e^{-r(s-t)} C_s\, ds - CE_t .
\end{equation*}
The two payment streams thus differ in present value by $CE_t$ along
every path, and $CE_t$ is known at $t$. Hence, along every path,
savings plus the present value of receipts are the same under $\Gamma$
with savings $S$ as under $C$ with savings $S - CE_t$, so by the
intertemporal budget constraint the two problems have the same
feasible policies and payoffs, which is the first claim. By
Lemma~\ref{lem:AWE}, the agent's continuation utility under $\Gamma$
with savings $S$ is therefore $e^{-\eta r (S - CE_t)} W_t$, and his
outside option is worth $S$. Since the certainty equivalent of the
former is $S$ by \eqref{eq:CE}, the participation constraint binds. At
$t = 0$, $CE_0 = 0$ as \eqref{ir} binds, so the agent's problem under
$\Gamma$ from zero savings is his problem under $C$. He thus chooses
the same effort and consumption, and as the payments have the same
present value along every path, both parties' expected payoffs are the
same under $\Gamma$ as under $C$. Finally, following the recommended
effort and consuming $C_s$ implies that the agent's savings satisfy
$dS_s = r S_s\, ds + d\Gamma_s - C_s\, ds = r(S_s - CE_s)\, ds +
dCE_s$ by \eqref{eq:spot} and \eqref{cons}, and after $\tau$ with
$d\Gamma_s = dCE_s = 0$ and $C_s = r\, CE_s$, so
$S_s - CE_s = (S_0 - CE_0) e^{rs} = 0$ for all $s$.

\subsection*{Proof of Lemma~\ref{lem:NecessarySuff}}
\emph{Resets if $L < \pi^{NR}(x^R; \phihat) - \uv$.} Modify the optimal no-reset
contract so that, when progress first reaches $\ux$, the project is
reset rather than terminated, and the optimal no-reset contract is
followed from $x^R$ thereafter. This changes the principal's payoff
by
\begin{equation*}
  \mathbb{E}\big[ e^{-r\tau} \big(\pi^{NR}(x^R; \phihat) - L - \uv\big)
  \mathbf{1}_{\{X_\tau = \ux\}} \big] > 0,
\end{equation*}
as $\ux$ is reached with positive probability. Thus, contracts without
resets are suboptimal.

\emph{No resets if $L \geq \pi^{NR}(x^R; \phihat) - \uv$.} We use a verification
argument. Write $\pi^{NR}$ for $\pi^{NR}(\cdot;\phihat)$, extended by
$\uv$ below
$\ux$ and by $b$ above $\bar x$. Then $\pi^{NR} \geq \uv$, and
$\pi^{NR}$ is continuously differentiable on $(-\infty, \bar x]$ and
satisfies
\begin{equation*}
  r \pi^{NR}(x) \geq \max_a \Big\{ -\frac{\phihat}{2} a^2
  + a \pi^{NR}_x(x) + \frac{1}{2}\sigma^2 \pi^{NR}_{xx}(x) \Big\},
\end{equation*}
with equality on $[\ux, \bar x]$. Consider the discounted value along
the path net of the costs paid so far,
\begin{equation*}
  M_t \ceq e^{-rt}\pi^{NR}(X_t) - \int_0^t e^{-rs}
  \frac{\phihat}{2} A_s^2\, ds.
\end{equation*}
By It\^o's formula, between resets,
\begin{equation*}
  dM_t = e^{-rt} \Big[ -r \pi^{NR}(X_t) + A_t \pi^{NR}_x(X_t)
  + \frac{1}{2}\sigma^2 \pi^{NR}_{xx}(X_t) - \frac{\phihat}{2} A_t^2
  \Big] dt + e^{-rt} \pi^{NR}_x(X_t) \sigma\, dZ_t,
\end{equation*}
where the drift is nonpositive by the inequality above, and zero
under the optimal no-reset contract, while the stochastic integral
is a martingale as $\pi^{NR}_x$ is bounded. Hence $M$ is a
supermartingale between resets for any effort process, and a
martingale under the optimal no-reset contract.

Now $\pi^{NR}(x^R) - L \leq \uv \leq \pi^{NR}(x)$ for all $x$, so at
every reset time $T_i < \tau$ the jump
$\pi^{NR}(x^R) - \pi^{NR}(X_{T_i-})$ in $\pi^{NR}(X_t)$ is at most
$L$. Thus, for any contract, $M_t - \sum_{T_i \leq t} e^{-rT_i} L$ is
a supermartingale. Since $\pi^{NR}(X_\tau)$ dominates the stopping
payoff, the principal's payoff is at most $M_0 = \pi^{NR}(x)$, which
is attained without resets. Hence the optimal contract does not use
resets.

\emph{Comparison to first best.} The two cases apply to the first
best with $\phi$ in place of $\phihat$, and
$\pi^{NR}(x^R; \phi) \geq \pi^{NR}(x^R; \phihat)$ by the mimicking
argument in the proof of Proposition~\ref{Prop:compfb}\ref{it:pcfb04}.

\subsection*{Proof of Proposition~\ref{Prop:reset}}

\begin{proof}[Proof of part~\ref{it:r03}]
  The reset problem is the no-reset problem with salvage value
  ${\pi(x^R) - L}$, and likewise for the first best with $v(x^R) - L$.
  As $x^R \to \bar x$, both salvage values tend to $b - L$, and
  $\ux > \ux^{FB}$ for the no-reset problems with salvage value
  $b - L$ by Proposition~\ref{Prop:Principal}\ref{it:p02}. Since the
  termination threshold depends continuously on the salvage value by
  autonomy of the ODE (cf.\ proof of
  Proposition~\ref{Prop:compfb}\ref{it:pcfb01}) and continuous
  dependence on initial data (\citealp{teschl2012ordinary},
  Theorem~2.8), the strict inequality persists for $x^R$ close to
  $\bar x$.
\end{proof}

\begin{proof}[Proof of part~\ref{it:r04}]
  We first rewrite the ODE~\eqref{eq:ODE} in terms of the level $y = \pi$.
  \begin{lemma}\label{lem:level}
    Let $\pi$ solve $r\pi = \frac{1}{2\phi}\pi_x^2 + \frac{1}{2}\sigma^2
    \pi_{xx}$ on $[\ux, \bar x]$ with $\pi(\ux) = s > 0$ and
    $\pi_x(\ux) = 0$. Then $\pi_x(x)^2 = F(\pi(x); s, \phi)$ for all
    $x \in [\ux, \bar x]$, where
    \begin{equation*}
      F(y; s, \phi) \ceq 2 r \phi \Big[ y - s + \Big( s -
      \frac{\phi \sigma^2}{2} \Big) \Big( 1 - e^{-2(y - s)/(\phi
        \sigma^2)} \Big) \Big],
    \end{equation*}
    and for $\ux \leq x_1 \leq x_2 \leq \bar x$,
    \begin{equation*}
      x_2 - x_1 = \int_{\pi(x_1)}^{\pi(x_2)} \frac{dy}{\sqrt{F(y; s,
          \phi)}}.
    \end{equation*}
    Moreover, $F(y; s, \phi) \leq \min\{2 r \phi y, 2 r y^2 /
    \sigma^2\}$, and $F(s + \theta; s, \phi)$ increases in
    $\theta \geq 0$, $s$, and $\phi$.
  \end{lemma}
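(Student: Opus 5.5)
The plan is to treat the ODE as autonomous and reduce it to a first-order equation in the level $y=\pi$. On $[\ux,\bar x]$ we have $\pi_x\ge 0$, with strict inequality above $\ux$ (Proposition~\ref{Prop:FirstBest}\ref{it:d02}, applied with $\phi$ as the cost parameter and salvage value $s$). So $\pi$ is strictly increasing there and can serve as an independent variable. Set $G(y)\ceq \pi_x(x)^2$ where $y=\pi(x)$. By the chain rule, $dG/dy = 2\pi_x\pi_{xx}/\pi_x = 2\pi_{xx}$. The ODE then becomes the linear first-order equation
\begin{equation*}
  \frac{\sigma^2}{2}\,G'(y) = 2ry - \frac{1}{\phi}G(y), \qquad G(s)=0,
\end{equation*}
where the initial condition comes from smooth pasting and value matching.

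I will solve this with the integrating factor $e^{2y/(\phi\sigma^2)}$ and check that the solution is exactly $F(y;s,\phi)$. Substituting $\theta=y-s$, the particular solution is $2r\phi(y-\phi\sigma^2/2)$ and the homogeneous part is a multiple of $e^{-2\theta/(\phi\sigma^2)}$. Imposing $G(s)=0$ gives $F$ as stated. Since $\pi_x>0$ on $(\ux,\bar x]$, we have $dx/dy = 1/\sqrt{F}$, and integrating gives the formula for $x_2-x_1$. The only delicate point is $x_1=\ux$, where the integrand is singular because $F(s)=0$. Near $s$, however, $F(s+\theta)\approx 2r s\,\theta\cdot(2/\sigma^2)$ is linear in $\theta$, so the singularity is of type $\theta^{-1/2}$ and integrable. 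The formula then follows by taking limits from $x_1>\ux$.

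For the monotonicity claims, write $c\ceq 2/(\phi\sigma^2)$. Then
\begin{equation*}
  F(s+\theta) = 2r\phi\big[\theta + (s-1/c)(1-e^{-c\theta})\big].
\end{equation*}
\begin{itemize}
  \item In $s$: the coefficient of $(1-e^{-c\theta})\ge0$ increases with $s$.
  \item In $\theta$: the derivative is $2r\phi[1-e^{-c\theta}] + 2r\phi\, c\, s\, e^{-c\theta} > 0$.
  \item In $\phi$: we can instead use the ODE for $G$ directly. If $G_1,G_2$ correspond to $\phi_1>\phi_2$, then wherever $G_1=G_2$ the difference $G_1'-G_2'$ is proportional to $(1/\phi_2-1/\phi_1)G>0$. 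So $G_1-G_2$ cannot cross zero from above. Near $y=s$ both equal $2rs\cdot 2\theta/\sigma^2$ to first order, so the comparison must be made at second order: compute $G''(s)$ from the ODE, which gives $-\frac{2}{\phi\sigma^2}G'(s)\cdot\frac{2}{\sigma^2}$ plus a $\phi$-free term. This is increasing in $\phi$, so $G_1>G_2$ immediately to the right of $s$, and the crossing argument then propagates this everywhere.
\end{itemize}

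For the upper bounds, $F\le 2r\phi y$ holds because $F(y)-2r\phi y = 2r\phi[-s+(s-1/c)(1-e^{-c\theta})] = -2r\phi[s e^{-c\theta} + (1-e^{-c\theta})/c] \le 0$. For $F\le 2ry^2/\sigma^2$, compare $G$ with $H(y)\ceq 2ry^2/\sigma^2$. We have $H(s)>0=G(s)$. At any point where $G=H$, the ODE gives $\frac{\sigma^2}{2}G' = 2ry - H/\phi$, while $\frac{\sigma^2}{2}H' = 2ry$. Hence $G'<H'$ whenever $H>0$, and $G$ can never catch up to $H$. I expect the $\phi$-monotonicity to be the main obstacle, because the tangency at $y=s$ forces the second-order expansion. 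If that expansion turns out messy, I would fall back on differentiating the explicit formula in $\phi$, namely $\partial_\phi$ of $\phi\theta + (\phi s - \phi^2\sigma^2/2)(1-e^{-c\theta})$, and signing the result with the elementary inequality $1-e^{-u}\ge u e^{-u}$.
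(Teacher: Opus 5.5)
Your proof is correct, and its core coincides with the paper's. Both proofs view $\pi_x^2$ as a function of the level $y=\pi$ and solve the same linear equation with $F(s)=0$. Both read the distance formula off $dx/dy=F^{-1/2}$ and differentiate the closed form in $s$ at fixed $\theta$. For $\phi$, the paper uses your comparison argument: the two solutions agree to first order at $y=s$, their second derivatives are strictly ranked there, and at any later meeting point the equation forces the larger-$\phi$ solution to cross upward.

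You differ in two sub-steps.

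\textbf{First bound and $\theta$-monotonicity.} For $F\le 2r\phi y$ and monotonicity in $\theta$, the paper argues from convexity. $\pi_{xx}>0$ gives $dF/dy=2\pi_{xx}>0$, which by the linear equation is equivalent to $F<2r\phi y$. You instead check both directly on the closed form. That is more self-contained: it holds for every $\theta\ge 0$ and does not depend on which levels the solution actually attains or on the convexity result.

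\textbf{Quadratic bound.} For $F\le 2ry^2/\sigma^2$, the paper reduces to $s=0$. It uses that $F$ is decreasing in $s$ at fixed $y$ (derivative $-\frac{4r}{\sigma^2}s\,e^{-2(y-s)/(\phi\sigma^2)}$) and then applies $1-e^{-z}\ge z-z^2/2$. Your comparison with $H(y)=2ry^2/\sigma^2$ is a clean alternative: $H(s)>0=G(s)$, and at any contact point $G'<H'$. It needs neither the fixed-$y$ monotonicity nor the Taylor bound, though the paper's route gives an explicit one-line estimate.

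Your treatment of the integrable $\theta^{-1/2}$ singularity at $x_1=\underline{x}$ fills in a step the paper leaves implicit.

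One small slip: from $\frac{\sigma^2}{2}G''=2r-G'/\phi$ you get $G''(s)=\frac{4r}{\sigma^2}-\frac{2}{\phi\sigma^2}G'(s)$ with $G'(s)=4rs/\sigma^2$. The extra factor $2/\sigma^2$ in your expression should be dropped, unless by $G'(s)$ you meant $2rs$. The sign conclusion that $G''(s)$ increases in $\phi$ is unaffected.
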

  \begin{proof}
    By the argument of Proposition~\ref{Prop:FirstBest}\ref{it:d02},
    $\pi$ is increasing on $(\ux, \bar x]$, so $F \ceq \pi_x^2$ can be
    viewed as a function of the level $y = \pi$. Then
    $dF/dy = 2 \pi_{xx} = \frac{4r}{\sigma^2} y - \frac{2}{\phi
      \sigma^2} F$, a linear equation with $F(s) = 0$, whose solution
    is the display. The distance formula is
    $dx/dy = 1/\pi_x = F^{-1/2}$. From the equation, $dF/dy \geq 0$
    once $F \leq 2 r \phi y$, which holds as $\pi_{xx} > 0$;
    differentiating the display in $s$ at fixed $\theta = y - s$ gives
    $2 r \phi (1 - e^{-2 \theta / (\phi \sigma^2)}) > 0$; for
    $\phi_1 < \phi_2$, the two solutions agree to first order at
    $y = s$ with $F_1'' < F_2''$ there, and at any later meeting point
    $F_1' < F_2'$ by the equation, so $F_1 < F_2$ on $(s,
    \infty)$. Finally, $F$ is decreasing in $s$ at fixed $y$ (the
    derivative is
    $-\frac{4r}{\sigma^2} s\, e^{-2(y - s)/(\phi \sigma^2)}$), and
    $F(y; 0, \phi) = 2 r \phi [y - \frac{\phi \sigma^2}{2} (1 -
    e^{-2y/(\phi \sigma^2)})] \leq 2 r y^2 / \sigma^2$ by
    $1 - e^{-z} \geq z - z^2/2$. \hfill $\square$
  \end{proof}

  Since $\pi^{NR}(\cdot; \phihat)$ increases continuously from $\uv$ to
  $b$ and $L < b - \uv$, restart points with
  $0 < \pi^{NR}(x^R; \phihat) - \uv - L < \varepsilon$ exist for every
  $\varepsilon > 0$. Fix one and let
  $m \ceq \pi^{NR}(x^R; \phihat) - \uv - L \in (0, \varepsilon)$. Write
  $F(y; s) \ceq F(y; s, \phihat)$ and $F^{FB}(y; s) \ceq F(y; s,
  \phi)$, and for $s \in [\uv, b - L]$ let
  \begin{equation*}
    G(s) \ceq \int_{s + L}^{b} \frac{dy}{\sqrt{F(y; s)}} = \int_{L}^{b
      - s} \frac{d\theta}{\sqrt{F(s + \theta; s)}}.
  \end{equation*}
  By Lemma~\ref{lem:level} the integrand is decreasing in $s$, and
  $F(s + \theta; s) \leq 2 r \phihat b$ for $s + \theta \leq b$, so
  $G(s) - G(s') \geq (s' - s)/\sqrt{2 r \phihat b}$ for $s < s'$.

  The reset problem is the no-reset problem with salvage value $\tilde
  v \ceq \pi(x^R) - L \in [\uv, b - L]$, so by Lemma~\ref{lem:level},
  $\bar x - x^R = G(\tilde v)$ and $x^R - \ux^R = \int_{\tilde
    v}^{\tilde v + L} dy / \sqrt{F(y; \tilde v)}$. Let
  $x^\dagger \in (\ux, x^R)$ be the point where
  $\pi^{NR}(x^\dagger; \phihat) = \uv + L$, which exists as
  $m > 0$. Then $\bar x - x^\dagger = G(\uv)$, so
  $x^R - x^\dagger = G(\uv) - G(\tilde v) \geq (\tilde v - \uv) /
  \sqrt{2 r \phihat b}$, while by convexity and Lemma~\ref{lem:level},
  $m = \pi^{NR}(x^R; \phihat) - \pi^{NR}(x^\dagger;
  \phihat) \geq \sqrt{F(\uv + L; \uv)}\, (x^R - x^\dagger) \geq
  \sqrt{F(L; 0)}\, (x^R - x^\dagger)$. Hence
  $\tilde v - \uv \leq C m$ with
  $C \ceq \sqrt{2 r \phihat b / F(L; 0)}$, and by the bound
  $F(y; s) \leq 2 r y^2 / \sigma^2$,
  \begin{equation*}
    \ux^R \leq x^R - \frac{\sigma}{\sqrt{2r}} \int_{\tilde v}^{\tilde
      v + L} \frac{dy}{y} \leq \bar x - \frac{\sigma}{\sqrt{2r}} \ln
    \Big( 1 + \frac{L}{\uv + C m} \Big),
  \end{equation*}
  which tends to $-\infty$ as $\uv \to 0$ and $m \to 0$.

  For the first best, the reset problem is the no-reset problem with
  salvage value $\tilde v^{FB} \ceq v(x^R) - L$, so $\ux^{R,FB} = \bar
  x - \int_0^{b - \tilde v^{FB}} d\theta / \sqrt{F^{FB}(\tilde v^{FB}
    + \theta; \tilde v^{FB})}$, and the integral is decreasing in
  $\tilde v^{FB}$ by Lemma~\ref{lem:level}. It thus suffices to bound
  $\tilde v^{FB}$ away from zero for $\uv$ small. Since the no-reset
  contract is feasible and $v^{NR}$ is increasing, $\tilde v^{FB} \geq
  \ell - L$, where $\ell \ceq v^{NR}(x^\dagger; \phi)$. Applying
  Lemma~\ref{lem:level} to $v^{NR}$ and to $\pi^{NR}$ on $[x^\dagger,
  \bar x]$ gives $\int_\ell^b dy / \sqrt{F^{FB}(y; \uv)} = \bar x -
  x^\dagger = \int_{\uv + L}^b dy / \sqrt{F(y; \uv)}$, so, as $F^{FB}
  \leq F$,
  \begin{equation*}
    \int_{\uv + L}^{\ell} \frac{dy}{\sqrt{F^{FB}(y; \uv)}} = \int_{\uv
      + L}^{b} \Big[ \frac{1}{\sqrt{F^{FB}(y; \uv)}} -
    \frac{1}{\sqrt{F(y; \uv)}} \Big] dy \eqc H(\uv).
  \end{equation*}
  The function $H$ is continuous on $[0, b - L)$ with $H(0) > 0$, so
  $H \geq h > 0$ on $[0, \varepsilon_0]$ for some $\varepsilon_0, h >
  0$. As $F^{FB}(y; \uv)$ is increasing in $y$ and $F^{FB}(\uv + L;
  \uv) \geq F^{FB}(L; 0)$, the left side is at most $(\ell - \uv - L)
  / \sqrt{F^{FB}(L; 0)}$. Hence $\tilde v^{FB} \geq \ell - L \geq h
  \sqrt{F^{FB}(L; 0)} \eqc c > 0$ for all $\uv \leq \varepsilon_0$,
  and $\ux^{R,FB} \geq \bar x - \int_0^{b - c} d\theta /
  \sqrt{F^{FB}(c + \theta; c)}$, a bound independent of $\uv$ and
  $x^R$. Choosing $\varepsilon \leq \varepsilon_0$ small enough that
  the bound on $\ux^R$ falls below it for $\uv < \varepsilon$ and
  $m < \varepsilon$ gives $\ux^R < \ux^{R,FB}$.
\end{proof}

\begin{proof}[Proof of part~\ref{it:r06}]
  Both policies are active on $(\max\{\ux^R, \ux^{R,FB}\}, \bar x]$,
  where $\pi$ and $v$ solve their ODEs, $v \geq \pi$ since the first
  best can mimic the second best, and $\pi_{xx} > 0$ by
  Proposition~\ref{Prop:SecondBest}\ref{it:s02} applied to the
  no-reset problem with salvage value $\tilde v$. At any point where
  $a = a^{FB} > 0$, we have $\pi_x = \phihat a$ and $v_x = \phi a$, and
  subtracting the ODEs gives
  \begin{equation*}
    \frac{\sigma^2}{2} (v_{xx} - \pi_{xx}) = r (v - \pi) +
    \frac{\phihat - \phi}{2} a^2 > 0,
  \end{equation*}
  so $a^{FB}_x = \phi^{-1} v_{xx} > \phi^{-1} \pi_{xx} > \phihat^{-1}
  \pi_{xx} = a_x$. Thus $a^{FB}$ crosses $a$ only from below, and
  hence at most once. If $\ux^R > \ux^{R,FB}$, then
  $a^{FB}(\ux^R) > 0 = a(\ux^R)$, so a crossing would have $a^{FB}$
  crossing from above, and $a < a^{FB}$ on $(\ux^R, \bar x]$. If
  $\ux^R < \ux^{R,FB}$, then $a(\ux^{R,FB}) > 0 = a^{FB}(\ux^{R,FB})$,
  so $a > a^{FB}$ immediately to the right of $\ux^{R,FB}$, and the
  claim holds with $x^*$ the crossing point, or $x^* \ceq \bar x$ if
  there is none.
\end{proof}

\subsection*{Proof of Lemma~\ref{Lem:IcRisk}}
Under $(A, Q)$, the process
$M_t \ceq \int_0^t e^{-rs} u(A_s, C_s)\, ds + e^{-rt} W_t$ is the
conditional expectation of the agent's lifetime utility given public
history up to $t$, and hence it is a martingale with respect to the
filtration generated by $Z$ and $N$. By the martingale representation
theorem (see, e.g., Theorem 11.2.8.1 of
\citealp{jeanblanc2009mathematical}), there exist progressively
measurable processes $\beta$ and $\psi$ such that
\begin{equation*}
  dM_t = e^{-rt} \beta_t (-\eta r W_t)\, \sigma\, dZ_t
  + e^{-rt} \psi_t (-\eta r W_{t-}) \big( dN_t - (\lambda_0 + \lambda_r
  Q_t)\, dt \big).
\end{equation*}
By the definition of $M$, also
$dM_t = e^{-rt} \big( u(A_t, C_t) - r W_t \big)\, dt + e^{-rt}\,
dW_t$. Equating the two expressions and substituting
$\sigma\, dZ_t = dX_t - (A_t + g Q_t)\, dt$ gives \eqref{eq:W-risk}.

The Euler equation $r W_t = u(A_t, C_t)$ and \eqref{cons} follow by
Lemma~\ref{lem:AWE} as in the discussion after
Lemma~\ref{Lem:IcNoS}. Consider deviating to $(\hat a, \hat q)$ at
time $t$ with consumption unchanged. By \eqref{eq:W-risk}, this
changes the drift of $W$ by
$(-\eta r W_t) \big[ \beta_t (\hat a - A_t) + (g \beta_t + \lambda_r
\psi_t)(\hat q - Q_t) \big]$. Thus, incentive compatibility requires that
$(A_t, Q_t)$ solve (a.s.\ and for a.e.\ $t$)
\begin{equation*}
  \max_{a \geq 0,\, \hat q \in \{0,1\} } u(\hat a, C_t) + (-\eta r W_t) \big[ \beta_t \hat a
  + (g \beta_t + \lambda_r \psi_t)\, \hat q \big].
\end{equation*}
The objective is linear in $\hat q$, which gives
\eqref{eq:IC-risk}. It is strictly concave in $\hat a$ with derivative
$\eta \phi \hat a\, u(\hat a, C_t) + \beta_t (-\eta r W_t)$, which is
negative for all $\hat a \geq 0$ if $\beta_t < 0$, and vanishes at
$\hat a = A_t$ if and only if $\phi A_t = \beta_t$ if
$\beta_t \geq 0$, using $u(A_t, C_t) = r W_t$. This gives \eqref{IC}.

\subsection*{Proof of Proposition \ref{Prop:MartingaleLevy}}
Under obedience, i.e., when the agent obeys $(A, Q)$ and consumes
  $C$, \eqref{eq:W-risk} and \eqref{cons} make $W$ a stochastic
  exponential with bounded coefficients, hence a martingale, and
  integrating \eqref{eq:W-risk} against $e^{-r(s-t)}$ shows that
  ${W_t = \E[\int_t^\infty e^{-r(s-t)} u(A_s, C_s)\, ds \mid \mathcal
  F_t]}$. So $W$ is the agent's continuation utility under
  obedience. Moreover, $W_0 = u(A_0, C_0)/r$ is finite by
  \eqref{cons}. For $t>\tau$, $A_t = 0$ and $C_t$ is constant, so
  $W_t$ is constant by \eqref{cons}, hence $\beta_t = \psi_t = 0$ by
  \eqref{eq:W-risk}, and $C_t = r\, CE(W_\tau)$.

  Fix a deviation $(\hat A, \hat C, \hat Q)$ whose savings $S$ satisfy
  the no-Ponzi condition and whose expected payoff is finite; as
  $u < 0$ and $W_0$ is finite, no other deviation can be profitable.
  Then progress follows
  $dX_t = (\hat A_t + g \hat Q_t)\, dt + \sigma\, dZ_t$, $W$ still
  satisfies \eqref{eq:W-risk} as all its terms are functionals of the
  public history, and breakdowns arrive at rate
  $\lambda_0 + \lambda_r \hat Q_t$; all expectations below are under
  this deviation. Note that $\tau < \infty$ almost surely: the drift
  of $X$ is nonnegative, so $X_t \geq X_0 + \sigma Z_t$ for
  $t \leq \tau$, and as a Brownian motion exceeds every level in
  finite time, $X$ reaches $\bar x$ in finite time unless termination
  or a breakdown occurs first.

For $t \geq 0$, let $J_t \ceq e^{-\eta r S_t} W_t$ and
\begin{equation}\label{eq:Ydef}
  Y_t \ceq \int_0^t e^{-rs} u(\hat A_s, \hat C_s)\, ds
  + e^{-rt} J_t.
\end{equation}
This is the expected payoff, conditional on $\mathcal F_t$, of
following the deviation on $[0, t]$ and obeying the contract from $t$
on: as the contract's terms depend only on the public history, $W_t$
is the utility of obeying from $t$ on with zero savings, and with
savings $S_t$ it is $J_t$ by Lemma~\ref{lem:AWE}. In
particular, $Y_0 = W_0$ is the payoff from obedience, and the payoff
from the deviation is $\E[Y_\infty]$, where
$Y_\infty \ceq \int_0^\infty e^{-rs} u(\hat A_s, \hat C_s)\, ds$. We
will show that $\E[Y_\infty] \leq W_0$.

\emph{Step 1: Ruling out one-shot deviations.} By It\^o's
formula, with $dS_t = (r S_t + C_t - \hat C_t)\, dt$ and $dW_t$ from
\eqref{eq:W-risk}, the drift of $Y$ is
\begin{equation*}
  e^{-rt} e^{-\eta r S_t} \big[ F_t(\hat A_t, \hat C_t - r S_t, \hat Q_t)
  - F_t(A_t, C_t, Q_t) \big],
\end{equation*}
where
\begin{equation*}
  F_t(a, c, q) \ceq u(a, c)
  + (-\eta r W_t) \big[ \beta_t a + (g \beta_t + \lambda_r \psi_t)
  q - c \big].
\end{equation*}
The function $F_t$ is concave in $(a, c)$ and linear in $q$, and
the conditions of Lemma~\ref{Lem:IcRisk} are its first-order
conditions at $(A_t, C_t, Q_t)$:
$\partial F_t/\partial c = -\eta u(A_t, C_t) + \eta r W_t = 0$ by
\eqref{cons};
$\partial F_t/\partial a = (-\eta r W_t)(\beta_t - \phi A_t)$, which
is zero if $\beta_t \geq 0$ and negative at $A_t = 0$ if
$\beta_t < 0$, by \eqref{IC}; and
$\partial F_t/\partial q = (-\eta r W_t)(g \beta_t + \lambda_r \psi_t)$,
so $Q_t$ maximizes by \eqref{eq:IC-risk}. Hence $Y$ is a local
supermartingale, and a local martingale under obedience.

\emph{Step 2: Bounding continuation payoffs.} Define the constants
$\bar\beta \ceq \|\beta\|_\infty$, $\bar\psi \ceq \|\psi\|_\infty$,
$H \ceq \bar\psi (\lambda_0 + \lambda_r) + \frac{1}{2} \eta r \sigma^2
\bar\beta^2$, and $K \ceq e^{\eta H}$. We show that, for every
stopping time $\rho$,
\begin{equation}\label{eq:UIbound}
  \E\Big[ \int_\rho^\infty e^{-r(s-\rho)} u(\hat A_s, \hat C_s)\, ds
  \,\Big|\, \mathcal F_\rho \Big]
  \leq \frac{J_\rho}{K}.
\end{equation}
Let $V_t \ceq CE(J_t) = S_t + CE(W_t)$ be the wealth
equivalent of obeying from $t$ on, and let
$\xi_t \ceq \hat C_t - \frac{\phi}{2} \hat A_t^2$. By It\^o's formula,
using \eqref{cons}, \eqref{eq:W-risk}, and
$dS_t = (r S_t + C_t - \hat C_t)\, dt$,
\begin{align}
  & dV_t = (r V_t - \xi_t + D_t)\, dt
  + \sigma \beta_t\, dZ_t
  - \tfrac{1}{\eta r} \ln(1 - \eta r \psi_t)\, dN_t,
  \quad\text{where} \label{eq:wealth} \\
  & D_t \ceq [ \beta_t \hat A_t - \tfrac{\phi}{2} \hat A_t^2 ]
  - [ \beta_t A_t - \tfrac{\phi}{2} A_t^2 ]
  + (g \beta_t + \lambda_r \psi_t)(\hat Q_t - Q_t)
  - \psi_t (\lambda_0 + \lambda_r \hat Q_t)
  + \tfrac{1}{2} \eta r \sigma^2 \beta_t^2. \notag
\end{align}
The first three terms are nonpositive by \eqref{IC} and
\eqref{eq:IC-risk}, the last two at most $H$, so $D_t \leq H$.

Fix a stopping time $\rho$ and $T > \rho$. Integrating
\eqref{eq:wealth} against $e^{-r(s-\rho)}$ over $[\rho, T]$, and
using ${D_s \leq H}$ and ${\ln(1 - \eta r \psi_s) \geq 0}$, gives
\begin{equation*}
  \int_\rho^T e^{-r(s-\rho)} \xi_s\, ds
  \leq V_\rho - e^{-r(T-\rho)} V_T + \frac{H}{r}
  + \sigma \int_\rho^T e^{-r(s-\rho)} \beta_s\, dZ_s.
\end{equation*}
Let ${p_T \ceq 1 - e^{-r(T-\rho)}}$. By Jensen's inequality for the
density ${r e^{-r(s-\rho)}/p_T}$ on $[\rho, T]$,
\begin{equation*}
  \int_\rho^T e^{-r(s-\rho)} u(\hat A_s, \hat C_s)\, ds
  = -\frac{p_T}{\eta r} \int_\rho^T \frac{r e^{-r(s-\rho)}}{p_T}
  e^{-\eta \xi_s}\, ds
  \leq -\frac{p_T}{\eta r} \exp\big\{ -\tfrac{\eta r}{p_T} \int_\rho^T
  e^{-r(s-\rho)} \xi_s\, ds \big\},
\end{equation*}
and the previous display bounds the integral in the exponent. Now let
$T \to \infty$. Then ${p_T \to 1}$, and ${e^{-r(T-\rho)} V_T \to 0}$,
since $CE(W_T)$ is constant for $T > \tau$ and ${e^{-rT} S_T \to 0}$ by
the no-Ponzi condition. The stochastic integral converges, as $\beta$
is bounded. Hence
\begin{equation*}
  \int_\rho^\infty e^{-r(s-\rho)} u(\hat A_s, \hat C_s)\, ds
  \leq \frac{J_\rho}{K}\, e^{-\eta r \mathcal B_\rho},
  \quad\text{where}\quad
  \mathcal B_\rho \ceq \sigma \int_\rho^\infty e^{-r(s-\rho)}
  \beta_s\, dZ_s.
\end{equation*}
Take conditional expectations on both sides. As $\beta$ is bounded,
${\E[\mathcal B_\rho \mid \mathcal F_\rho] = 0}$, so the conditional
Jensen inequality gives
${\E[e^{-\eta r \mathcal B_\rho} \mid \mathcal F_\rho] \geq 1}$, and
\eqref{eq:UIbound} follows as $J_\rho < 0$.

\emph{Step 3: Removing the localization.} Let
$\rho_n \ceq \inf\{t : |J_t| \geq n\} \wedge n$, so that
$\rho_n \to \infty$ as $J$ is finite and right-continuous. On
$[0, \rho_n]$, the integrands of the stochastic
integrals in \eqref{eq:Ydef} are bounded, so $Y$ stopped at $\rho_n$ is a
supermartingale by Step~1, and $\E[Y_{\rho_n}] \leq Y_0 = W_0$. As
$n \to \infty$, $\int_0^{\rho_n} e^{-rs} u(\hat A_s, \hat C_s)\, ds
\to Y_\infty$ in $L^1$ by dominated convergence, since $Y_\infty$ is
integrable and $u < 0$. The second term vanishes in $L^1$: taking
absolute values in \eqref{eq:UIbound},
\begin{equation*}
  \E\big[ e^{-r\rho_n} |J_{\rho_n}| \big]
  \leq K\, \E\Big[ \int_{\rho_n}^\infty e^{-rs}
  |u(\hat A_s, \hat C_s)|\, ds \Big] \to 0,
\end{equation*}
again by dominated convergence. Hence
$\E[Y_\infty] = \lim_n \E[Y_{\rho_n}] \leq W_0$.

\subsection*{Proof of Proposition~\ref{Prop:OptimalContract}}
Throughout, let $\Lambda \ceq r + \lambda_0 + \lambda_r$,
$\mu \ceq \phi g/\lambda_r$, and $z \ceq \pi_x(x)$. Maximizing in
\eqref{hjb:risk} first over $(a, \psi)$ given the risk mode and then
over the mode, the HJB equation reads
\begin{equation}\label{eq:hjb-sb-gap}
  r\pi = \frac{\sigma^2}{2} \pi_{xx} + \max\{H^R, H^S\},
\end{equation}
where, defining the short-hands
$h(\psi) \ceq \psi + \frac{1}{\eta r}\ln(1 - \eta r\psi)$ and
$\Psi(a) \ceq -\lambda_0 h(-\mu a)$, 
\begin{align*}
  H^R &\ceq \max_{a \geq 0,\, \psi \geq -\mu a} \Big\{ -\frac{\phihat}{2} a^2
        + (a + g) z - (\lambda_0 + \lambda_r)\big(\pi - \uv - h(\psi)\big) \Big\} \\
  & \; = \frac{z^2}{2\phihat} + g z - (\lambda_0 + \lambda_r)(\pi - \uv), \\
  H^S &\ceq \max_{a \geq 0,\, \psi \leq -\mu a} \Big\{ -\frac{\phihat}{2} a^2
        + a z - \lambda_0 \big(\pi - \uv - h(\psi)\big) \Big\}
        = G(z) - \lambda_0 (\pi - \uv),
\end{align*}
and
$G(z) \ceq \max_{a \geq 0} \{ -\frac{\phihat}{2} a^2 + a z - \Psi(a)
\}$. (The closed form of $H^R$ uses $z \geq 0$, which
part~\ref{it:brp01} establishes using it only at $z = 0$.) To verify
the expressions for $H^R$ and $H^S$, note that $h$ is concave with
$h(0) = 0$ and $h'(\psi) = 1 - (1 - \eta r\psi)^{-1}$, so it is
maximized at $\psi = 0$. In the risky mode, $\psi = 0$ is feasible, so
$\psi = 0$ and $a = a^R(z) \ceq z/\phihat$ by the first-order
condition (if $z \geq 0$). In the safe mode, the constraint binds, as
$h$ is increasing on $\psi \leq 0$. Substituting $\psi = -\mu a$ gives
the maximization problem defining $G$, explaining $H^S$.

By \eqref{eq:hjb-sb-gap}, the risky mode is optimal if and only if
$D^{SB} \ceq H^R - H^S \geq 0$. Writing
$\Gamma(z) \ceq \frac{z^2}{2\phihat} + g z - G(z)$ for the gain in
the value of effort from the risky mode, we have
\begin{equation}\label{eq:DSBx}
  D^{SB} = \Gamma(z) - \lambda_r (\pi - \uv)
  \quad\text{and}\quad
  D^{SB}_x = \Gamma'(z)\, \pi_{xx} - \lambda_r z.
\end{equation}
We collect the properties of $\Gamma$ used below. As
$\Psi'(a) = \frac{\lambda_0 \eta r \mu^2 a}{1 + \eta r \mu a} \geq 0$
and
$\Psi''(a) = \frac{\lambda_0 \eta r \mu^2}{(1 + \eta r \mu a)^2} > 0$,
the objective in $G$ is strictly concave, and the effort $a^S(z)$ is
the unique solution to the first-order condition
${\phihat\, a - z + \Psi'(a) = 0}$, with $a^S(0) = 0$,
$0 < a^S(z) < z/\phihat$ for $z > 0$, and
$da^S/dz = 1/(\phihat + \Psi''(a^S)) \in (0, 1/\phihat]$, which is
increasing in $z$. Thus $\Gamma(0) = 0$, and by the envelope theorem
$\Gamma'(z) = z/\phihat + g - a^S(z) \geq g$ and
$\Gamma''(z) = 1/\phihat - da^S/dz \geq 0$. Moreover,
$z/\phihat - a^S = \Psi'(a^S)/\phihat$ and $\Psi' \leq \lambda_0 \mu$
give $\Gamma' \leq g(1 + \frac{\lambda_0}{\kappa \lambda_r}) \eqc
\bar\Gamma'$, and $\Psi''(a^S) \leq \Psi''(0)$ gives
$\Gamma'' \leq \frac{1}{\phihat} - \frac{1}{\phihat + \lambda_0 \eta r
  \mu^2} = \frac{\lambda_0 \eta r g^2}{\kappa(\kappa \lambda_r^2 +
  \lambda_0 \eta r \phi g^2)} \eqc \bar\Gamma''$.

\begin{proof}[Proof of part~\ref{it:brp01}]
  If $\pi_x(x) = 0$ at some $x$, then $z = 0$ and
  $D^{SB}(x) = -\lambda_r(\pi(x) - \uv) \leq 0$, so the safe mode is
  optimal there and, as $a^S(0) = 0$, \eqref{eq:hjb-sb-gap} gives
  $\frac{\sigma^2}{2} \pi_{xx}(x) = r\pi(x) + \lambda_0(\pi(x) - \uv)
  \geq r\uv > 0$. This is the analog of Lemma~\ref{lem-convex}, and
  $\pi_x > 0$ on $(\ux, \bar x]$ follows as in the proof of
  Proposition~\ref{Prop:FirstBest}\ref{it:d02}. For convexity,
  \eqref{eq:hjb-sb-gap} at $\ux$ gives
  $\pi_{xx}(\ux) = 2 r \uv/\sigma^2 > 0$. Suppose $\pi_{xx}$ has a zero
  in $(\ux, \bar x]$ and let $x^*$ be the first one. If
  $D^{SB}(x^*) = 0$, then $D^{SB}_x(x^*) = -\lambda_r \pi_x(x^*) < 0$
  by \eqref{eq:DSBx}, so the risky mode is strictly optimal
  immediately to the left of $x^*$; otherwise the mode that is
  strictly optimal at $x^*$ is so on a neighborhood. Differentiating
  \eqref{eq:hjb-sb-gap} from the left, where the mode $q$ is fixed,
  and using $\pi_{xx}(x^*) = 0$ gives
  $\frac{\sigma^2}{2} \pi_{xxx}(x^*-) = (r + \lambda_0 + \lambda_r q)\,
  \pi_x(x^*) > 0$, so $\pi_{xx} < 0$ immediately to the left of $x^*$,
  contradicting the choice of $x^*$. Hence $\pi_{xx} > 0$ on
  $[\ux, \bar x]$.
\end{proof}

\begin{proof}[Proof of part~\ref{it:brp02}]
  Value matching and smooth pasting give $D^{SB}(\ux) = 0$, and part
  \ref{it:brp01} and \eqref{eq:DSBx} give
  $D^{SB}_x(\ux) = g\, \pi_{xx}(\ux) > 0$, so the risky mode is
  strictly optimal just above $\ux$. At any zero $x$ of $D^{SB}$,
  \eqref{eq:DSBx} gives $\lambda_r(\pi(x) - \uv) = \Gamma(z)$.
  Substituting this and \eqref{eq:hjb-sb-gap} with $D^{SB} = 0$ into
  $D^{SB}_x$ in \eqref{eq:DSBx} gives
  \begin{equation}\label{eq:sb-gap}
    D^{SB}_x(x) = \Phi(z), \qquad
    \Phi(z) \ceq \frac{2}{\sigma^2}\, \Gamma'(z) \Big[ r \uv
    + \frac{\Lambda}{\lambda_r} \Gamma(z) - \frac{z^2}{2\phihat}
    - g z \Big] - \lambda_r z.
  \end{equation}
  We show below that $\Phi$ is decreasing under the hypothesis of the
  proposition. As $\Phi(0) = 2 g r \uv/\sigma^2 > 0$, $\Phi$ is then
  positive below its unique root and negative above it. The set
  $\{x \in (\ux, \bar x) : D^{SB}(x) < 0\}$ is open, and no component
  $(x_1, x_2)$ of it can have $x_2 < \bar x$:
  $D^{SB}(x_1) = D^{SB}(x_2) = 0$ would require
  $D^{SB}_x(x_1) \leq 0 \leq D^{SB}_x(x_2)$, i.e.,
  $\Phi(\pi_x(x_1)) \leq 0 \leq \Phi(\pi_x(x_2))$ with
  $\pi_x(x_1) < \pi_x(x_2)$ by part~\ref{it:brp01}. Hence the set is
  empty or an interval $(x_c^{SB}, \bar x)$ with $x_c^{SB} > \ux$;
  with $x_c^{SB} \ceq \bar x$ in the former case, the risky mode is
  optimal on $[\ux, x_c^{SB}]$ and the safe mode on
  $(x_c^{SB}, \bar x]$.

  It remains to show that $\Phi' < 0$ for $z \geq 0$. Let $B(z)$
  denote the bracket in \eqref{eq:sb-gap}, so that
  $\Phi' = \frac{2}{\sigma^2} [\Gamma'' B + \Gamma' B'] -
  \lambda_r$. First, $\Gamma(0) = 0$ and $\Gamma' \leq \bar\Gamma'$
  give $\Gamma(z) \leq \bar\Gamma' z$, so
  $B(z) \leq r\uv + (\Lambda \bar\Gamma'/\lambda_r - g) z -
  z^2/(2\phihat) \leq M_0 \ceq r\uv + \frac{\phihat}{2} (\Lambda
  \bar\Gamma'/\lambda_r - g)^2$. Second, we have
  $B' = \frac{\Lambda}{\lambda_r} \Gamma' - \frac{z}{\phihat} - g =
  \frac{r + \lambda_0}{\lambda_r} \Gamma' - a^S$ and
  $B'' = \frac{r + \lambda_0}{\lambda_r} \Gamma'' -
  \frac{da^S}{dz}$. As $\Gamma'' = 1/\phihat - da^S/dz$ and $da^S/dz$
  is increasing, $B''$ is nonincreasing, and $B''(0) \leq 0$ if and
  only if
  \begin{equation}\label{eq:sigma_cond}
    (r + \lambda_0)\, \lambda_0 \eta r \mu^2 \leq \lambda_r \phihat,
    \quad \text{i.e.,} \quad
    (r + \lambda_0)\, \lambda_0 \eta r \phi g^2 \leq \lambda_r^3 \kappa.
  \end{equation}
  Under \eqref{eq:sigma_cond}, $B'$ is nonincreasing, so
  $B' \leq B'(0) = (r + \lambda_0) g/\lambda_r$. Combining, and using
  $\Gamma'' \geq 0$ and $\Gamma' > 0$,
  $\Phi' \leq \frac{2}{\sigma^2} \big[ \bar\Gamma'' M_0 + \bar\Gamma'
  (r + \lambda_0) g/\lambda_r \big] - \lambda_r$, which is negative if
  and only if
  \begin{equation}\label{eq:suff2}
    \lambda_r^2 \sigma^2 > \frac{2 \lambda_0 \eta r g^2 \lambda_r M_0}
    {\kappa(\kappa \lambda_r^2 + \lambda_0 \eta r \phi g^2)}
    + 2 g^2 (r + \lambda_0) \Big( 1 + \frac{\lambda_0}{\kappa \lambda_r}
    \Big).
  \end{equation}
  Both \eqref{eq:sigma_cond} and \eqref{eq:suff2} hold for
  $\lambda_r$ or $\sigma$ large enough, as $\kappa = 1 + \phi \eta r
  \sigma^2$ and the right-hand side of \eqref{eq:suff2} stays bounded
  while the left-hand side grows without bound.
\end{proof}

\begin{proof}[Proof of part~\ref{it:brp05}]
  By \eqref{eq:hjb-sb-gap} and $\pi_{xx} > 0$, $r\pi \geq H^R$, so
  $z^2/(2\phihat) \leq r\pi + (\lambda_0 + \lambda_r)(\pi - \uv)$,
  which equals $M \ceq r\uv + \Lambda(b - \uv)$ at $\bar x$. Since
  $\Gamma(z) \leq \bar\Gamma' z$, \eqref{eq:DSBx} then implies that
  $D^{SB}(\bar x) \leq \bar\Gamma' \sqrt{2 \phihat M} - \lambda_r (b -
  \uv)$, which is negative for all $b$ large enough, as $M$ is linear
  in $b$. The safe mode is then strictly optimal near $\bar x$, so
  $x_c^{SB} < \bar x$ by part~\ref{it:brp02}.
\end{proof}

\begin{proof}[Proof of parts~\ref{it:brp04} and \ref{it:brp03}]
  As shown above, $\psi = 0$ and $a = \pi_x/\phihat$ in the risky
  mode, and $\psi = -\mu a$ and $a = a^S(\pi_x)$ in the safe
  mode. Selecting the risky mode at ties below $x_c^{SB}$, effort is
  $\pi_x/\phihat$ on $[\ux, x_c^{SB}]$ and $a^S(\pi_x)$ on
  $(x_c^{SB}, \bar x]$. Both are continuous and increasing, as
  $\pi_x$ is increasing and $da^S/dz > 0$. If
  $x_c^{SB} < \bar x$, effort drops at $x_c^{SB}$ from $\pi_x/\phihat$
  to $a^S(\pi_x)$, which is smaller as $a^S(z) < z/\phihat$ for
  $z > 0$.
\end{proof}

{\singlespacing\fontsize{12}{13.8}\selectfont
\bibliographystyle{chicago}
\bibliography{project}}

@article{holmstrom1987aggregation,
  title={Aggregation and linearity in the provision of intertemporal incentives},
  author={Holmstr\"{o}m, Bengt and Milgrom, Paul},
  journal={Econometrica},
  volume={55},
  number={2},
  pages={303--328},
  year={1987},
  publisher={JSTOR}
}

@article{williams2015solvable,
  title={A solvable continuous time dynamic principal--agent model},
  author={Williams, Noah},
  journal={Journal of Economic Theory},
  volume={159},
  pages={989--1015},
  year={2015},
  publisher={Elsevier}
}

@article{myerson2015moral,
  title={Moral hazard in high office and the dynamics of aristocracy},
  author={Myerson, Roger B},
  journal={Econometrica},
  volume={83},
  number={6},
  pages={2083--2126},
  year={2015},
  publisher={Wiley Online Library}
}

@article{ditella2021optimal,
  title   = {Optimal Asset Management Contracts with Hidden Savings},
  author  = {Di Tella, Sebastian and Sannikov, Yuliy},
  journal = {Econometrica},
  year    = {2021},
  volume  = {89},
  number  = {3},
  pages   = {1099--1139},
  month   = {5},
  doi     = {10.3982/ECTA14929}
}

@article{kocherlakota2004figuring,
  title={Figuring out the impact of hidden savings on optimal unemployment insurance},
  author={Kocherlakota, Narayana R},
  journal={Review of Economic Dynamics},
  volume={7},
  number={3},
  pages={541--554},
  year={2004},
  publisher={Elsevier}
}

@article{bromberg2021scale,
  title={Scale effects in dynamic contracting},
  author={Bromberg-Silverstein, Shirley and Moreno-Bromberg, Santiago and Roger, Guillaume},
  journal={Mathematics and Financial Economics},
  volume={15},
  pages={431--472},
  year={2021},
  publisher={Springer}
}

@book{brooks1975mythical,
  author    = {Brooks, Frederick P., Jr.},
  title     = {The Mythical Man-Month: Essays on Software Engineering},
  year      = {1975},
  publisher = {Addison-Wesley},
  address   = {Reading, MA}
}

@book{cusumano1991japan,
  author    = {Cusumano, Michael A.},
  title     = {Japan's Software Factories: A Challenge to {U.S.} Management},
  year      = {1991},
  publisher = {Oxford University Press},
  address   = {New York}
}

@article{toxvaerd2006time,
  title={Time of the essence},
  author={Toxvaerd, Flavio},
  journal={Journal of Economic Theory},
  volume={129},
  number={1},
  pages={252--272},
  year={2006},
  publisher={Elsevier}
}

@article{rochet2016risky,
  title={Risky utilities},
  author={Rochet, Jean-Charles and Roger, Guillaume},
  journal={Economic Theory},
  volume={62},
  pages={361--382},
  year={2016},
  publisher={Springer}
}

@article{georgiadis2014project,
  title={Project design with limited commitment and teams},
  author={Georgiadis, George and Lippman, Steven A and Tang, Christopher S},
  journal={The RAND Journal of Economics},
  volume={45},
  number={3},
  pages={598--623},
  year={2014},
  publisher={Wiley Online Library}
}

@article{li2025optimal,
  title={Optimal contracts with hidden risk},
  author={Li, Rui and Williams, Noah},
  journal={Review of Economic Dynamics},
  volume={58},
  pages={101306},
  year={2025},
  publisher={Elsevier}
}

@article{biais2010large,
  title={Large risks, limited liability, and dynamic moral hazard},
  author={Biais, Bruno and Mariotti, Thomas and Rochet, Jean-Charles and Villeneuve, St{\'e}phane},
  journal={Econometrica},
  volume={78},
  number={1},
  pages={73--118},
  year={2010},
  publisher={Wiley Online Library}
}

@book{jeanblanc2009mathematical,
  title={Mathematical methods for financial markets},
  author={Jeanblanc, Monique and Yor, Marc and Chesney, Marc},
  year={2009},
  publisher={Springer Science \& Business Media}
}

@article{feng2024setbacks,
  title={Setbacks, shutdowns, and overruns},
  author={Feng, Felix Zhiyu and Taylor, Curtis R and Westerfield, Mark M and Zhang, Feifan},
  journal={Econometrica},
  volume={92},
  number={3},
  pages={815--847},
  year={2024},
  publisher={Wiley Online Library}
}

@article{wong2019dynamic,
  title={Dynamic agency and endogenous risk-taking},
  author={Wong, Tak-Yuen},
  journal={Management Science},
  volume={65},
  number={9},
  pages={4032--4048},
  year={2019},
  publisher={INFORMS}
}

@article{marinovic2019ceo,
  title={{CEO} horizon, optimal pay duration, and the escalation of short-termism},
  author={Marinovic, Iv{\'a}n and Varas, Felipe},
  journal={The Journal of Finance},
  volume={74},
  number={4},
  pages={2011--2053},
  year={2019},
  publisher={Wiley Online Library}
}

@article{cetemen2023renegotiation,
  title={Renegotiation and dynamic inconsistency: contracting with non-exponential discounting},
  author={Cetemen, Doruk and Feng, Felix Zhiyu and Urgun, Can},
  journal={Journal of Economic Theory},
  volume={208},
  pages={105606},
  year={2023},
  publisher={Elsevier}
}

@article{he2017optimal,
  title={Optimal long-term contracting with learning},
  author={He, Zhiguo and Wei, Bin and Yu, Jianfeng and Gao, Feng},
  journal={The Review of Financial Studies},
  volume={30},
  number={6},
  pages={2006--2065},
  year={2017},
  publisher={Oxford University Press}
}

@article{brsvb12,
  title={Optimal search for product information},
  author={Branco, Fernando and Sun, Monic and Villas-Boas, J Miguel},
  journal={Management Science},
  volume={58},
  number={11},
  pages={2037--2056},
  year={2012},
  publisher={INFORMS}
}

@article{fudenberg1990short,
  title={Short-term contracts and long-term agency relationships},
  author={Fudenberg, Drew and Holmstr\"{o}m, Bengt and Milgrom, Paul},
  journal={Journal of Economic Theory},
  volume={51},
  number={1},
  pages={1--31},
  year={1990},
  publisher={Elsevier}
}

@techreport{demarzo2013risking,
  title={Risking other people's money: Gambling, limited liability, and optimal incentives},
  author={DeMarzo, Peter M and Livdan, Dmitry and Tchistyi, Alexei},
  institution={Stanford GSB},
  year={2014},
  number=3149
}

@techreport{bloedel2023persistent,
  title={Persistent Private Information Revisited},
  author={Bloedel, Alexander and Krishna, R. Vijay and Strulovici, Bruno},
  institution={UCLA, FSU, and Northwestern University},
  year={2023}
}

@article{mcclellan2022experimentation,
  title={Experimentation and approval mechanisms},
  author={McClellan, Andrew},
  journal={Econometrica},
  year    = {2022},
  volume  = {90},
  number  = {5},
  pages   = {2215--2247},
  doi     = {10.3982/ECTA17021}

}

@article{mason2015getting,
  title={Getting it done: dynamic incentives to complete a project},
  author={Mason, Robin and V{\"a}lim{\"a}ki, Juuso},
  journal={Journal of the European Economic Association},
  volume={13},
  number={1},
  pages={62--97},
  year={2015},
  publisher={Oxford University Press}
}

@article{halac2016optimal,
  title={Optimal contracts for experimentation},
  author={Halac, Marina and Kartik, Navin and Liu, Qingmin},
  journal={The Review of Economic Studies},
  volume={83},
  number={3},
  pages={1040--1091},
  year={2016},
  publisher={Wiley-Blackwell}
}

@article{bergemann1998venture,
  title={Venture capital financing, moral hazard, and learning},
  author={Bergemann, Dirk and Hege, Ulrich},
  journal={Journal of Banking \& Finance},
  volume={22},
  number={6-8},
  pages={703--735},
  year={1998},
  publisher={Elsevier}
}

@article{hopenhayn1997optimal,
  title={Optimal Unemployment Insurance},
  author={Hopenhayn, Hugo A and Nicolini, Juan Pablo},
  journal={Journal of Political Economy},
  volume={105},
  number={2},
  pages={412--438},
  year={1997}
}

@article{shavell1979optimal,
  title={The Optimal Payment of Unemployment Insurance Benefits},
  author={Shavell, Steven and Weiss, Laurence},
  journal={Journal of Political Economy},
  volume={87},
  number={6},
  pages={1347--1362},
  year={1979}
}

@article{sannikov2008continuous,
  title={A continuous-time version of the principal-agent problem},
  author={Sannikov, Yuliy},
  journal={The Review of Economic Studies},
  volume={75},
  number={3},
  pages={957--984},
  year={2008},
  publisher={Wiley-Blackwell}
}

@article{madsen2022designing,
  title={Designing deadlines},
  author={Madsen, Erik},
  journal={American Economic Review},
  volume={112},
  number={3},
  pages={963--997},
  year={2022},
  publisher={American Economic Association 2014 Broadway, Suite 305, Nashville, TN 37203}
}

@article{he2011model,
  title={A model of dynamic compensation and capital structure},
  author={He, Zhiguo},
  journal={Journal of Financial Economics},
  volume={100},
  number={2},
  pages={351--366},
  year={2011},
  publisher={Elsevier}
}

@article{georgiadis2015projects,
  title={Projects and team dynamics},
  author={Georgiadis, George},
  journal={The Review of Economic Studies},
  volume={82},
  number={1},
  pages={187--218},
  year={2015},
  publisher={Oxford University Press}
}

@article{manso2011motivating,
  title={Motivating Innovation},
  author={Manso, Gustavo},
  journal={Journal of Finance},
  volume={66},
  pages={1823--1860},
  year={2011}
}

@article{bergemann2005financing,
  title={The financing of innovation: Learning and stopping},
  author={Bergemann, Dirk and Hege, Ulrich},
  journal={RAND Journal of Economics},
  volume={36},
  number={4},
  pages={719--752},
  year={2005},
  publisher={JSTOR}
}

@article{horner2013incentives,
  title={Incentives for experimenting agents},
  author={H{\"o}rner, Johannes and Samuelson, Larry},
  journal={The RAND Journal of Economics},
  volume={44},
  number={4},
  pages={632--663},
  year={2013},
  publisher={Wiley Online Library}
}

@article{guo2016dynamic,
  title={Dynamic delegation of experimentation},
  author={Guo, Yingni},
  journal={American Economic Review},
  volume={106},
  number={8},
  pages={1969--2008},
  year={2016}
}

@article{green2016breakthroughs,
  title={Breakthroughs, deadlines, and self-reported progress: Contracting for multistage projects},
  author={Green, Brett and Taylor, Curtis R},
  journal={American Economic Review},
  volume={106},
  number={12},
  pages={3660--99},
  year={2016}
}

@Book{teschl2012ordinary,
	author = {Teschl, Gerald},
	publisher = {American Mathematical Society Providence},
	title = {Ordinary Differential Equations and Dynamical Systems},
	volume = 140,
	year = 2012
}

@Article{demarzo2011learning,
	author = {DeMarzo, Peter and Sannikov, Yuliy},
	journal = {The Review of Economic Studies},
	title = {Learning, termination and payout policy in dynamic incentive contracts},
	volume = {84},
	number = {1},
	pages = {182--236},
	year = 2017
}

@techreport{thomas2021clinical,
  author      = {Thomas, David and Chancellor, Daniel and Micklus, Amanda
                 and LaFever, Sara and Hay, Michael and Chaudhuri, Shomesh
                 and Bowden, Robert and Lo, Andrew W.},
  title       = {Clinical Development Success Rates and Contributing Factors 2011--2020},
  institution = {Biotechnology Innovation Organization (BIO), QLS Advisors,
                 and Informa Pharma Intelligence},
  year        = {2021},
  month       = feb,
  url         = {https://go.bio.org/rs/490-EHZ-999/images/ClinicalDevelopmentSuccessRates2011_2020.pdf}
}

@article{feng2025setbacks,
  title={Setbacks Big and Small},
  author={Feng, Felix Zhiyu and Taylor, Curtis R and Westerfield, Mark M and Zhang, Feifan},
  journal={Available at SSRN 5880422},
  year={2025}
}

@article{hellwig2002discrete,
  title={Discrete-time approximations of the {H}olmstr\"{o}m-{M}ilgrom {B}rownian-motion model of intertemporal incentive provision},
  author={Hellwig, Martin F. and Schmidt, Klaus M.},
  journal={Econometrica},
  volume={70},
  number={6},
  pages={2225--2264},
  year={2002}
}

\end{document}